\documentclass[prl,twocolumn,floatfix,superscriptaddress,longbibliography, nofootinbib]{revtex4-2}
\usepackage{
    adjustbox,
    algorithm,
    algpseudocode,
    amsmath,
    amssymb,
    amsthm,
    booktabs,
    braket,
    comment,
    csquotes,
    enumitem,
    graphicx,
    IEEEtrantools,
    mathtools,
    multirow,
    natbib,
    physics,
    refcount,
    relsize,
    url,
    times,
    xcolor
}

\usepackage[english]{babel}
\usepackage[normalem]{ulem}
\usepackage[caption=false]{subfig}
\usepackage{pifont}
\usepackage{dsfont}

\definecolor{mainblue}{HTML}{1f77b4}
\definecolor{mainorange}{HTML}{ff7f0e}
\definecolor{maingreen}{HTML}{2ca02c}
\definecolor{mainred}{HTML}{DC3522}
\definecolor{mainpurple}{HTML}{9467bd}
\definecolor{mainpink}{HTML}{e377c2}

\usepackage[dvipsnames]{xcolor}
\usepackage[colorlinks=true, urlcolor=Maroon, linkcolor=Emerald, citecolor=Periwinkle]{hyperref}

\DeclareMathOperator*{\argmax}{argmax}

\newcommand{\pder}[2][]{\frac{\partial#1}{\partial#2}}

\providecommand{\hC}{\ensuremath{\hat{C}}}

\providecommand{\hE}{\ensuremath{\hat{E}}}

\providecommand{\hH}{\ensuremath{\hat{H}}}

\providecommand{\hvartheta}{\ensuremath{\hat{\vartheta}}}

\providecommand{\hcalL}{\ensuremath{\hat{\mathcal{L}}}}

\providecommand{\tC}{\ensuremath{\Tilde{C}}}

\providecommand{\tZ}{\ensuremath{\Tilde{Z}}}

\providecommand{\tg}{\ensuremath{\Tilde{g}}}

\providecommand{\to}{\ensuremath{\Tilde{o}}}

\providecommand{\calD}{\ensuremath{\mathcal{D}}}

\providecommand{\calF}{\ensuremath{\mathcal{F}}}

\providecommand{\calL}{\ensuremath{\mathcal{L}}}

\providecommand{\calN}{\ensuremath{\mathcal{N}}}
\providecommand{\calO}{\ensuremath{\mathcal{O}}}
\providecommand{\calP}{\ensuremath{\mathcal{P}}}

\providecommand{\calX}{\ensuremath{\mathcal{X}}}
\providecommand{\calY}{\ensuremath{\mathcal{Y}}}
\providecommand{\calZ}{\ensuremath{\mathcal{Z}}}

\DeclareMathOperator*{\bbE}{\ensuremath{\mathbb{E}}}

\providecommand{\bbI}{\ensuremath{\mathbb{I}}}

\DeclareMathOperator*{\bbP}{\ensuremath{\mathbb{P}}}

\providecommand{\bbR}{\ensuremath{\mathbb{R}}}

\newtheorem{theorem}{Theorem}
\newtheorem{lemma}[theorem]{Lemma}

\newtheorem{definition}[theorem]{Definition}
\newtheorem{corollary}[theorem]{Corollary}
\newtheorem{conjecture}[theorem]{Conjecture}
\newtheorem{assumption}{Assumption}

\newenvironment{remark}[1][Remark]{\begin{trivlist}
\item[\hskip \labelsep {\bfseries #1}]}{\end{trivlist}}

\newtheorem{innercustomgeneric}{\customgenericname}
\providecommand{\customgenericname}{}
\newcommand{\newcustomtheorem}[2]{%
  \newenvironment{#1}[1]
  {%
   \renewcommand\customgenericname{#2}%
   \renewcommand\theinnercustomgeneric{##1}%
   \innercustomgeneric
  }
  {\endinnercustomgeneric}
}

\newcustomtheorem{customtheorem}{Theorem}
\newcustomtheorem{customlemma}{Lemma}
\newcustomtheorem{customproposition}{Proposition}
\newcustomtheorem{customassumption}{Assumption}

\newcommand{\fu}{Dahlem Center for Complex Quantum Systems, Freie Universit\"{a}t Berlin, 14195 Berlin, Germany}

\newcommand{\hhi}{Fraunhofer Heinrich Hertz Institute, 10587 Berlin, Germany}
\newcommand{\leiden}{Leiden University, Niels Bohrweg 1, 2333 CA Leiden, Netherlands}
\newcommand{\vw}{Volkswagen Group Innovation, Berliner Ring 2, 38440 Wolfsburg, Germany}
\newcommand{\physwaterloo}{Department of Physics and Astronomy, University of Waterloo, ON N2L 3G1, Canada}
\newcommand{\vinst}{Vector Institute, Toronto, ON M5G 0C6, Canada}
\newcommand{\porsche}{Porsche Digital GmbH, 71636 Ludwigsburg, Germany}

\begin{document}

\title{Cautious optimism for deep parameterized quantum circuits}

\author{Marie Kempkes}
\email{marie.kempkes@volkswagen.de}
\affiliation{\leiden}
\affiliation{\vw}
\author{Elies Gil-Fuster}
\affiliation{\fu}
\affiliation{\hhi}
\author{Carlos Bravo-Prieto}
\affiliation{\fu}
\author{Aroosa Ijaz}
\affiliation{\physwaterloo}
\affiliation{\vinst}
\author{Alissa Wilms}
\affiliation{\fu}
\affiliation{\porsche}
\author{Jens Eisert}
\affiliation{\fu}
\author{Evert van Nieuwenburg}
\affiliation{\leiden}
\author{Vedran Dunjko}
\affiliation{\leiden}

\begin{abstract}
    A central challenge in quantum machine learning is understanding the scaling behavior of \emph{parameterized quantum circuits} (PQCs).
    In particular, it remains unclear how their performance on unseen data changes as the number of trainable parameters increases.
    Prior works have derived formal generalization guarantees for quantum models, but it is well-known that many such results do not fully characterize generalization behavior in practice.
    In this work, we show that gradient-based PQCs can exhibit improved performance on unseen data as model size increases, displaying the phenomenon of double descent.
    This contrasts with the traditional view that larger models lead to degraded generalization.
    We provide analytical results rigorously underpinning this behavior by leveraging add-one-in perturbation techniques and spectral properties of random matrices.
    We support these results with numerical experiments on re-uploading PQCs across several data sets and training set sizes, consistently observing the predicted double descent behavior.
    While other obstacles on the path toward practical quantum machine learning remain, our finding that deeper parameterized quantum circuits do not necessarily exhibit degraded performance provides reasons for cautious optimism.
\end{abstract}

\maketitle

    \emph{Quantum machine learning} (QML) has emerged as a promising paradigm at the intersection of quantum computing and data-driven learning, with the potential to outperform classical methods for specific learning tasks.
    Indeed, a growing body of work has identified concrete settings in which quantum models can offer computational advantages over classical counterparts~\cite{QuantumFeatureEmbedding2,PACLearning,DensityModelling,Molteni,VedranExponential,RevModPhys.91.045002,biamonte2017quantum,masot2025prospects,barthe2025quantum,bandyopadhyay2026provable,MindTheGaps}.
    Despite this substantial progress, many fundamental aspects of QML remain poorly understood. In particular, it is still unclear whether, and under what precise conditions, quantum advantage can be expected, especially as models are scaled to larger numbers of trainable parameters.
    While several limitations to the trainability of \emph{parameterized quantum circuits}  (PQCs) are now well characterized~\cite{BarrenPlateaus,cerezo2021cost,NoiseInduced,AnschuetzTraps,mele2026noise}, considerable effort has also been devoted to understanding their generalization behavior through bounds on the difference between empirical and expected risk.
    These bounds depend, for example, on the training set size, the number of trainable gates, and the data-encoding strategy~\cite{caro2021encoding,abbas2021power,Banchi2021,caro2022generalization,du2022efficient,UnderstandingQML,rodriguez2026pac, peters2023generalization}.
    However, they do not directly address how the expected risk attained by a trained model changes as the number of parameters crosses the so-called \emph{interpolation threshold}, namely, the critical point at which the number of parameters becomes comparable to the number of training points.   

    In classical machine learning, studying the behavior of models across this threshold has led to substantial insights through the so-called \emph{double descent} phenomenon~\cite{belkin2019reconciling}.
    Contrary to traditional expectations from statistical learning theory, double descent describes a non-monotonic relationship between test 
    error and model size.
    As the number of parameters increases, the test error initially follows a classical U-shaped curve, peaking near the interpolation threshold.
    Remarkably, further overparameterization often leads to a second descent phase, during which the test error decreases again and eventually saturates.
    This behavior marks a transition from the underparameterized to the overparameterized regime and has been observed across a wide range of classical models and learning tasks~\cite{belkin2019reconciling,nakkiran2021deep,hastie2022surprises,DoubleDescent2}.

    In the realm of quantum machine learning, however, analogous scaling behavior has so far been much less explored.
    Ref.~\cite{kempkes2025double} has shown that quantum kernel methods can, in principle, exhibit a similar scaling behavior in the overparameterized regime, assuming trainability can be maintained.
    More recently, Ref.~\cite{pranjic2026grokking} has reported non-monotonic test-error dynamics during the training of overparameterized PQCs.
    This naturally raises the question of whether PQCs trained with gradient-based methods also exhibit double descent as a function of the number of trainable parameters.
    To date, some evidence has pointed toward a standard U-shaped risk curve for PQCs, suggesting that overparameterization may not yield a second descent~\cite{du2023problem}. 
    This conclusion is obtained for circuits in which concentration effects dominate, a regime often associated with poor trainability.

    In this work, we instead focus on PQCs operating in a trainable regime and study their generalization properties as the number of trainable parameters increases.
    Building on an existing lower bound on the expected risk, we show that, under reasonable assumptions, the bound attains a maximum at the interpolation threshold.
    We further derive a corresponding upper bound and show that it likewise attains a maximum at interpolation.
    Together, these results establish a double descent behavior in the risk bounds of PQCs analogous to that observed in classical machine learning.
    We further support our theoretical findings with numerical experiments on a variety of datasets.
    Overall, our results provide new insight into the scaling behavior of gradient-based quantum models and suggest that overparameterization may play a role in PQCs similar to that observed in classical machine learning, offering cautious optimism for the prospects of quantum machine learning at scale.

\section{Preliminaries}

    We consider the ubiquitous setting of supervised learning with quantum learning models based on \emph{parametrized quantum circuits} (PQCs).
    Refer to Appendix~\ref{a:framework} for a fully-formal presentation of our framework.
    We consider a learning task on $d$-dimensional inputs $x$ and $K$-dimensional labels $y$, and assume we are given a training set $S=\{(x_i, y_i)\}_{i=1}^N$ of $N$ input-output pairs, sampled i.i.d.\ from the underlying distribution that defines the problem.
    
    Consider \emph{data re-uploading} PQCs~\cite{perez2020data}, where data-dependent layers $U_\ell(x)$ and trainable layers $V_\ell(\vartheta)$ are alternated, for $\ell\in\{1,\ldots,L\}$.
    Here, $\vartheta$ is a $p$-dimensional vector of trainable parameters.
    Starting from an initial state, the data re-uploading PQC prepares a parametrized state $\rho(x;\vartheta)$ that depends on both the data $x$ and the parameters $\vartheta$.
    To fully specify our model, we fix observables $O_k$ for $k\in\{1,\ldots,K\}$, from which we define the hypothesis functions.
    Specifically, the $k^\text{th}$ output of the hypothesis function is given by the expectation value of $O_k$ with respect to the parametrized state $\rho(x;\vartheta)$
    \begin{align}
        [f_\vartheta(x)]_k = \Tr\left\{\rho(x;\vartheta)O_k\right\}.
    \end{align}

    Given a loss function $\ell((x,y);\vartheta)$, our goal is to minimize the \emph{expected risk}, defined as the average loss over the problem distribution
    \begin{align}
    \calL(\vartheta)=\bbE[\ell((x,y);\vartheta)].
    \end{align}
    In practice, we assume the underlying distribution is unknown, and we explicitly optimize the \emph{empirical risk} -- the average loss over the training set 
    \begin{align}
        \hcalL_S(\vartheta) = \frac{1}{N}\sum_{i=1}^N \ell((x_i,y_i);\vartheta),
    \end{align}
    where $(x_i,y_i)$ are the elements of the training set $S$.
    In this work, we focus on the mean-squared error loss $\ell((x,y);\vartheta) = \frac{1}{2}\lVert f_\vartheta(x)-y\rVert^2$.

    We next follow the formalism in Ref.~\cite{singh2022phenomenology}, with further technical details being deferred to Appendix~\ref{a:expected_risk_local_minima}.
    Our study is specialized to the final parameters reached during training, which we denote by $\hvartheta_S$.
    We make the explicit assumption that $\hvartheta_S$ is a \emph{local minimum of the empirical risk}, as is common in gradient-based QML.
    Under this assumption, $\hvartheta_S$ satisfies
    \begin{align}
    \nabla_\vartheta \hcalL_S(\hvartheta_S) = 0
    \qquad\text{and}\qquad
    \hH_S(\hvartheta_S) \succeq 0,
    \end{align}
    where $\hH_S(\vartheta)\coloneqq \nabla_\vartheta^2\hcalL_S(\vartheta)$ is the Hessian matrix of the empirical risk.

    We now derive an approximation of the expected risk via a so-called \emph{add-one-in} analysis.
    Specifically, we consider inserting an additional pair $(x',y')$ to the training set $S$.
    Denote by $\hvartheta_{S'}$ the final training parameters, where $S'$ is the union of $S$ and the new datum.
    The value of the loss $\ell((x',y');\hvartheta_{S'})$ will depend on the new datum.
    With these, define the so-called \emph{add-one-in loss} $\calL'(S)$ as
    \begin{align}
    \calL'(S) \coloneqq  \mathbb{E}\left[\ell\left((x',y');\hat{\vartheta}_{S'}\right)\right],
    \end{align}
    where the average is over $(x',y')$ being drawn from the problem distribution.
    While not immediately obvious, the quantity $\calL'(S)$ plays a central role in deriving an approximation of the expected risk.
    The key idea is that the transition from $S$ to $S'$ constitutes a small perturbation of the training set, allowing the parameter displacement $\hvartheta_{S'}-\hvartheta_S$ and the associated change in loss at $(x', y')$ to be approximated via so-called influence functions~\cite{hampel1986robust}.
    Averaging over the choice of the additional sample $(x',y')$ then yields an approximation of the expected risk that is accurate up to corrections of order $\mathcal O(N^{-2})$.

    To state the resulting expression, we introduce the gradients of the loss function $J_\ell((x,y);\vartheta)=\nabla_\vartheta\ell((x,y);\vartheta)$ and the gradients of the hypothesis functions $J_f((x,y);\vartheta)=\nabla_\vartheta f_\vartheta(x)$.
    Denote by
    \begin{align}
    C(\vartheta) \coloneqq \mathbb{E} \left[ J_\ell((x,y);\vartheta) J_\ell((x,y);\vartheta)^\intercal \right]
    \end{align}
    the average of the outer product of the loss gradients over the problem distribution.
    Analogously, $C_f(\vartheta)$ is given by the average of the outer product of the hypothesis gradients
    \begin{align}
        C_f(\vartheta) \coloneqq     \mathbb{E}    \left[J_f((x,y);\vartheta)J_f((x,y);\vartheta)^\intercal \right].
    \end{align}
    As a technical comment, we condition the average in $C_f(\vartheta)$ to be only over samples $(x,y)$ for which $\lVert J_f((x,y);\vartheta)\rVert^2$ is greater than zero.
    Note that both $C(\vartheta)$ and $C_f(\vartheta)$ are uncentered covariance matrices.

    With these quantities, Theorem 3 in Ref.~\cite{singh2022phenomenology} provides an error decomposition for the expected risk 
    \begin{align}
        \calL(\hvartheta_S) = \calL'(S) +& \frac{1}{N+1}\Tr\left\{\left[\hH_S(\hvartheta_S)\right]^{-1}C(\hvartheta_S)\right\}
    \end{align}
    evaluated at the local minimum $\hvartheta_S$,
    where the inverse of the Hessian is understood as the Moore--Penrose pseudoinverse. The decomposition holds up to second- and higher-order corrections $\calO\left(N^{-2}\right)$.
    Building on this decomposition, Theorem 4 in Ref.~\cite{singh2022phenomenology} derives a lower bound on the expected risk in the scalar-output case ($K=1$) by rewriting the trace term and applying standard eigenvalue inequalities, yielding
    \begin{align}\label{eq:lower_bound}
        \calL(\hvartheta_S) &\geq\calL'(S) + \frac{1}{N+1}\frac{\alpha\,\xi^2_{\min}\lambda_{\min}(C_f(\hvartheta_S))}{\lambda_{\min}(\hH_S(\hvartheta_S))}.
    \end{align}
    Here, $\xi^2_{\min}$ is the smallest non-zero value of $\lVert J_f((x,y);\vartheta)\rVert^2$, and the theorem assumes that data can be found for which $\lVert J_f((x,y);\vartheta)\rVert^2>0$ with probability $\alpha$ over the problem distribution.
    Further, $\lambda_{\min}$ denotes the smallest non-zero eigenvalue of a matrix.
    We note that the lower bound scales inversely with the smallest non-zero eigenvalue of the Hessian matrix of the empirical risk at local minima, $\lambda_{\min}(\hH_S(\hvartheta_S))$.
    The authors of Ref.~\cite{singh2022phenomenology} then carefully investigate the smallest non-zero eigenvalues of the matrices arising in classical neural networks.
    At this point, we depart from their formalism, and rather specialize to usual gradient-based PQCs.

\section{Double descent behavior in PQCs}

    \begin{figure*}[t]
        \centering
        \includegraphics[width=0.97\textwidth]{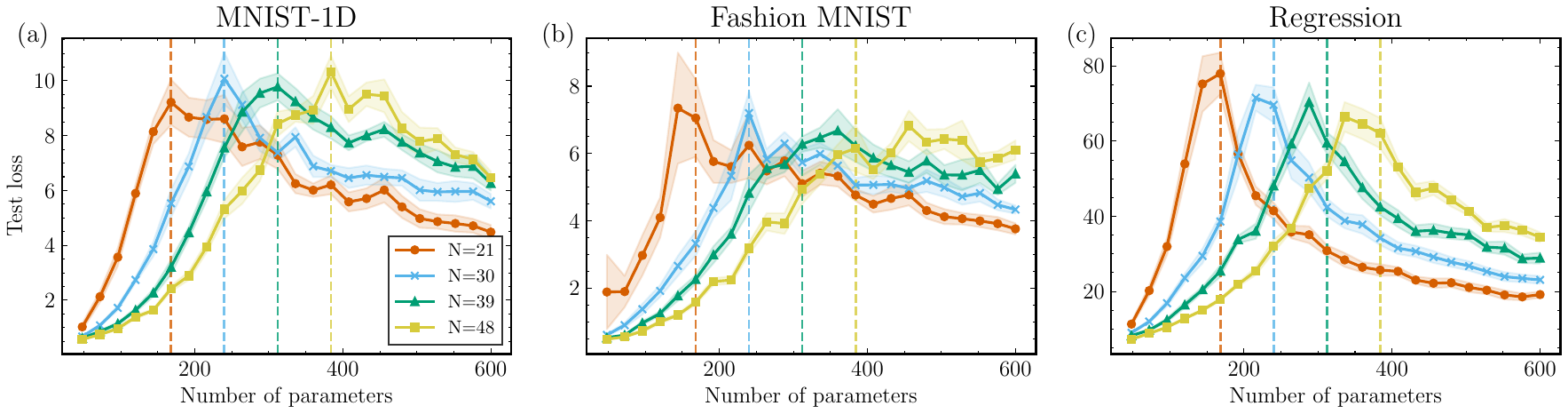} 
        \caption{
             Test loss as a function of the number of parameters $p$ for (a) MNIST-1D and (b) Fashion MNIST classification, and (c) a multidimensional synthetic regression task, for different training set sizes $N$. Vertical dashed lines indicate the predicted interpolation thresholds $p=NK$. In all three tasks, the test loss peaks close to interpolation and decreases again in the overparameterized regime, displaying a characteristic double descent profile.
            The shaded areas correspond to the standard deviation for ten independent experiment repetitions, each using independently sampled
            training data.
        }
        \label{fig:dd_main}
    \end{figure*}

    Our goal is to characterize the behavior of the lower bound on the expected risk in Eq.~\eqref{eq:lower_bound} as we vary the number of trainable parameters in our model $p$, as well as the number of training data $N$.
    In particular, we consider the asymptotic limit where $p,N\to\infty$ with fixed ratio $p/N\to\gamma$.
    The analysis relies on a series of formal assumptions, listed in Appendix~\ref{a:dd_behavior_in_PQCs} together with all further technical details.
    These assumptions are meant to capture the behavior of QML models in practice, when undergoing gradient-based training, and they mostly concern statements about the spectral properties of matrices related to the Hessian and the covariance matrices introduced above.
    We expect our assumptions to be largely ansatz-agnostic and to hold for a broad class of trainable PQCs, including data re-uploading architectures.
    We further test a selection of these assumptions numerically in Appendix~\ref{a:evidence}.
    
    We can now state our main result, which holds in the simpler case of learning tasks with a single output dimension $K=1$.
    
    \begin{theorem}[Double descent peak at interpolation -- informal]
        With the definitions above, and under reasonable assumptions, for Gaussian-distributed input data $x\sim\calN(0,\bbI_d)$, outputs $y\in\bbR$, and considering the mean-squared error loss, the lower bound on $\calL(\hvartheta_S)$ attains a maximum at $p=N$ with high probability in the limit $p,N\to\infty$ with fixed ratio $p/N\to\gamma$.
    \end{theorem}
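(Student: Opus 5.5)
Taking the lower bound in Eq.~\eqref{eq:lower_bound} as given, the plan is to track how each factor depends on $\gamma=p/N$ in the proportional limit. The term $\calL'(S)$ and the prefactors $\alpha$, $\xi^2_{\min}$ and $\lambda_{\min}(C_f(\hvartheta_S))$ should stay bounded away from zero and infinity, uniformly in $\gamma$; this is the content of the spectral assumptions in Appendix~\ref{a:dd_behavior_in_PQCs}. All of the non-monotonic dependence on $\gamma$ is then carried by $1/\lambda_{\min}(\hH_S(\hvartheta_S))$. The theorem therefore reduces to showing that the smallest non-zero eigenvalue of the empirical Hessian at the local minimum tends to zero at $\gamma=1$ and stays bounded away from zero for $\gamma\neq1$.

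For the mean-squared error with $K=1$, I split the Hessian into the Gauss--Newton part and a residual part:
\begin{align}
\hH_S(\vartheta)=\frac{1}{N}\sum_{i=1}^N J_f(x_i;\vartheta)J_f(x_i;\vartheta)^\intercal+\frac{1}{N}\sum_{i=1}^N\bigl(f_\vartheta(x_i)-y_i\bigr)\nabla^2_\vartheta f_\vartheta(x_i).
\end{align}
Writing the first term as $\frac1N \Phi^\intercal\Phi$, with $\Phi\in\bbR^{N\times p}$ the matrix whose rows are the hypothesis gradients, it is a sample covariance (Gram) matrix. Its non-zero spectrum coincides with that of $\frac1N\Phi\Phi^\intercal$, and it has rank $\min(p,N)$.

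The key assumption is that the gradient features $J_f(x_i;\hvartheta_S)$, for Gaussian inputs $x\sim\calN(0,\bbI_d)$, behave like i.i.d.\ vectors with a well-conditioned covariance close to $C_f$. This requires a concentration condition and that the feature dimension grows with $p$ in a trainable, non-concentrated regime. Under it, the empirical spectral distribution of $\frac1N\Phi^\intercal\Phi$ converges to a (generalized) Marchenko--Pastur law. The smallest non-zero eigenvalue then converges almost surely to the left edge, of the form $\sigma^2(1-\sqrt{\gamma})^2$ or $\sigma^2(1-1/\sqrt{\gamma})^2$ depending on the side of the threshold. In the isotropic case I invoke Bai--Yin; for a general covariance I invoke the no-eigenvalues-outside-the-support results of Bai--Silverstein. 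The left edge is strictly positive for $\gamma\neq1$ and vanishes exactly at $\gamma=1$, which is the hard-edge case. So the Gauss--Newton contribution to $1/\lambda_{\min}$ diverges only at $p=N$.

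I expect the residual term to be the main obstacle. Since $\hvartheta_S$ is a local minimum, the residuals are small in the overparameterized (interpolating) regime, so there the residual term vanishes. In the underparameterized regime they are not small, and I would need an operator-norm bound on this term, via a Weyl-type inequality, together with the assumption that it does not close the spectral gap. Two further difficulties need care. First, because we use the smallest \emph{non-zero} eigenvalue, the perturbation must not create tiny non-zero eigenvalues inside the kernel of the Gauss--Newton part; I would handle this by restricting to its range and using an assumption that the residual Hessian is small, or aligned with that range. Second, at exactly $p=N$ the minimal eigenvalue scales like $N^{-2}$ rather than zero, so the ``maximum'' should be read as the finite-$N$ lower bound being maximized near $\gamma=1$, with high probability. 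Combining the three steps, the lower bound is finite and monotone on each side of $\gamma=1$ and peaks at $p=N$, which is the double descent claim.
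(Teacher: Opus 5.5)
\textbf{Verdict: genuine gap.} Your overall structure follows the paper, but the random-matrix step is applied to vectors that are not independent, and it cites theorems whose hypotheses PQC gradients do not meet.

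The parts that match are these. You start from the lower bound and split $\hH_S$ into the Gauss--Newton part $\hC_S^f$ and the residual (functional) Hessian. You place the peak at the vanishing Mar\v{c}enko--Pastur edge $(1-\sqrt{\gamma})^2$ and kill the residual term for $p\geq N$. You handle the smallest \emph{non-zero} eigenvalue with a Weyl-type assumption, which is the paper's Assumption~\ref{A6}.

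The gap is in how you apply random matrix theory. You apply it to the columns $J_f(x_i;\hvartheta_S)$. But $\hvartheta_S$ is a function of the whole training set, so these vectors are coupled through it and are not independent. Postulating that they ``behave like i.i.d.\ vectors'' assumes exactly what needs an argument. It also means the Gaussian-input hypothesis is never actually used to derive anything.

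The missing idea is to do the random matrix theory at the initialization $\vartheta_0$. Since $\vartheta_0$ is independent of $S$, the vectors $g_i=\nabla_\vartheta f_{\vartheta_0}(x_i)$ are genuinely i.i.d. The result is then transferred to the trained point via the spectral-stability Assumption~\ref{A4}: $A\,\lambda_{\min}(\hC_S^f(\vartheta_0))\le\lambda_{\min}(\hC_S^f(\hvartheta_S))\le B\,\lambda_{\min}(\hC_S^f(\vartheta_0))$, and likewise for $C_f$.

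Your choice of theorems also does not fit. Bai--Yin and Bai--Silverstein need a matrix with independent entries, possibly premultiplied by $T^{1/2}$. PQC gradient vectors have dependent coordinates, even after whitening. The paper proceeds as follows:
\begin{itemize}
\item It whitens with $C_f(\vartheta_0)^{-1/2}$, which is well-conditioned by Assumption~\ref{A5}.
\item It gets concentrated vectors from Gaussian concentration of $x$ plus Lipschitz continuity of $x\mapsto\nabla_\vartheta f_{\vartheta_0}(x)$ (via the parameter-shift rule).
\item From this it obtains concentration of quadratic forms and invokes the MP theorem for isotropic columns (Lemma~\ref{l:yaskov}).
\item It pulls the vanishing edge back through $\sigma_{\min}(\tZ_S)\,\sigma_{\min}(C_f(\vartheta_0)^{1/2})\le\sigma_{\min}(Z_S)\le\sigma_{\min}(\tZ_S)\,\sigma_{\max}(C_f(\vartheta_0)^{1/2})$, with all matrices evaluated at $\vartheta_0$.
\end{itemize}
Your instinct that an edge result, not just weak ESD convergence, is what controls $\lambda_{\min}$ is sound. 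But that edge result must hold for concentrated vectors with dependent coordinates.

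\textbf{Smaller points.}

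Overparameterized regime: local minimality alone does not give interpolation, since PQC landscapes are known to have spurious local minima. What forces $\hH_S^f(\hvartheta_S)=0$ exactly is stationarity, $\nabla_\vartheta\hcalL_S(\hvartheta_S)=\frac{1}{N}Z_S(\hvartheta_S)\hE_S=0$, combined with full rank $N$ of $Z_S(\hvartheta_S)$ (Assumption~\ref{A3}). Together these give $\hE_S=0$. You already assert that the Gauss--Newton part has rank $\min(p,N)$, so you only need to connect the two.

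Underparameterized regime: ``the residual does not close the gap'' gives boundedness away from zero. It does not give the monotone increase toward $p=N$ that you claim at the end. For the lower bound, the paper uses only the upper half of the Weyl-type inequality, $\lambda_{\min}(\hH_S)\le\lambda_{\min}(\hC_S^f)+\lambda_{\max}(\hH_S^f)$. It pairs this with the assumption that $\lambda_{\max}(\hH_S^f(\hvartheta_S))$ decreases in $p$ near the threshold (Assumption~\ref{A2}).

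Your two-sided reduction actually asks for more than the paper establishes: the paper only controls a lower bound of the lower bound. Your remark that $\lambda_{\min}\sim N^{-2}$ at exactly $p=N$ is a correct refinement that the paper omits.

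Finally, a normalization slip: with the $1/N$ normalization, the non-zero left edge is $\sigma^2(1-\sqrt{\gamma})^2$ on both sides of the threshold. The form $(1-1/\sqrt{\gamma})^2$ corresponds to normalizing by $p$.
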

    The formal version of this result is Theorem~\ref{thm:peakatinterpolation}, which can be found in Appendix~\ref{a:dd_behavior_in_PQCs}.
    The proof can be found in Appendix~\ref{a:proofofmaintheorem}.
    The main idea for the proof is to carefully control the spectral properties of several random matrices, which we achieve by combining our assumptions with seminal results in random matrix theory.
    Ultimately, we argue that the smallest non-zero eigenvalue of the dominant contribution to the Hessian matrix follows the behavior predicted by the so-called Mar\v{c}enko-Pastur law in the stated limit.
    Our analysis is inspired by the framework of Ref.~\cite{singh2022phenomenology}, with suitable modifications to the underlying assumptions.

    The assumption of i.i.d.\  Gaussian inputs is standard in high-dimensional random matrix analyses and has been extensively studied in classical literature~\cite{seddik2020random, couillet2022random}.
    The earlier work on double descent in QML~\cite{kempkes2025double} already argues that such assumptions often extend beyond strictly Gaussian data.
    Notably, the datasets considered in our numerical experiments in Section~\ref{s:empiricalevidence} are non-Gaussian, suggesting that the predicted behavior may extend beyond the assumptions of the theorem.
    
    Theorem~\ref{thm:peakatinterpolation} is restricted to the scalar-output case $K=1$.
    Nevertheless, the interpolation threshold in the proof arises from the spectral properties of a matrix of size $p\times NK$ in the general case, suggesting that the relevant transition should more generally occur when the number of parameters matches the number of scalar training constraints $NK$.
    We therefore conjecture that, under the assumptions of Theorem~\ref{thm:peakatinterpolation}, the lower bound on the expected risk exhibits a double descent peak at $p=NK$ also for $K>1$.
    Our numerical experiments provide evidence supporting this conjecture.

    Finally, in Appendix~\ref{app:upper} we provide further results concerning \emph{upper bounds} to the expected risk.
    Specifically, in Theorem~\ref{thm:upper_bound_K1} we give the upper bound
    \begin{align}
        \calL(\hvartheta_S) &\leq\calL'(S) + \frac{1}{N+1}\frac{\alpha\,\xi^2_{\max}\,R\,\lambda_{\max}(C_f(\hvartheta_S))}{\lambda_{\min}(\hH_S(\hvartheta_S))}.
    \end{align}
    Similarly as in Eq.~\eqref{eq:lower_bound}, this bound holds up to second-order corrections $\calO(N^{-2})$, and is dominated by spectral properties of the Hessian matrix $\hH_S(\hvartheta_S)$ and the covariance matrix $C_f(\hvartheta_S)$.
    Here $\xi_{\max}^2$ is defined analogously as before, and $R$ is the rank of the Hessian matrix, i.e., the number of non-zero eigenvalues.
    Under a slightly different set of assumptions, we show in Theorem~\ref{thm:upper_bound_peak} that this upper bound also peaks at interpolation $p=N$, with high probability in the same limit as above.

\section{Empirical evidence of double descent in PQCs}\label{s:empiricalevidence}

    The theoretical results presented in the previous section hold in the asymptotic limit and rely on assumptions about the local geometry of the loss landscape at local minima as well as the data being Gaussian.
    In this section, we provide numerical evidence that gradient-based data re-uploading PQCs display the predicted double descent behavior. 
    In particular, we show that, when training converges, the test loss, used here as a proxy for the expected risk, exhibits a pronounced peak near the interpolation threshold $p = NK$, where $p$ is the number of trainable parameters, $N$ is the training set size, and $K$ is the output dimension.
    
    We consider re-uploading PQCs where each input $x$ is encoded through angle embeddings on $n=8$ qubits, followed by trainable single-qubit rotations and nearest-neighbor $CZ$ gates.
    The model output is obtained by measuring Pauli-$X$ expectation values on $K$ qubits, giving $f_\theta(x)\in\mathbb{R}^K$.
    Before evaluating the loss, we rescale these expectation-value outputs by a fixed factor $c=150$.
    Empirically, we find that such a rescaling substantially improves optimization and makes convergence to local minima significantly more reliable.
    
    We perform numerical experiments on three datasets, shown in Fig.~\ref{fig:dd_main}.
    The first is the MNIST-1D dataset~\cite{Greydanus2020ScalingDD}, used as a multiclass classification task with $K=8$ classes.
    The second is the Fashion MNIST dataset~\cite{xiao2017fashion}, also restricted to $K=8$ classes.
    Finally, the third dataset is a synthetic multidimensional regression task with input dimension $d=8$ and output dimension $K=8$, generated from a noisy linear target function.
    For each dataset, we vary the number of training samples $N \in \{21,30,39,48\}$ and vary the circuit depth, thereby scanning across the underparameterized and overparameterized regimes.
    Further implementation details are provided in Appendix~\ref{app:numerics}.
    The vertical dashed lines in Fig.~\ref{fig:dd_main} mark the corresponding interpolation thresholds $p=NK$.
    
    These results provide empirical evidence that double descent appears in trainable PQCs optimized by gradient descent.
    Importantly, the second descent occurs in regimes where the number of trainable parameters exceeds the number of training samples, showing that increasing the size of the PQC further does not necessarily degrade performance on unseen data, contrary to the traditional expectation that larger models generalize worse.
    Instead, once the model passes the interpolation threshold and training succeeds, additional parameters can improve generalization.
    Note, however, that this does not imply that overparameterized models outperform underparameterized ones.
    Rather, it supports the main conclusion of our analysis: overparameterization can play a constructive role in gradient-based quantum machine learning.  
    We provide further details in Appendix~\ref{a:evidence} and some perspectives in
    Appendix~\ref{app:universality}.
    In particular, we show numerical results systematically testing the formal assumptions we introduce for Theorem~\ref{thm:peakatinterpolation}.

\section{Discussion}\label{s:discussion}

    In this work, we investigate the scalability of gradient-based \emph{parameterized quantum circuits} (PQCs) by analyzing how their expected risk evolves across different parameterization regimes.
    By combining an add-one-in perturbation analysis with spectral arguments from random matrix theory, and building on an existing lower bound on the expected risk, we show that the bound attains a maximum at the interpolation threshold.
    We further derive a corresponding upper bound and show that it exhibits the same qualitative dependence on model size.
    Our analysis therefore establishes that, under mild assumptions, the risk bounds of a broad class of PQCs exhibit double descent behavior.
    We confirm these findings through comprehensive numerical experiments with data re-uploading PQCs on both classification and regression tasks.
    
    Our work offers a complementary perspective to the growing literature on scalable QML under the slogan \enquote{train classical, deploy quantum}~\cite{huang2026spectral,recioarmengol2026trainclassical,kurkin2025universality,banks2026quditextension} and extends previous findings of double descent in quantum kernel methods~\cite{kempkes2025double}.
    The presented results indicate that overparameterized regimes may be compatible with favorable learning behavior~\cite{belkin2019reconciling,dascoli202double,dar2021farewell}, providing grounds for cautious optimism about the prospects of deep PQCs.

    The reason why we advocate for \emph{cautious} optimism is twofold.
    First, our analysis is restricted to trainable PQCs.
    Ref.~\cite{du2023problem} already pointed out that, in concentration-dominated regimes where trainability breaks down, larger models may not exhibit favorable generalization behavior.
    Our results should therefore be understood as a statement about what may be possible if trainability can be maintained as the number of parameters increases.
    We regard trainability as a prerequisite for practically relevant quantum machine learning and therefore focus on the generalization properties of models operating in such a regime.
    In this sense, we do not solve well-known trainability problems of QML~\cite{BarrenPlateaus,AnschuetzTraps,anschuetz2025unified, thanasilp2024exponentialconcentrationquantumkernel}.
    At the same time, we expect the assumptions underlying our analysis to hold for a broad class of trainable PQCs.
    Consequently, the qualitative phenomenon identified here may remain relevant for future trainable quantum models of practical relevance.

    The second reason for caution is that our results do not imply that overparameterized PQCs outperform underparameterized ones.
    Rather, they show that increasing the number of parameters beyond the interpolation threshold need not be detrimental to generalization, contrary to the traditional expectation from statistical learning theory that increasing model size ultimately leads to poorer generalization~\cite{vapnik2015uniform,Vapnik1999SLT,valiant1972learnable,ShalevShwartz2014understanding}.
    Determining when overparameterization leads to overall improvements in expected risk, and under which conditions such improvements can be realized in practice, remain important research directions.

    Another noteworthy observation from our numerical experiments is that rescaling the PQC outputs by a large constant improves convergence to local minima.
    One possible explanation is that expectation values are typically concentrated around zero, so that rescaling increases the density of low-training-loss solutions in parameter space by shifting the target labels away from the boundaries of the output interval.
    Understanding this effect theoretically is likely of independent interest and is left for future work.

    Overall, this work provides a framework for understanding how the generalization performance of parameterized quantum circuits evolves under scaling. We expect these results to inform future studies of learning, generalization, and overparameterization in increasingly large quantum models, and ultimately provide cautious optimism for deep parameterized quantum circuits.

\paragraph*{Code and data availability.}
The code and data used in this study are available in GitHub~\cite{github_repository}.

\paragraph*{Acknowledgments.} 
The authors would like to thank Evan Peters and Pablo Rodriguez-Grasa for insightful comments on an earlier version of this manuscript.
E.G.-F., C.B.-P.\, A.W.\ and J.E.\  acknowledge the BMFTR (MUNIQC-Atoms, HYBRID, QuSol, Hybrid++), the DFG (CRC 183 and SPP 2514), 
Bifold, 
the Quantum Flagship (Millenion, PasQuanS2), the Munich Quantum Valley, Berlin Quantum and the European Research Council (DebuQC) for financial support. E.G.-F. further acknowledges support from a 2023 Google PhD Fellowship.
V.D.\ acknowledges support from the European Union’s Horizon Europe program through the ERC CoG BeMAIQuantum (Grant No. 101124342). V.D. and E.vN acknowledge the Dutch National Growth Fund (NGF) as part of the Quantum Delta NL programme.
GPT‑5.6 was used to assist with code development and mathematical consistency checks during the preparation of this manuscript.
All scientific ideas, analyses, and conclusions are those of the authors, and all AI-assisted code and text were carefully reviewed, manually checked, and validated prior to inclusion.

\paragraph{Disclaimer.}
The results, opinions, and conclusions expressed in this publication are not necessarily those of Volkswagen Aktiengesellschaft or of Porsche Digital GmbH or those of the European Union or the European Research Council Executive Agency.
None of the parties involved can be held responsible for them.

\bibliography{references}

\vspace*{0.5cm}

\onecolumngrid
\appendix

\newpage

\setcounter{figure}{0}
\setcounter{theorem}{0}
\counterwithin{theorem}{section}
\counterwithin{figure}{section}

\setcounter{secnumdepth}{2}

\begin{center}
\large{Supplementary Material for \\``Cautious optimism for deep parameterized quantum circuits''}
\end{center}

\section{Framework}\label{a:framework}

    Throughout this work, we consider supervised learning tasks on a data space $\calZ \coloneqq \calX \times \calY$, where $\calX \subseteq \bbR^d$ denotes the input domain and $\calY \subseteq \bbR^K$ the output domain.
    Samples $z = (x, y)$ are assumed to be drawn i.i.d.\ from an unknown but fixed distribution $\calD$ on $\calZ$.
    Given a training set $S=\{z_i\}_{i=1}^N$, the goal is to learn a hypothesis $f_\vartheta\in\calF$, where $\calF=\{f_\vartheta:\calX\to\calY\mid \vartheta\in\Theta\}$ denotes the hypothesis class and $\Theta\subseteq\bbR^p$ the parameter space.
    The hypothesis class is realized by \emph{parameterized quantum circuits}  (PQCs).
    For a classical input $x$, the model prepares a quantum state
    \begin{align}
        \rho(x,\vartheta) =  U(x,\vartheta)\rho_0U^\dagger(x,\vartheta),
    \end{align}      
    where $\rho_0$ is the initial state and $U(x,\vartheta)$ is the unitary implemented by the PQC.
    A common architecture consists of alternating data-dependent and parameter-dependent layers,
    \begin{align}
        U(x,\vartheta) = \prod_{l=1}^{L}V_l(\vartheta)U_l(x),
    \label{eq:pqc_ansatz}
    \end{align}    
    commonly referred to as data re-uploading~\cite{perez2020data}.
    We consider vector-valued models $f_\vartheta:\calX\rightarrow\calY\subseteq\bbR^K$ with outputs
    \begin{align}
        f_\vartheta(x) = \bigl([f_\vartheta(x)]_1,\ldots [f_\vartheta(x)]_K\bigr),
    \end{align}    
    where each component is obtained as the expectation value
    \begin{align}
        [f_\vartheta(x)]_k = \Tr\left\{\rho(x,\vartheta)O_k\right\}, \qquad k=1,\ldots,K,
    \label{eq:PQCoutput}
    \end{align}
    of a Hermitian observable.
    For classification tasks with $K>1$, the model output $f_\vartheta(x)$ is interpreted as a vector of class scores, while the predicted label is typically given by $\argmax_k [f_\vartheta(x)]_k$.
    For $K=1$, a setting commonly used in binary classification, the predicted label can instead be obtained by thresholding the output, e.g., via $\mathrm{sign}(f_\vartheta(x))$.

    Let a learning task be specified by a data distribution $\calD(\calZ)$, which we assume to be fixed and unknown.
    We denote by $S=\{(x_i,y_i)\}_{i=1}^N$ a training set of size $N$ composed of i.i.d.\ samples $z_i\coloneqq(x_i,y_i)$ from the problem distribution $S\sim\calD^N$.
    Next, let $\ell\colon\calZ\times\Theta\to\bbR_{\geq0}$ be a twice-differentiable loss function, which given $z\in\calZ$ and a parameter specification $\vartheta\in\Theta$ acts as a measure of distance between the prediction $f_\vartheta(x)$ and the correct label $y$.
    From the loss function $\ell$ and the problem distribution $\calD$ we define the expected risk functional $\calL\colon\Theta\to\bbR_{\geq0}$ as
    \begin{align}
        \calL(\vartheta) &\coloneqq\bbE_{z\sim D}\left[\ell(z,\vartheta)\right].
    \end{align}
    The goal in supervised learning is to find a set of parameters that achieve a small expected risk, though this task cannot be tackled head-on due to the assumption that $\calD$ is unknown.
    Rather, the common approach is to use a proxy quantity relative to a given training set $S$, which we denote \emph{empirical risk} $\hcalL_S\colon\Theta\to\bbR_{\geq0}$, defined as
    \begin{align}
        \hcalL_S(\vartheta) &\coloneqq \frac{1}{N}\sum_{i=1}^N\ell(z_i,\vartheta).
    \end{align}
    This way, the empirical risk is an unbiased estimator for the expected risk.
    In the following, we abuse language in that we identify a function $f_\vartheta$ (also called \emph{hypothesis}) with its parameters $\vartheta$.

    \begin{definition}[Local minimum]\label{def:localminimum}
        Given the model $\calF$, a training set $S$, and the empirical risk functional $\hcalL_S$, we call a parameter vector \emph{a local minimum}, denoted by $\hvartheta_S$, if it fulfills
        \begin{align}
            \nabla_\vartheta\hcalL_S(\hvartheta_S) &= 0,
            \\
            \hH_S(\hvartheta_S) &\succeq 0.
        \end{align}
    \end{definition}

    To avoid confusion with other differential operators that appear below, $\nabla_\vartheta$ represents the gradient operator 
    \begin{align}
        \nabla_\vartheta &\coloneqq\begin{pmatrix}
            \pder{\vartheta_1} & \cdots & \pder{\vartheta_p}
        \end{pmatrix}^\intercal
    \end{align}
    with respect to the parameters. 
    Specifically, for any $\vartheta\in\Theta$, the gradient of the loss with respect to the parameters is a $p$-dimensional column vector $\nabla_\vartheta\hcalL_S(\vartheta)\in\bbR^p$.

    \begin{remark}{(Local minima as output){\bf .}}
        From here on, we make the explicit assumption that standard gradient-based training techniques produce local minima as output.
    \end{remark}
    We denote by $\hH_S(\vartheta)$ the Hessian matrix of the empirical risk
    \begin{align}
        \hH_S(\vartheta) &= \nabla_\vartheta\nabla_\vartheta^\intercal\hcalL_S(\vartheta) = \left(\pder[^2\hcalL_S(\vartheta)]{\vartheta_i\partial\vartheta_j}\right)_{i,j=1}^p.
    \end{align}
    
    \begin{remark}[The inverse of the Hessian.]
        Our results below are stated in terms of the inverse of the Hessian matrix, which does not exist unless the matrix is full rank.
        Throughout this work, we therefore interpret $[\hH_S(\vartheta)]^{-1}$ as the Moore-Penrose pseudoinverse of $\hH_S(\vartheta)$, obtained by inverting only the non-zero eigenvalues.
        This is particularly relevant in overparameterized models, where redundant parameters give rise to zero eigenvalues.
    \end{remark}

\section{Expected risk at local minima}\label{a:expected_risk_local_minima}

    Since double descent is fundamentally a phenomenon of the expected risk, we first derive an explicit expression for the expected risk attained attained at local minima.  
    To this end, we analyze the effect of adding a single sample to the training set.
    Following Ref.~\cite{singh2022phenomenology}, we consider an add-one-in perturbation of the empirical risk.
    Under this perturbation, the empirical risk changes only slightly, and the corresponding local minimum is therefore expected to remain close to that of the original training set.
    This motivates a local expansion around the original local minimum.

    \begin{definition}[Add-one-in perturbation]\label{def:perturbation}
        Given a training set $S$ and a datum $z'\in\calZ$ not already-contained in $S$, we refer to $S'=S\cup\{z'\}$ as the \emph{add-one-in perturbation} of $S$.
    \end{definition}
    
    We do not include the $z'$ dependence of $S'$ explicitly because below we only make statements that hold in expectation over $z'\sim\calD$.
    The distinction we keep in mind regarding notation is the size of the sets: for $\lvert S\rvert=N$ we have $\lvert S'\rvert=N+1$.
    Note that if we call $\calD_S$ the uniform distribution over $S$, and similarly for $S'$, then $\calD_{S'}=(1-\varepsilon)\calD_S+\varepsilon\delta_{z'}$.
    Here $\delta_{z'}$ is a delta distribution centered at $z'$, and 
    \begin{equation}\varepsilon=\frac{1}{N+1}.
    \end{equation}
    In the regime of large $N$, we have that $\varepsilon$ becomes small, which justifies calling $\calD_{S'}$ a \enquote{perturbation} of $\calD_{S}$.
    Again, we abuse language and we identify the sets $S$ and $S'$ with their uniform distributions.

    \begin{lemma}[{Influence function for add-one-in perturbation, adapted from Ref.~\cite[Prop.~2]{singh2022phenomenology}}]\label{l:IF}
        Let $\calF$ be a hypothesis family, $S$ a training set of size $N$.
        Let $S'$ be the add-one-in perturbation of $S$, and let $\hvartheta_{S}, \hvartheta_{S'}$ be local minima with respect to $S$ and $S'$ respectively.
        Then, the first-order contribution from adding samples to the training set is as 
        \begin{align}
            \frac{\ell(z',\hvartheta_{S'})-\ell(z',\hvartheta_S)}{\varepsilon} &= -\nabla_\vartheta\ell(z',\hvartheta_S)^\intercal\left[\hH_S(\hvartheta_S)\right]^{-1}\nabla_\vartheta\ell(z',\hvartheta_S) + \calO\left(\frac{1}{N}\right).
        \end{align}
    \end{lemma}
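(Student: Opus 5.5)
The plan is the standard influence-function argument. We linearize the first-order optimality condition for $\hvartheta_{S'}$ around $\hvartheta_S$, then propagate the resulting parameter displacement into the loss at $z'$ by a first-order Taylor expansion.

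\emph{Step 1: write the perturbed objective.} Using $\calD_{S'}=(1-\varepsilon)\calD_S+\varepsilon\delta_{z'}$, we write
\begin{align}
\hcalL_{S'}(\vartheta)=(1-\varepsilon)\hcalL_S(\vartheta)+\varepsilon\,\ell(z',\vartheta).
\end{align}
Since $\hvartheta_{S'}$ is a local minimum, $\nabla_\vartheta\hcalL_{S'}(\hvartheta_{S'})=0$.

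\emph{Step 2: expand the stationarity condition.} Set $\Delta\coloneqq\hvartheta_{S'}-\hvartheta_S$ and Taylor-expand the gradient around $\hvartheta_S$. Because $\nabla_\vartheta\hcalL_S(\hvartheta_S)=0$, we obtain
\begin{align}
0=(1-\varepsilon)\hH_S(\hvartheta_S)\Delta+\varepsilon\nabla_\vartheta\ell(z',\hvartheta_S)+\varepsilon\nabla_\vartheta^2\ell(z',\hvartheta_S)\Delta+\calO(\lVert\Delta\rVert^2).
\end{align}
We assume, as in Ref.~\cite{singh2022phenomenology}, that $\hvartheta_{S'}$ is the local minimum continuously connected to $\hvartheta_S$, so that $\Delta=\calO(\varepsilon)$. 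This assumption is needed because the lemma pairs two arbitrary local minima. The terms $\varepsilon\Delta$, $\varepsilon\nabla^2\ell\,\Delta$ and $\lVert\Delta\rVert^2$ are then all $\calO(\varepsilon^2)$, which gives
\begin{align}
\hH_S(\hvartheta_S)\Delta=-\varepsilon\nabla_\vartheta\ell(z',\hvartheta_S)+\calO(\varepsilon^2).
\end{align}

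\emph{Step 3: invert the Hessian.} This is the main obstacle, since $\hH_S$ may be singular in the overparameterized regime. We apply the Moore--Penrose pseudoinverse and obtain
\begin{align}
\Delta=-\varepsilon[\hH_S]^{-1}\nabla_\vartheta\ell(z',\hvartheta_S)+\calO(\varepsilon^2)
\end{align}
up to a component in $\ker\hH_S$. To handle that component, I would argue as follows. For the squared loss at a minimum, directions in $\ker\hH_S$ are (to leading order) flat directions of the model along which the loss gradient is orthogonal. Hence any kernel component of $\Delta$ does not contribute at first order when contracted with $\nabla_\vartheta\ell(z',\hvartheta_S)$. Concretely, I would restrict the gradient to the range of $\hH_S$, or assume $\nabla_\vartheta\ell(z',\cdot)\in\mathrm{range}(\hH_S)$, and choose the minimum-norm displacement. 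This is exactly how the pseudoinverse convention in the Remark is meant to be interpreted.

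\emph{Step 4: propagate to the loss.} We expand
\begin{align}
\ell(z',\hvartheta_{S'})-\ell(z',\hvartheta_S)=\nabla_\vartheta\ell(z',\hvartheta_S)^\intercal\Delta+\calO(\lVert\Delta\rVert^2).
\end{align}
Substituting Step 3 and dividing by $\varepsilon$ yields the claimed expression, with remainder $\calO(\varepsilon)=\calO(1/(N+1))=\calO(1/N)$.

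Bounded second and third derivatives are automatic for PQCs. The reason is that parameter derivatives of expectation values are bounded via the parameter-shift rule by $\lVert O_k\rVert$ times generator norms. These bounds make all the $\calO(\cdot)$ constants uniform.
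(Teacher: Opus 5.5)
Your linearization of the stationarity condition of $\hcalL_{S'}=(1-\varepsilon)\hcalL_S+\varepsilon\,\ell(z',\cdot)$, followed by a Taylor expansion of $\ell(z',\cdot)$, is the standard influence-function derivation behind Prop.~2 of Ref.~\cite{singh2022phenomenology}. The paper only cites that result and gives no proof of its own. Your argument is sound when $\hH_S(\hvartheta_S)$ is nonsingular.

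The genuine gap is Step~3, which is exactly the singular case the pseudoinverse convention is meant to cover. You claim that $\ker\hH_S$ consists of directions orthogonal to $\nabla_\vartheta\ell(z',\hvartheta_S)$, and this is false for a fresh sample $z'$. Kernel directions are flat for the training loss, but nothing keeps the gradient at $z'$ out of them. Concretely, take an interpolating minimum with $p>N$ (the regime of Lemma~\ref{l:global_optimum}):
\begin{itemize}
\item the functional Hessian vanishes, so $\hH_S=\frac{1}{N}Z_SZ_S^\intercal$;
\item hence $\ker\hH_S=\ker Z_S^\intercal$, which has dimension $p-N$;
\item the vector $\nabla_\vartheta\ell(z',\hvartheta_S)=(f_{\hvartheta_S}(x')-y')\nabla_\vartheta f_{\hvartheta_S}(x')$ generically has a nonzero component in that kernel.
\end{itemize}
Let $P$ be the orthogonal projector onto $\ker\hH_S$. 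Projecting your Step~2 equation gives $\varepsilon P\nabla_\vartheta\ell(z',\hvartheta_S)=\calO(\varepsilon^2)$, a contradiction. So no local minimum of $\hcalL_{S'}$ lies within $\calO(\varepsilon)$ of $\hvartheta_S$. Appealing to the minimum being ``continuously connected'' cannot supply one, because the minimizer set jumps at $\varepsilon=0$.

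This cannot be patched, because the identity itself can fail there. Take $f_\vartheta(x)=\vartheta^\intercal\phi(x)$ with $Z_S$ of full column rank, $0\neq\phi(x')\in\ker Z_S^\intercal$, and $f_{\hvartheta_S}(x')\neq y'$.
\begin{itemize}
\item Every local minimum of $\hcalL_{S'}$ interpolates $z'$, so the left-hand side equals $-(N+1)\,\ell(z',\hvartheta_S)$.
\item The pseudoinverse annihilates $\phi(x')$, so the right-hand side is $\calO(1/N)$.
\end{itemize}
Your fallback assumption $\nabla_\vartheta\ell(z',\hvartheta_S)\in\operatorname{range}\hH_S(\hvartheta_S)$ is therefore not a reading of the pseudoinverse convention. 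Together with the existence of an $\calO(\varepsilon)$-close minimum, it is a substantive extra hypothesis. Your proof establishes the lemma under $\hH_S(\hvartheta_S)\succ0$ or under these explicit assumptions, which in particular excludes interpolating minima with $p>N$. You should state it that way.

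In the nonsingular case you also need not assume $\Delta=\calO(\varepsilon)$. Apply the implicit function theorem to $F(\vartheta,\varepsilon)=(1-\varepsilon)\nabla_\vartheta\hcalL_S(\vartheta)+\varepsilon\nabla_\vartheta\ell(z',\vartheta)$ at $(\hvartheta_S,0)$. This yields a $C^1$ branch of strict local minima with initial velocity $-[\hH_S(\hvartheta_S)]^{-1}\nabla_\vartheta\ell(z',\hvartheta_S)$. The only remaining hypothesis is that $\hvartheta_{S'}$ lies on this branch.

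Finally, parameter-shift bounds do not make the $\calO(\cdot)$ constants uniform. The inversion in Steps~2--3 introduces powers of $1/\lambda_{\min}(\hH_S(\hvartheta_S))$. The range of $\varepsilon$ on which the implicit-function branch is guaranteed also shrinks with $\lambda_{\min}(\hH_S(\hvartheta_S))$, and that eigenvalue degrades precisely near the interpolation threshold.
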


    \begin{definition}[Add-one-in loss]\label{def:add-loss}
        Suppose we are given the model $\calF$, a training set $S$ and its corresponding local minimum $\hvartheta_S$.
        Further, for any add-one-in perturbation $S'=S\cup\{z'\}$, let $\hvartheta_{S'}$ be its local minimum.
        Then, we denote the \emph{add-one-in loss} $\calL'(S)$ as
        \begin{align}
            \calL'(S) &\coloneqq \bbE_{z'\sim\calD} \left[\ell(z',\hvartheta_{S'})\right].
        \end{align}
    \end{definition}

    \begin{definition}[Uncentered covariance of loss gradients]
    For any parameter vector $\vartheta\in\mathbb{R}^p$, we define the \emph{uncentered covariance matrix of the loss gradients} as
    \begin{align}
        C(\vartheta)
        \coloneqq
        \bbE_{z\sim\calD}\left[
            \nabla_\vartheta\ell(z,\vartheta)
            \nabla_\vartheta\ell(z,\vartheta)^\intercal
        \right].
    \end{align}
    \end{definition}

    \begin{lemma}[Error decomposition, Theorem 3 in Ref.~\cite{singh2022phenomenology}]\label{l:error_decomp}
        Let $\calF$ be the hypothesis family of a learning model, let $S$ be a training set of size $N$, let $\ell$ be a twice-differentiable loss function, let $\hvartheta_S$ be a local minimum, let $\calL'(S)$ be the add-one-in loss, and let $\hH_S(\hvartheta_S)$ be the Hessian matrix of the empirical loss at the local minimum.
        Then, the expected risk $\calL(\hvartheta_S)$ fulfills
        \begin{align}
            \calL(\hvartheta_S) &= \calL'(S) + \frac{1}{N+1}\Tr\left\{\left[\hH_S(\hvartheta_S)\right]^{-1}C(\hvartheta_S)\right\}+\calO\left(\frac{1}{N^2}\right).
        \label{eq:error_decomp}
        \end{align}
        We refer to the summand with the trace as the \emph{complexity term}.
    \end{lemma}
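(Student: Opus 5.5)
The plan is to combine the exact identity $\calL(\hvartheta_S)=\bbE_{z'}[\ell(z',\hvartheta_S)]$ with the first-order influence expansion of Lemma~\ref{l:IF}, and then average over the fresh datum $z'\sim\calD$. The identity holds by definition, because $\hvartheta_S$ depends only on $S$ and $z'$ is independent of $S$. Solving Lemma~\ref{l:IF} for $\ell(z',\hvartheta_S)$ with $\varepsilon=1/(N+1)$ gives
\begin{align}
\ell(z',\hvartheta_S) = \ell(z',\hvartheta_{S'}) + \varepsilon\,\nabla_\vartheta\ell(z',\hvartheta_S)^\intercal\left[\hH_S(\hvartheta_S)\right]^{-1}\nabla_\vartheta\ell(z',\hvartheta_S) + \varepsilon\,\calO\left(\frac{1}{N}\right).
\end{align}
The remainder is $\varepsilon\cdot\calO(1/N)=\calO(1/N^2)$.

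Next, take expectations over $z'\sim\calD$ on both sides. The left-hand side becomes $\calL(\hvartheta_S)$, and the first term on the right is by definition the add-one-in loss $\calL'(S)$ of Definition~\ref{def:add-loss}. For the quadratic form, the matrix $[\hH_S(\hvartheta_S)]^{-1}$ is a fixed (pseudo)inverse that does not depend on $z'$. Using the cyclic property $v^\intercal A v=\Tr\{A\,vv^\intercal\}$ and linearity of trace and expectation, we get
\begin{align}
\bbE_{z'}\left[\nabla_\vartheta\ell^\intercal\left[\hH_S\right]^{-1}\nabla_\vartheta\ell\right]=\Tr\left\{\left[\hH_S(\hvartheta_S)\right]^{-1}C(\hvartheta_S)\right\}.
\end{align}
This recovers exactly the uncentered covariance $C(\hvartheta_S)$ from its definition. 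Multiplying by $\varepsilon=1/(N+1)$ gives the complexity term of Eq.~\eqref{eq:error_decomp}.

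The main obstacle is justifying that the $\calO(1/N)$ remainder in Lemma~\ref{l:IF} survives the expectation over $z'$ as $\calO(1/N)$ uniformly. Its constant depends on $z'$ through third derivatives of the loss and through the size of the gradient at $z'$. I would handle this by noting that for PQCs every quantity is uniformly bounded. The outputs $\Tr\{\rho O_k\}$ are bounded by $\|O_k\|$, and by the parameter-shift rule all parameter derivatives of any order are bounded by the same constants. With labels in a bounded domain, the mean-squared loss and all its derivatives are therefore uniformly bounded in $z'$, so the remainder is dominated and the expectation commutes with the $\calO$.

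A second point is the pseudoinverse. Lemma~\ref{l:IF} is stated in terms of the Moore--Penrose inverse, which amounts to perturbing only within the range of $\hH_S$. I would take that as given from the lemma and only remark that the trace identity holds verbatim for the pseudoinverse. Combining the pieces yields the stated decomposition up to $\calO(N^{-2})$.
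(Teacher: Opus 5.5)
Your proposal is correct and follows the paper's proof essentially step for step: rearrange Lemma~\ref{l:IF} with $\varepsilon=1/(N+1)$, rewrite the quadratic form as $\Tr\{[\hH_S(\hvartheta_S)]^{-1}\nabla_\vartheta\ell\,\nabla_\vartheta\ell^\intercal\}$, average over $z'\sim\calD$, and identify $\calL(\hvartheta_S)$, $\calL'(S)$ and $C(\hvartheta_S)$. Your extra argument that the $\calO(1/N)$ remainder survives the expectation over $z'$ is a step the paper leaves implicit. It relies on bounded labels and on the PQC structure, and neither is among the lemma's stated hypotheses.
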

    \begin{proof}
        We prove this statement directly, starting from Lemma~\ref{l:IF} and taking an expectation over $\calD$.
        We start by taking the equation in the Lemma and substituting \begin{align}
        \varepsilon=\frac{1}{N+1},
        \end{align}
        in order to get
        \begin{align}
            \frac{\ell(z',\hvartheta_{S'})-\ell(z',\hvartheta_S)}{\varepsilon} &= -\nabla_\vartheta\ell(z',\hvartheta_S)^\intercal\left[\hH_S(\hvartheta_S)\right]^{-1}\nabla_\vartheta\ell(z',\hvartheta_S) + \calO\left(\frac{1}{N}\right) \\
            \ell(z',\hvartheta_{S'})-\ell(z',\hvartheta_S) &= -\varepsilon\nabla_\vartheta\ell(z',\hvartheta_S)^\intercal\left[\hH_S(\hvartheta_S)\right]^{-1}\nabla_\vartheta\ell(z',\hvartheta_S) + \calO\left(\frac{1}{N^2}\right) \\
            &= -\frac{1}{N+1}\nabla_\vartheta\ell(z',\hvartheta_S)^\intercal\left[\hH_S(\hvartheta_S)\right]^{-1}\nabla_\vartheta\ell(z',\hvartheta_S) + \calO\left(\frac{1}{N^2}\right).
        \end{align}
        Next we perform a common step to turn a vector-matrix-vector product into the trace of a matrix-matrix product as
        \begin{align}
            \ell(z',\hvartheta_{S'})-\ell(z',\hvartheta_S) &= -\frac{1}{N+1}\nabla_\vartheta\ell(z',\hvartheta_S)^\intercal\left[\hH_S(\hvartheta_S)\right]^{-1}\nabla_\vartheta\ell(z',\hvartheta_S) + \calO\left(\frac{1}{N^2}\right) \\
            &= -\frac{1}{N+1}\Tr\left\{\nabla_\vartheta\ell(z',\hvartheta_S)^\intercal\left[\hH_S(\hvartheta_S)\right]^{-1}\nabla_\vartheta\ell(z',\hvartheta_S)\right\} + \calO\left(\frac{1}{N^2}\right) \\
            &= -\frac{1}{N+1}\Tr\left\{\left[\hH_S(\hvartheta_S)\right]^{-1}\nabla_\vartheta\ell(z',\hvartheta_S)\nabla_\vartheta\ell(z',\hvartheta_S)^\intercal\right\} + \calO\left(\frac{1}{N^2}\right).
        \end{align}
        We now take the expectation over $z'\sim\calD$ on both sides and simplify the terms that do not depend on $z'$, in order to get
        \begin{align}
            \bbE_{z'\sim\calD} \left[\ell(z',\hvartheta_{S'})-\ell(z',\hvartheta_S)\right] &= \bbE_{z'\sim\calD}\left[-\frac{1}{N+1}\Tr\left\{\left[\hH_S(\hvartheta_S)\right]^{-1}\nabla_\vartheta\ell(z',\hvartheta_S)\nabla_\vartheta\ell(z',\hvartheta_S)^\intercal\right\} + \calO\left(\frac{1}{N^2}\right)\right] ,
            \\
            \bbE_{z'\sim\calD} \left[\ell(z',\hvartheta_{S'})\right] - \bbE_{z'\sim\calD} \left[\ell(z',\hvartheta_S)\right] &= -\frac{1}{N+1}\Tr\left\{\left[\hH_S(\hvartheta_S)\right]^{-1}\bbE_{z'\sim\calD}\left[\nabla_\vartheta\ell(z',\hvartheta_S)\nabla_\vartheta\ell(z',\hvartheta_S)^\intercal\right]\right\} + \calO\left(\frac{1}{N^2}\right).
        \end{align}
        Notice that the two expectations on the left-hand side are precisely the add-one-in loss and the expected risk, and the expectation value on the right-hand side corresponds to the uncentered covariance of the gradients.
        From here we need only shift terms around to complete the proof,
        \begin{align}
            \calL'(S) - \calL(\hvartheta_S) &= -\frac{1}{N+1}\Tr\left\{\left[\hH_S(\hvartheta_S)\right]^{-1}C(\hvartheta_S)\right\} + \calO\left(\frac{1}{N^2}\right) \\
            \calL(\hvartheta_S) &= \calL'(S) + \frac{1}{N+1}\Tr\left\{\left[\hH_S(\hvartheta_S)\right]^{-1}C(\hvartheta_S)\right\} + \calO\left(\frac{1}{N^2}\right).
        \end{align}
    \end{proof}

    \begin{remark}
     Recall that $[\hH_S(\hvartheta_S)]^{-1}$ denotes the Moore-Penrose pseudoinverse.
     Consequently, directions corresponding to zero Hessian eigenvalues do not contribute to the complexity term $\Tr\left\{[\hH_S(\hvartheta_S)]^{-1}C(\hvartheta_S)\right\}$.
     Therefore, Lemma~\ref{l:error_decomp} does not contradict the common intuition that flat minima may generalize better.
     Rather, it shows that the expected risk is increased by directions that simultaneously exhibit small non-zero curvature and large gradient variance as quantified by $C(\hvartheta_S)$.
     Intuitively, such directions are only weakly constrained by the training set, despite corresponding to directions along which typical samples from the underlying distribution exhibit substantial gradients.
    \end{remark}    

    We next reproduce the lower bound for this error decomposition, following the formalism in Ref.~\cite{singh2022phenomenology}.
    To this end, we restrict our attention to the scalar-output case $K=1$ and introduce a mild regularity assumption on the loss gradients.
    We start by introducing two auxiliary lemmas that will be used in the proof.
    
    \begin{lemma}\label{l:eval_trace_ineq}
        Let $A,B\in\bbR^{m\times m}$ be symmetric matrices, with $A, B\geq0$ further being positive semi-definite.
        Then
        \begin{align}
            \lambda_{\min}(A)\Tr\{B\} &\leq \Tr\{AB\} \leq \lambda_{\max}(A)\Tr\{B\}.
        \end{align}
        Here, $\lambda_{\min}(A)$ corresponds to the smallest eigenvalue of $A$, and similarly for $\lambda_{\max}$.
    \end{lemma}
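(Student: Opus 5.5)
We prove the statement by diagonalizing $A$ and reducing $\Tr\{AB\}$ to a weighted sum of nonnegative numbers. Since $A$ is real symmetric, the spectral theorem gives an orthogonal $Q$ with $A = Q\Lambda Q^\intercal$, where $\Lambda=\mathrm{diag}(\lambda_1,\ldots,\lambda_m)$ collects the eigenvalues of $A$. By cyclicity of the trace,
\begin{align}
    \Tr\{AB\} = \Tr\{\Lambda\, Q^\intercal B Q\} = \sum_{i=1}^m \lambda_i\, \tilde B_{ii},
\end{align}
where $\tilde B \coloneqq Q^\intercal B Q$.

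The key observation is that $\tilde B$ is again positive semi-definite, being congruent to $B$. Its diagonal entries therefore satisfy $\tilde B_{ii} = e_i^\intercal \tilde B e_i \geq 0$. Moreover, $\sum_i \tilde B_{ii} = \Tr\{\tilde B\} = \Tr\{B\}$, again by cyclicity and orthogonality of $Q$.

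Since each weight $\tilde B_{ii}$ is nonnegative, we can bound each $\lambda_i$ termwise by $\lambda_{\min}(A)\le\lambda_i\le\lambda_{\max}(A)$ without flipping any inequality. Summing over $i$ gives
\begin{align}
    \lambda_{\min}(A)\Tr\{B\} \leq \sum_{i=1}^m \lambda_i \tilde B_{ii} \leq \lambda_{\max}(A)\Tr\{B\},
\end{align}
which is the claim.

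Two remarks on the hypotheses. Positive semi-definiteness of $A$ is not needed for this argument; only symmetry of $A$ and positive semi-definiteness of $B$ are used. Also, $\lambda_{\min}$ must be read here as the smallest eigenvalue of $A$, including zero. This is unlike its later use for the smallest non-zero eigenvalue, where one would first restrict to the range of $A$. The only step needing care is the nonnegativity of the diagonal of $\tilde B$; everything else is direct.
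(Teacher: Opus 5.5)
Your argument is correct and complete. The paper does not prove this lemma; it simply cites Ref.~\cite{fang1994inequalities}, so there is no competing route to compare against. Your diagonalization $\Tr\{AB\}=\sum_i\lambda_i\tilde B_{ii}$, with weights $\tilde B_{ii}=q_i^\intercal B q_i\geq 0$ (where $q_i$ are the columns of $Q$) summing to $\Tr\{B\}$, is the standard argument. It makes the appendix self-contained and proves the inequality under the weaker hypotheses you identify: symmetric $A$ and positive semi-definite $B$.

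Your caveat about $\lambda_{\min}$ deserves emphasis. The proof of Theorem~\ref{thm:lower_bound_K1} invokes the lemma with $A=C_f(\hvartheta_S)$ and $B=[\hH_S(\hvartheta_S)]^{-1}$, and reads $\lambda_{\min}$ as the smallest \emph{non-zero} eigenvalue. Your weighted-sum identity shows that under this reading one only obtains $\lambda_{\min}(C_f)\Tr\{PB\}$, where $P$ is the orthogonal projector onto the range of $C_f$. This coincides with $\lambda_{\min}(C_f)\Tr\{B\}$ only if the range of $B$ lies inside the range of $C_f$, e.g.\ if $C_f(\hvartheta_S)$ is invertible. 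That is an implicit extra condition in the paper's downstream use, not a gap in your proof of the lemma as stated.
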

    
    The proof of Lemma~\ref{l:eval_trace_ineq} can be found in Ref.~\cite{fang1994inequalities}.

    \begin{lemma}\label{l:accompany_proof_bounds}
        Under Assumption~\ref{A1} and in the case $K=1$, the complexity term from Lemma~\ref{l:error_decomp} fulfills
        \begin{align}
            \Tr\left\{\left[\hH_S(\hvartheta_S)\right]^{-1}C(\hvartheta_S)\right\} &= \alpha\bbE_{\substack{z\sim\calD\\\xi_z^2>0}}\left[\xi_z^2 \Tr\left\{\left[\hH_S(\hvartheta_S)\right]^{-1}\nabla_\vartheta f_{\hvartheta_S}(z) \nabla_\vartheta f_{\hvartheta_S}(z)^\intercal\right\}\right].
        \end{align}
    \end{lemma}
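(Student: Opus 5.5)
The plan is a direct computation from the definition of $C(\vartheta)$, using the chain rule for the mean-squared error loss in the scalar-output case $K=1$. Here $\ell(z,\vartheta)=\tfrac12(f_\vartheta(x)-y)^2$, so
\begin{align}
\nabla_\vartheta\ell(z,\vartheta)=\bigl(f_\vartheta(x)-y\bigr)\nabla_\vartheta f_\vartheta(x).
\end{align}
The loss gradient is therefore a scalar multiple of the hypothesis gradient. I would identify $\xi_z^2$ with the squared scalar prefactor appearing in this factorization. Assumption~\ref{A1} presumably fixes this notation and asserts that the relevant event $\{\xi_z^2>0\}$ has probability $\alpha$ under $\calD$, matching the $\alpha$ in Eq.~\eqref{eq:lower_bound}. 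The outer product then becomes
\begin{align}
\nabla_\vartheta\ell(z,\hvartheta_S)\nabla_\vartheta\ell(z,\hvartheta_S)^\intercal=\xi_z^2\,\nabla_\vartheta f_{\hvartheta_S}(z)\nabla_\vartheta f_{\hvartheta_S}(z)^\intercal.
\end{align}

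First, I substitute this identity into $C(\hvartheta_S)=\bbE_{z\sim\calD}[\cdots]$. Since the integrand vanishes whenever $\xi_z^2=0$, the law of total expectation gives
\begin{align}
C(\hvartheta_S)=\alpha\,\bbE_{z\sim\calD,\ \xi_z^2>0}\bigl[\xi_z^2\,\nabla_\vartheta f\,\nabla_\vartheta f^\intercal\bigr]+(1-\alpha)\cdot 0.
\end{align}
Second, I multiply on the left by the fixed, $z$-independent pseudoinverse $[\hH_S(\hvartheta_S)]^{-1}$ and take the trace. Linearity of trace and of expectation lets both pass through the conditional expectation, and the scalar $\xi_z^2$ can be pulled out of the trace. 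This yields exactly the claimed identity.

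The only delicate point is bookkeeping rather than mathematics. I must make sure the conditioning event in Assumption~\ref{A1} is the same as the one defining $\alpha$. Should $\alpha$ instead be defined through $\lVert J_f\rVert^2>0$, as in the main text, I would condition on the joint event where both $\xi_z^2$ and $\lVert J_f\rVert^2$ are positive. Outside that event the integrand is zero either way, so the identity still holds with $\alpha$ equal to that event's probability.

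I also need integrability so that linearity of expectation applies. This follows because PQC outputs and their parameter derivatives are bounded (parameter-shift rule, bounded observables), and $y$ is assumed to have finite second moment, or to be bounded, as is implicit in $\calL$ being finite.
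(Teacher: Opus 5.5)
Your proposal is correct and follows the paper's proof essentially step for step: you factor the loss gradient via the chain rule, identify the squared scalar prefactor with $\xi_z^2$ from Assumption~\ref{A1}, split $C(\hvartheta_S)$ by the law of total expectation so that the $\xi_z^2=0$ part vanishes, and then pull the $z$-independent pseudoinverse and the trace through the conditional expectation. The one minor difference is that the paper keeps the general prefactor $\partial\ell/\partial f_\vartheta(x)$ rather than specializing to $f_\vartheta(x)-y$. The identity therefore holds for any loss that depends on $\vartheta$ only through $f_\vartheta(x)$, and your argument carries over verbatim. Your integrability remark is a small addition that the paper omits.
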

    \begin{proof}
        We prove this statement directly.
        We first use the chain rule to relate $\nabla_\vartheta\ell(z,\vartheta)$ and $\nabla_\vartheta f_\vartheta(z)$, and then we separate the expectation value into the two disjoint conditions $\xi_z^2>0$ and $\xi_z^2=0$,
        \begin{align}
            \nabla_\vartheta\ell(z,\vartheta) &\coloneqq \left(\pder[\ell(z,\hvartheta_s)]{\vartheta_m}\right)_{m=1}^p 
            = \left(\pder[\ell(z,\vartheta)]{f_\vartheta(x)}\pder[f_\vartheta(z)]{\vartheta_m}\right)_{m=1}^p
            = \pder[\ell(z,\vartheta)]{f_\vartheta(x)}\nabla_\vartheta f_\vartheta(x).
        \end{align}
        In the case $K=1$, $\pder[\ell(z,\vartheta)]{f_\vartheta(x)}\in\bbR$ is a scalar and $\nabla_\vartheta f_\vartheta(x)\in\bbR^p$ is a $p$-dimensional column vector.
        We now substitute the chain rule into the definition of $C(\hvartheta_S)$, to obtain
        \begin{align}
            C(\hvartheta_S) &= \bbE_{z'\sim\calD}\left[\nabla_\vartheta\ell(z',\hvartheta_S)\nabla_\vartheta\ell(z',\hvartheta_S)^\intercal\right]
            = \bbE_{z'\sim\calD}\left[\left\lvert\pder[\ell(z',\hvartheta_S)]{f_\vartheta(x)}\right\rvert^2 \nabla_\vartheta f_{\hvartheta_S}(x')\nabla_\vartheta f_{\hvartheta_S}(x')^\intercal\right].
        \end{align}
        The squared term inside the expectation value is exactly $\lVert\nabla_f\ell(z',\hvartheta_S)\rVert^2$ in the case of $K=1$, which we defined as $\xi^2_{z'}$ in Assumption~\ref{A1},
        \begin{align}
            \bbE_{z'\sim\calD}\left[\left\lvert\pder[\ell(z',\hvartheta_S)]{f_\vartheta(x)}\right\rvert^2 \nabla_\vartheta f_{\hvartheta_S}(x')\nabla_\vartheta f_{\hvartheta_S}(x')^\intercal\right] &= \bbE_{z'\sim\calD}\left[\xi_{z'}^2 \nabla_\vartheta f_{\hvartheta_S}(x')\nabla_\vartheta f_{\hvartheta_S}(x')^\intercal\right].
        \end{align}
        Now, we split the expectation value over two disjoint conditions $\bbE_{z'\sim\calD}[\,\cdot\,] = \bbP(\xi_{z'}^2>0)\bbE_{\substack{z'\sim\calD\\\xi_{z'}^2>0}}[\,\cdot\,] + \bbP(\xi_{z'}^2=0)\bbE_{\substack{z'\sim\calD\\\xi_{z'}^2=0}}[\,\cdot\,]$.
        Assumption~\ref{A1} can be restated as: there exists an $\alpha>0$ such that $\bbP_{z\sim\calD}(\xi^2_z>0)=\alpha$.
        With this, we obtain
        \begin{align}
            \bbE_{z'\sim\calD}\left[\xi_{z'}^2 \nabla_\vartheta f_{\hvartheta_S}(x')\nabla_\vartheta f_{\hvartheta_S}(x')^\intercal\right] &= \bbP(\xi_{z'}^2>0)\bbE_{\substack{z'\sim\calD\\\xi_{z'}^2>0}}\left[\xi_{z'}^2 \nabla_\vartheta f_{\hvartheta_S}(x')\nabla_\vartheta f_{\hvartheta_S}(x')^\intercal\right] \\
            &\hphantom{\bbP}+ \bbP(\xi_{z'}^2=0)\bbE_{\substack{z'\sim\calD\\\xi_{z'}^2=0}}\left[\xi_{z'}^2 \nabla_\vartheta f_{\hvartheta_S}(x')\nabla_\vartheta f_{\hvartheta_S}(x')^\intercal\right] \\
            &= \alpha \bbE_{\substack{z'\sim\calD\\\xi_{z'}^2>0}}\left[\xi_{z'}^2 \nabla_\vartheta f_{\hvartheta_S}(x')\nabla_\vartheta f_{\hvartheta_S}(x')^\intercal\right].
        \end{align}
        The second summand vanishes due to the $\xi_{z'}^2=0$ condition.
        Directly substituting these in the complexity term finishes the proof,
        \begin{align}
            \Tr\left\{\left[\hH_S(\hvartheta_S)\right]^{-1}C(\hvartheta_S)\right\} &= \Tr\left\{\left[\hH_S(\hvartheta_S)\right]^{-1}\alpha \bbE_{\substack{z'\sim\calD\\\xi_{z'}^2>0}}\left[\xi_{z'}^2 \nabla_\vartheta f_{\hvartheta_S}(x')\nabla_\vartheta f_{\hvartheta_S}(x')^\intercal\right]\right\} \\
            &= \alpha \bbE_{\substack{z'\sim\calD\\\xi_{z'}^2>0}}\left[\xi_{z'}^2\Tr\left\{\left[\hH_S(\hvartheta_S)\right]^{-1}\nabla_\vartheta f_{\hvartheta_S}(x')\nabla_\vartheta f_{\hvartheta_S}(x')^\intercal\right]\right\}.
        \end{align}
    \end{proof}
    
    \begin{assumption}[Regularity assumption]
        There exists a sample $z \sim \mathcal{D}$ with non-zero probability $\alpha$ such that $\xi^2_z \coloneqq \lVert \nabla_f \ell(z, \hvartheta_S)\rVert^2 > 0$, where $\nabla_f$ denotes the gradient with respect to the model output.
    \label{A1}
    \end{assumption}

    For the mean-squared error loss considered in this work, Assumption~\ref{A1} merely requires that predictions of the model with parameters $\hvartheta_S$ differ from the true targets on a subset of the data distribution with non-zero probability.

    \begin{definition}[Uncentered covariance of function gradients]
        For any parameter vector $\vartheta\in\mathbb{R}^p$, we define the \emph{uncentered covariance matrix of the function gradients} as
        \begin{align}
            C_f(\vartheta) \coloneqq \bbE_{\substack{z\sim\calD\\\xi_z^2>0}} \left[ \nabla_\vartheta f_{\vartheta}(x) \nabla_\vartheta f_{\vartheta}(x)^\intercal \right].
        \end{align}
    \end{definition}

    Combining the error decomposition from Lemma~\ref{l:error_decomp} with Assumption~\ref{A1} yields the following lower bound on the expected risk in terms of the spectrum of $C_f(\hvartheta_S)$ and the Hessian matrix of the empirical risk $\hH_S(\hvartheta_S)$ at local minima.

    \begin{theorem}[Lower bound on the expected risk, Theorem 4 Ref.~\cite{singh2022phenomenology}]\label{thm:lower_bound_K1}
        Under the setting of Lemma~\ref{l:error_decomp}, under Assumption~\ref{A1} and for $K=1$, the expected risk $\calL(\hvartheta_S)$ at the local minimum $\hvartheta_S$ fulfills the lower bound
    \begin{align}
            \calL(\hvartheta_S) &\geq\calL'(S) + \frac{1}{N+1}\frac{\alpha\,\xi^2_{\min}\lambda_{\min}(C_f(\hvartheta_S))}{\lambda_{\min}(\hH_S(\hvartheta_S))}.
        \end{align}
        Here, we define $\xi^2_{\min} = \min_{\substack{z\sim\calD \\ \xi_z^2>0}} \{\xi_z^2\}$, and $\lambda_{\min}(\cdot)$ the smallest non-zero eigenvalue of its matrix argument.
    \end{theorem}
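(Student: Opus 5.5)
We start from the error decomposition of Lemma~\ref{l:error_decomp}. Dropping the $\calO(N^{-2})$ corrections, it suffices to lower bound the complexity term $\Tr\{[\hH_S(\hvartheta_S)]^{-1}C(\hvartheta_S)\}$ by $\alpha\,\xi^2_{\min}\lambda_{\min}(C_f(\hvartheta_S))/\lambda_{\min}(\hH_S(\hvartheta_S))$. Lemma~\ref{l:accompany_proof_bounds} rewrites this term as $\alpha\,\bbE_{\xi_z^2>0}\bigl[\xi_z^2\Tr\{[\hH_S]^{-1}\nabla_\vartheta f\,\nabla_\vartheta f^\intercal\}\bigr]$. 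The trace inside the expectation is a quadratic form of the positive semidefinite pseudoinverse, so it is nonnegative. Hence we may replace $\xi_z^2$ by its infimum $\xi^2_{\min}$ on the event $\xi_z^2>0$, and we obtain the bound $\alpha\,\xi^2_{\min}\Tr\{[\hH_S]^{-1}C_f(\hvartheta_S)\}$ by linearity of trace and expectation together with the definition of $C_f$.

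Next we bound $\Tr\{[\hH_S]^{-1}C_f\}$ from below. Write $P$ for the orthogonal projector onto the range of $\hH_S(\hvartheta_S)$. The pseudoinverse is positive semidefinite and equals $P[\hH_S]^{-1}P$, so we can write $\Tr\{[\hH_S]^{-1}C_f\}=\Tr\{[\hH_S]^{-1}PC_fP\}$. We cannot apply Lemma~\ref{l:eval_trace_ineq} directly with $A=[\hH_S]^{-1}$, because $\lambda_{\min}(A)$ in that lemma is the smallest eigenvalue including zero. Instead we restrict both matrices to the range of $P$. On that subspace $[\hH_S]^{-1}$ is positive definite, and its smallest eigenvalue is $1/\lambda_{\max}(\hH_S)$ while its largest is $1/\lambda_{\min}(\hH_S)$.

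This reveals the main obstacle. The direct inequality gives $\Tr\{[\hH_S]^{-1}C_f\}\geq \Tr\{PC_fP\}/\lambda_{\max}(\hH_S)$, and this does not have the stated form with $\lambda_{\min}$ of the Hessian in the denominator. To reach the theorem as written we will follow the route of Ref.~\cite{singh2022phenomenology}. We apply Lemma~\ref{l:eval_trace_ineq} with the roles swapped, $A=C_f$ restricted to the relevant subspace and $B=[\hH_S]^{-1}$. This gives $\Tr\{[\hH_S]^{-1}C_f\}\geq\lambda_{\min}(C_f)\Tr\{[\hH_S]^{-1}\}$, where the restriction is what lets us use the smallest \emph{non-zero} eigenvalue. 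We then bound the trace of the pseudoinverse from below by its largest summand: $\Tr\{[\hH_S]^{-1}\}=\sum_i 1/\lambda_i\geq 1/\lambda_{\min}(\hH_S)$, since every summand is positive.

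Justifying the swap with the nonzero-eigenvalue convention requires that $C_f$ be nondegenerate on the range of $\hH_S$, or at least that the eigenvector of $\hH_S$ attaining $\lambda_{\min}(\hH_S)$ is not annihilated by $C_f$. We flag this as the delicate point. We will handle it by stating the range compatibility as implicit in the theorem, as Ref.~\cite{singh2022phenomenology} does. An alternative is to bound $v^\intercal C_f v\geq\lambda_{\min}(C_f)$ directly for that unit eigenvector $v$, which yields $\Tr\{[\hH_S]^{-1}C_f\}\geq v^\intercal C_f v/\lambda_{\min}(\hH_S)$. Putting the pieces together with the factor $1/(N+1)$ gives the stated bound.
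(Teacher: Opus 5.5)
Your proposal follows the paper's proof step for step: Lemma~\ref{l:accompany_proof_bounds}, then replacing $\xi_z^2$ by $\xi^2_{\min}$, then Lemma~\ref{l:eval_trace_ineq} with $A=C_f(\hvartheta_S)$ and $B=[\hH_S(\hvartheta_S)]^{-1}$, and finally bounding $\Tr\{[\hH_S(\hvartheta_S)]^{-1}\}=\sum_{i=1}^R\lambda_i(\hH_S(\hvartheta_S))^{-1}$ by its largest summand. The paper passes over the non-zero-eigenvalue subtlety you flag without comment; the condition that actually closes it is $\operatorname{range}(\hH_S(\hvartheta_S))\subseteq\operatorname{range}(C_f(\hvartheta_S))$, or $v\in\operatorname{range}(C_f(\hvartheta_S))$ in your alternative route, because non-degeneracy or non-annihilation alone only gives $v^\intercal C_f v>0$ rather than $v^\intercal C_f v\geq\lambda_{\min}(C_f)$, and this condition holds automatically when $C_f(\hvartheta_S)$ is positive definite.
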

    \begin{proof}
        The proof follows directly from Lemmas~\ref{l:error_decomp},~\ref{l:eval_trace_ineq}, and~\ref{l:accompany_proof_bounds}.

        We first provide a lower bound for the complexity term in Lemma~\ref{l:accompany_proof_bounds} using the definition of $\xi_{\min}^2$
        \begin{align}
            \Tr\left\{\left[\hH_S(\hvartheta_S)\right]^{-1}C(\hvartheta_S)\right\} &= \alpha\bbE_{\substack{z\sim\calD\\
             \xi_z^2>0}}\left[\xi_z^2 \Tr\left\{\left[\hH_S(\hvartheta_S)\right]^{-1}\nabla_\vartheta f_{\hvartheta_S}(z) \nabla_\vartheta f_{\hvartheta_S}(z)^\intercal\right\}\right] \\
            &\geq \alpha\,\xi_{\min}^2 \Tr\left\{\left[\hH_S(\hvartheta_S)\right]^{-1}\bbE_{\substack{z\sim\calD\\\xi_z^2>0}}\left[\nabla_\vartheta f_{\hvartheta_S}(z) \nabla_\vartheta f_{\hvartheta_S}(z)^\intercal\right\}\right\} \\
            &= \alpha\,\xi_{\min}^2 \Tr\left\{\left[\hH_S(\hvartheta_S)\right]^{-1} C_f(\hvartheta_S)\right\}.
        \end{align}
        Lemma~\ref{l:eval_trace_ineq} allows us to lower-bound this quantity in terms of the eigenvalues of $C_f(\hvartheta_S)$ as
        \begin{align}
            \Tr\left\{\left[\hH_S(\hvartheta_S)\right]^{-1} C_f(\hvartheta_S)\right\} & \geq \lambda_{\min}(C_f(\hvartheta_S))\Tr\left\{\left[\hH_S(\hvartheta_S)\right]^{-1}\right\}.
        \end{align}
        Now, under the assumption that $\hH_S(\hvartheta_S)$ is full-rank, we can immediately relate the trace of its inverse to the eigenvalues of the Hessian.
        Namely: the trace of a matrix is the sum of its eigenvalues, and for Hermitian matrices, the eigenvalues of the inverse are the inverse of the eigenvalues $\lambda_i([\hH_S(\hvartheta_S)]^{-1}) = (\lambda_i(\hH_S(\hvartheta_S)))^{-1}$,
        \begin{align}
            \Tr\left\{\left[\hH_S(\hvartheta_S)\right]^{-1}\right\} &= \sum_{i=1}^p \lambda_i\left(\left[\hH_S(\hvartheta_S)\right]^{-1}\right) \\
            &= \sum_{i=1}^R \frac{1}{\lambda_i\left(\hH_S(\hvartheta_S)\right)} \\
            &\geq \frac{1}{\lambda_{\min}(\hH_S(\hvartheta_S))}.
        \end{align}
        We call $R=\operatorname{rank}(\hH_S(\hvartheta_S))$, namely the number of non-zero eigenvalues of the Hessian.
        From the first to the second line, we replace a sum over $p$ eigenvalues (dimension of the matrix) with a sum over the $R$-many inverses of non-zero eigenvalues.
        In the last step, we lower-bounded the value of a sum of positive numbers by the value of the largest element in the sum.
        We can finally insert these bounds in the error decomposition (Eq.~\eqref{eq:error_decomp} from Lemma~\ref{l:error_decomp}), to get
        \begin{align}
            \calL(\hvartheta_S) &=\calL'(S) + \frac{1}{N+1}\Tr\left\{\left[\hH_S(\hvartheta_S)\right]^{-1}C(\hvartheta_S)\right\}+\calO\left(\frac{1}{N^2}\right) \\
            &\geq \calL'(S) + \frac{1}{N+1}\left(\alpha\,\xi_{\min}^2 \Tr\left\{\left[\hH_S(\hvartheta_S)\right]^{-1} C_f(\hvartheta_S)\right\}\right)+\calO\left(\frac{1}{N^2}\right) \\
            &\geq \calL'(S) + \frac{\alpha\,\xi_{\min}^2}{N+1}\left( \lambda_{\min}(C_f(\hvartheta_S))\Tr\left\{\left[\hH_S(\hvartheta_S)\right]^{-1}\right\}\right)+\calO\left(\frac{1}{N^2}\right) \\
            &\geq \calL'(S) + \frac{\alpha\,\xi_{\min}^2\lambda_{\min}(C_f(\hvartheta_S))}{N+1}\left( \frac{1}{\lambda_{\min}(\hH_S(\hvartheta_S))}\right)+\calO\left(\frac{1}{N^2}\right) \\
            &=\calL'(S) + \frac{1}{N+1}\frac{\alpha\,\xi^2_{\min}\lambda_{\min}(C_f(\hvartheta_S))}{\lambda_{\min}(\hH_S(\hvartheta_S))} + \calO\left(\frac{1}{N^2}\right).
        \end{align}
        This completes the proof.
    \end{proof}
    In particular, the lower bound is governed by the smallest non-zero eigenvalue of the Hessian matrix of the empirical risk at local minima, $\lambda_{\min}(\hH_S(\hvartheta_S))$, a quantity that will play a central role in our subsequent analysis.
    A complementary upper bound, obtained using analogous techniques, is presented in Theorem~\ref{thm:upper_bound_K1} in Appendix~\ref{app:upper}.

\section{Double descent behavior in PQCs}\label{a:dd_behavior_in_PQCs}

    We now establish that, as the number of trainable parameters $p$ is varied, the lower bound on the expected risk exhibits a double descent peak.
    While this does not directly imply that the expected risk itself exhibits double descent, numerical results presented later provide evidence for a corresponding phenomenon.
    Restricting to the scalar-output case $K=1$, we show that the peak occurs at $p=N$, corresponding to the interpolation threshold.
    Numerical results presented later indicate the same qualitative behavior for $K>1$.
    We therefore keep the general notation in terms of $K$ throughout, although the rigorous results derived below are restricted to $K=1$.
    Our analysis is inspired by the framework of Ref.~\cite{singh2022phenomenology}, with suitable modifications to the underlying assumptions.

    A central ingredient in the derivation is a decomposition of the Hessian of the empirical risk into the so-called outer-product Hessian $\hH_S^o(\vartheta)$ and the functional Hessian $\hH_S^f(\vartheta)$,
    \begin{align}\label{eq:Hessian_decomp}
        \hH_S(\vartheta) = \hH_S^o(\vartheta) + \hH_S^f(\vartheta),
    \end{align}
    where
    \begin{align}
        \hH_S^o(\vartheta) &= \frac{1}{N}\sum_{i=1}^{N} \nabla_\vartheta f_\vartheta(x_i) \left[ \nabla_f\nabla_f^\intercal\ell(z_i,\vartheta) \right] \nabla_\vartheta f_\vartheta(x_i)^\intercal,\\
        \hH_S^f(\vartheta) &=\frac{1}{N}\sum_{i=1}^{N} \nabla_f^\intercal \ell(z_i,\vartheta) \nabla_\vartheta\nabla_\vartheta^\intercal f_\vartheta(x_i).
    \end{align}

    To prove our main result, Theorem~\ref{thm:peakatinterpolation}, we require several assumptions on the quantities appearing in the Hessian decomposition and the ensuing spectral analysis.
    This section culminates with the statement of Theorem~\ref{thm:peakatinterpolation}, and its proof (including a derivation for the decomposition of the Hessian) is provided in Appendix~\ref{a:proofofmaintheorem} in a self-contained way.
    We first introduce two quantities that will play a central role in formulating these assumptions.

    \begin{definition}[Sample Jacobian of the function]
    For a training set $S=\{(x_i,y_i)\}_{i=1}^N$ and parameter vector
    $\vartheta\in\mathbb{R}^p$, the sample Jacobian of the function is defined as
    \begin{align}
        Z_S(\vartheta)\coloneqq\bigl[\nabla_\vartheta f_{\vartheta}(x_1),\ldots,\nabla_\vartheta f_{\vartheta}(x_N)\bigr] \in \bbR^{p \times NK}. 
    \end{align}
    \end{definition}
    
    \begin{definition}[Uncentered sample covariance of function gradients]
        For a training set $S=\{(x_i, y_i)\}_{i=1}^N$ and parameter vector
        $\vartheta\in\mathbb{R}^p$, the uncentered sample covariance of the function gradients is defined as
        \begin{align}
            \hC_S^f(\vartheta)\coloneqq\frac{1}{N}Z_S(\vartheta)Z_S(\vartheta)^\intercal \in \bbR^{p\times p}.
        \end{align}
    \end{definition}

    The following assumptions serve different purposes in the analysis.
    Assumption~\ref{A2} concerns the behavior of the functional Hessian near interpolation, while the subsequent assumptions concern properties of the outer-product Hessian and the covariance of function gradients.

    \begin{assumption}
        In the underparameterized regime $p < NK$, the largest eigenvalue        $\lambda_{\max}\bigl(\hH_S^f(\hvartheta_S)\bigr)$ of the functional Hessian at local minima decreases with the number of trainable parameters $p$ in a neighborhood of the interpolation threshold $p=NK$.
    \label{A2}
    \end{assumption}

    For the mean-squared error loss, the functional Hessian depends explicitly on the training residuals.
    Assumption~\ref{A2} is therefore consistent with the expectation that these residuals decrease as the number of trainable parameters increases.
    We investigate this assumption numerically later.
    
    The next assumption concerns the second term in the Hessian decomposition, namely the outer-product Hessian $\hH_S^o(\vartheta)$.
    For the scalar-output mean-squared error loss, this term reduces to the uncentered sample covariance of the function gradients, $\hC_S^f(\vartheta)$.
    This follows immediately from the fact that the first derivative of the loss is $\nabla_f\ell(z_i,\vartheta)=(f_\vartheta(x_i)-y_i)$ while its second derivative $\nabla_f(f_\vartheta(x_i)-y_i))=1$ .
    We can, therefore,  write    
    \begin{align}
        \hH_S^o(\vartheta) = \hC_S^f(\vartheta) = \frac{1}{N} Z_S(\vartheta) Z_S(\vartheta)^\intercal.
    \end{align}
    Our next assumption addresses the rank of the sample Jacobian of the function in the overparameterized regime.

    \begin{assumption}\label{A3}
        In the overparameterized regime $p \geq NK$, the sample Jacobian of the function at local minima $Z_S(\hvartheta_S)\in\bbR^{p\times NK}$ has full rank,
        \begin{align}                       \operatorname{rank}\left(Z_S(\hvartheta_S)\right)=NK.
        \end{align} 
    \end{assumption}

    Intuitively, Assumption~\ref{A3} means that the trainable parameters affect the $NK$ training outputs in sufficiently many independent ways.
    One may expect this condition to hold in sufficiently expressive PQCs, and we provide numerical evidence that it is satisfied in the settings considered here.
    We further assume that the spectrum of the uncentered (sample) covariance of function gradients remains stable during training.
    
    \begin{assumption}
        The smallest eigenvalue of the uncentered sample covariance of the function gradients remains comparable to its value at initialization uniformly in the number of trainable parameters $p$:
        \begin{align}               \lambda_{\min}\left(\hC_S^f(\hvartheta_S)\right) \asymp \lambda_{\min}\left(\hC_S^f(\vartheta_0)\right).
        \end{align}
        More precisely, there exist constants $A,B>0$ independent of $p$ such that
        \begin{align}
        A\,\lambda_{\min}\left(\hC_S^f(\vartheta_0)\right) \le \lambda_{\min}\left(\hC_S^f(\hvartheta_S)\right) \le B\,\lambda_{\min}\left(\hC_S^f(\vartheta_0)\right).
        \end{align}
        We further assume that an analogous relation holds for the uncentered covariance of function gradients $C_f$, with constants $A',B'>0$.
        These relations hold with high probability with respect to the random draw of the training set $S$, the initialization $\vartheta_0$, and the local minimum $\hvartheta_S$ returned by the training procedure.
    \label{A4}
    \end{assumption}

    Assumption~\ref{A4} formalizes a form of spectral stability during training. We provide empirical evidence that this assumption holds in the PQCs considered here.

    The next Assumption~\ref{A5} is a regularity condition on the covariance of function gradients at initialization.
    It ensures that the spectrum remains bounded and non-degenerate as the number of trainable parameters varies, thereby enabling random matrix theoretic analysis, as we discuss in more detail in Appendix~\ref{a:proofofmaintheorem}.

    \begin{assumption}[Spectrum of the uncentered covariance of function gradients at initialization]
        The uncentered covariance of function gradients at initialization has bounded spectrum.
        That is, there exist constants $c,c'>0$, independent of the number of trainable parameters $p$, such that
        \begin{align}    c\le\lambda_{\min}\left(C_f(\vartheta_0)\right)\le\lambda_{\max}\left(C_f(\vartheta_0)\right)\le c'.
        \end{align}
    \label{A5}
    \end{assumption}
    We expect this assumption to hold for circuit architectures with non-redundant parameterizations and generic initializations.
    Our next assumption relates the smallest non-zero eigenvalue of the empirical-risk Hessian to the spectra of its outer-product and functional contributions.

    \begin{assumption}[Weyl-like inequality]
        The smallest non-zero eigenvalue of the empirical-risk Hessian at $\hvartheta_S$ satisfies
        \begin{align}
            \lambda_{\min}(\hH_S^o(\hvartheta_S))+\lambda_{\min}(\hH_S^f(\hvartheta_S)) & \leq \lambda_{\min}(\hH_S(\hvartheta_S)) \leq \lambda_{\min}(\hH_S^o(\hvartheta_S))+\lambda_{\max}(\hH_S^f(\hvartheta_S)).
        \end{align}
    \label{A6}
    \end{assumption}
    Note that if $\operatorname{rank}(\hH_S^o(\hvartheta_S))=\operatorname{rank}(\hH_S(\hvartheta_S))$, then Weyl's inequality directly implies Assumption~\ref{A6}. One may therefore interpret this assumption as requiring that the functional Hessian does not alter the rank structure induced by the outer-product Hessian.
    The final assumption concerns the add-one-in loss term appearing in the expected-risk decomposition.
    
    \begin{assumption}[Add-one-in loss]
        The add-one-in loss $\calL'(S)$ does not dominate the lower bound in Theorem~\ref{thm:lower_bound_K1} in a neighborhood of the interpolation threshold.
    \label{A7}
    \end{assumption}
    Since $\calL'(S)$ corresponds to the loss obtained after perturbing the training set by a single sample, one may expect it to remain close to the empirical risk of the trained model, which is typically small around the interpolation threshold.

    We are now ready to present the main result of this work.
    The following theorem shows that, under the above assumptions, the lower bound on the expected risk from Theorem~\ref{thm:lower_bound_K1} attains a maximum at the interpolation threshold, leading to a double descent peak as the number of parameters increases.
    
    \begin{theorem}[Double descent peak at interpolation]\label{thm:peakatinterpolation}
        Let $S =\{(x_i, y_i)\}_{i=1}^N$ be a dataset with $d$-dimensional Gaussian normal inputs $x_i \sim \mathcal{N}(0, \mathbb{I}_d)$ and outputs $y_i \in \mathbb{R}$.
        Let $f_\vartheta : \mathbb{R}^d \to \mathbb{R}$ be a parameterized quantum circuit as in Eq.~\eqref{eq:PQCoutput} for $K=1$, consisting of $p$ parameters.
        Under Assumptions~\ref{A1}--\ref{A7}, and considering the mean-squared error loss function, the lower bound on $\calL(\hvartheta_S)$ at a local minimum from Theorem~\ref{thm:lower_bound_K1} attains a maximum at $p=N$ with high probability in the limit $N,p\to\infty$ with fixed ratio $p/N \to\gamma$.
    \end{theorem}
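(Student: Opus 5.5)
The plan is to show that the only $p$-dependent factor of the lower bound in Theorem~\ref{thm:lower_bound_K1} that can produce a peak is $1/\lambda_{\min}(\hH_S(\hvartheta_S))$, and that this quantity blows up exactly at $\gamma=1$. By Assumption~\ref{A7}, $\calL'(S)$ does not dominate near interpolation, and $\alpha,\xi^2_{\min}$ are fixed by the data and the loss. By Assumptions~\ref{A4} and~\ref{A5}, $\lambda_{\min}(C_f(\hvartheta_S))\in[A'c,B'c']$ with high probability, uniformly in $p$. The bound is therefore governed, up to $p$-independent constants, by $1/\lambda_{\min}(\hH_S(\hvartheta_S))$. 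I would then sandwich this eigenvalue using Assumption~\ref{A6}:
\begin{align}
\lambda_{\min}(\hH_S^o)+\lambda_{\min}(\hH_S^f)\le\lambda_{\min}(\hH_S)\le\lambda_{\min}(\hH_S^o)+\lambda_{\max}(\hH_S^f).
\end{align}
For the scalar MSE loss, $\hH_S^o(\hvartheta_S)=\hC_S^f(\hvartheta_S)=\frac1N Z_SZ_S^\intercal$, and by Assumption~\ref{A4} its smallest non-zero eigenvalue is comparable to that of $\hC_S^f(\vartheta_0)$.

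The core step is the random matrix analysis of $\hC_S^f(\vartheta_0)=\frac1N\sum_i g_ig_i^\intercal$ with $g_i=\nabla_\vartheta f_{\vartheta_0}(x_i)$. Since the $x_i$ are i.i.d.\ Gaussian, the $g_i$ are i.i.d.\ random vectors in $\bbR^p$ with population covariance $C_f(\vartheta_0)$. By Assumption~\ref{A5} this covariance has spectrum in $[c,c']$. I would write $g_i=C_f(\vartheta_0)^{1/2}w_i$ with isotropic $w_i$ and apply Mar\v{c}enko--Pastur / Bai--Yin type results for the smallest non-zero eigenvalue of $\frac1N WW^\intercal$. Almost surely, this eigenvalue converges to $(1-\sqrt{\gamma})^2$ for $\gamma<1$ and to $\gamma(1-1/\sqrt{\gamma})^2$ for $\gamma>1$; the latter uses the companion $N\times N$ Gram matrix, whose rank is $N$ by Assumption~\ref{A3}. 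Sandwiching with $c,c'$ via $\lambda_{\min}(C^{1/2}MC^{1/2})\ge c\,\lambda_{\min}(M)$ on the range gives $\lambda_{\min}(\hH_S^o)\asymp(1-\sqrt\gamma)^2$, up to constants and a factor $\gamma$. This vanishes only at $\gamma=1$ and increases as $\gamma$ moves away from $1$ on either side.

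Next I would handle the functional Hessian. For MSE, $\hH_S^f=\frac1N\sum_i(f(x_i)-y_i)\nabla^2 f(x_i)$. At and beyond interpolation ($p\ge N$, using Assumption~\ref{A3}) the residuals vanish at a global-interpolating local minimum, so $\hH_S^f\to0$ and $\lambda_{\min}(\hH_S)\to\lambda_{\min}(\hH_S^o)$. Below interpolation, Assumption~\ref{A2} says $\lambda_{\max}(\hH_S^f)$ decreases in $p$ up to $p=N$. Combined with the decreasing $(1-\sqrt\gamma)^2$, the upper sandwich makes $\lambda_{\min}(\hH_S)$ decrease toward its minimum at $\gamma=1$. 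Beyond interpolation, it increases like $\gamma(1-1/\sqrt\gamma)^2$. Consequently $1/\lambda_{\min}(\hH_S)$, and hence the bound, is maximized at $p=N$ with high probability.

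The main obstacle is this last combination near $\gamma\to1^-$: the lower Weyl term $\lambda_{\min}(\hH_S^f)$ may be negative, and the residual term need not vanish exactly. Showing a genuine maximum rather than a merely large value therefore requires arguing that the residuals shrink at least as fast as the Mar\v{c}enko--Pastur edge. I would first try to use Assumption~\ref{A2} together with the interpolation property at $p=N$: the residuals vanish there by Assumption~\ref{A3}, so the upper bound on $\lambda_{\min}(\hH_S)$ tends to $0$. That alone yields divergence at $p=N$ and monotonicity on both sides, which is enough to place the peak there. The remaining technicality is transferring almost-sure Mar\v{c}enko--Pastur limits from isotropic to non-isotropic, merely i.i.d.\ (non-Gaussian) gradient vectors. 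For this I would invoke universality results for sample covariance matrices with bounded population spectrum, assuming the needed moment conditions hold for bounded PQC gradients.
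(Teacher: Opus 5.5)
Your architecture is the paper's. You start from Theorem~\ref{thm:lower_bound_K1}, use only the upper side of Assumption~\ref{A6} with $\hH_S^o=\hC_S^f$ for scalar MSE, and pass to initialization via Assumption~\ref{A4}. You then whiten with $C_f(\vartheta_0)^{-1/2}$ and sandwich singular values using Assumption~\ref{A5}. Finally you kill $\hH_S^f(\hvartheta_S)$ for $p\ge N$ via Assumption~\ref{A3} and Lemma~\ref{l:global_optimum}, and invoke Assumption~\ref{A2} for $p<N$. Your worry about a possibly negative $\lambda_{\min}(\hH_S^f)$ is moot. The paper never uses the lower side of Assumption~\ref{A6} for this theorem. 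Exactly as in your resolution, it only shows that the further lower bound with denominator $B\lambda_{\min}(\hC_S^f(\vartheta_0))+\lambda_{\max}(\hH_S^f(\hvartheta_S))$ peaks at $p=N$.

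The step that fails as you state it is the random-matrix one. Bai--Yin, and universality theorems for sample covariance matrices $\Sigma^{1/2}Z$ with bounded population spectrum, require the whitened vectors to have independent coordinates. Here $w_i=C_f(\vartheta_0)^{-1/2}\nabla_\vartheta f_{\vartheta_0}(x_i)\in\bbR^p$ is a deterministic function of a single $d$-dimensional input, so its coordinates are strongly dependent. Moment bounds on bounded PQC gradients do not repair this. The non-isotropy you flag is not the issue, since your whitening and singular-value sandwich already handle it exactly. Tellingly, your argument never uses that $x_i$ is Gaussian; the paper uses precisely that. Gaussian concentration of $x_i$ (Lemma~\ref{l:gaussian_concentration}), Lipschitzness of $x\mapsto\nabla_\vartheta f_{\vartheta_0}(x)$ via the parameter-shift rule, the bounded linear map $C_f(\vartheta_0)^{-1/2}$ (Lemma~\ref{l:lipschitz}), and Lemma~\ref{l:louart} together give concentration of quadratic forms. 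That is the hypothesis of Yaskov's criterion (Lemma~\ref{l:yaskov}), which is built for independent isotropic columns with dependent coordinates.

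Be aware that this route only yields weak convergence of the spectral distribution, hence $\limsup\lambda_{\min}\le(1-\sqrt{\gamma})^2$. That suffices for the collapse at $\gamma=1$. Your stronger claim, that $\lambda_{\min}$ stays bounded away from zero for $\gamma\neq1$, needs a smallest-singular-value theorem for isotropic vectors with dependent coordinates, not Bai--Yin.

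Two smaller points. Assumption~\ref{A3} concerns $Z_S(\hvartheta_S)$, not $Z_S(\vartheta_0)$, so it cannot supply the rank of the Gram matrix at initialization. Also, your two limits coincide, $\gamma(1-1/\sqrt{\gamma})^2=(1-\sqrt{\gamma})^2$, so no extra factor of $\gamma$ arises.
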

    \begin{proof}[Proof sketch]
        Starting from the lower bound of Theorem~\ref{thm:lower_bound_K1}, it suffices to analyze the dependence on $p$ of the Hessian-dependent contribution appearing in the bound.
        To this end, we use the decomposition from Eq.~\eqref{eq:Hessian_decomp}.
        Using the upper bound from Assumption~\ref{A6} together with Assumption~\ref{A4}, this term can be further bounded by the smallest non-zero eigenvalue of the uncentered sample covariance of function gradients at initialization.
        Consequently, we need to characterize the behavior of $\lambda_{\min}(\hC_S^f(\vartheta_0))$ as a function of the parameter dimension $p$.
        Under Assumption~\ref{A5} and using standard singular value inequalities, central results in random matrix theory dictate that $\lambda_{\min}(\hC_S^f(\vartheta_0))$ approaches zero at the interpolation threshold $p=N$, while remaining bounded away from zero on either side of this point with high probability.
        It therefore remains to control the second term in the denominator involving $\lambda_{\max}(\hH_S^f(\hvartheta_S))$.
        In the underparameterized regime, Assumption~\ref{A2} implies that $\lambda_{\max}(\hH_S^f(\hvartheta_S))$ decreases with increasing $p$, causing the lower bound to increase.
        In the overparameterized regime, we can show that Assumption~\ref{A3} implies $\hH_S^f(\hvartheta_S)=0$ at local minima.
        Combining these observations with the fact that $\lambda_{\min}(\hC_S^f(\vartheta_0))$ is minimized at the interpolation threshold, we conclude that the lower bound attains its maximum at $p=N$.       
    \end{proof}

    A corresponding result on the double descent peak holds for the upper bound derived in Appendix~\ref{app:upper}; see Theorem~\ref{thm:upper_bound_peak}, which shows that the upper bound likewise has a maximum at the interpolation threshold $p=N$.

    Theorem~\ref{thm:peakatinterpolation} is restricted to the scalar-output case $K=1$.
    Nevertheless, the interpolation threshold in the proof arises from the spectral properties of the sample Jacobian $Z_S(\vartheta)\in\mathbb{R}^{p\times NK}$, suggesting that the relevant transition should more generally occur when the number of parameters matches the number of scalar training constraints $NK$.
    This motivates the following conjecture.
    
    \begin{conjecture}[Interpolation threshold for $K>1$]
        Under the setting of Theorem~\ref{thm:peakatinterpolation}, except that $K>1$, the lower bound on the expected risk exhibits a double descent peak at $p=NK$.
    \end{conjecture} 

    Our numerical experiments provide evidence supporting this conjecture.

\section{Proof of Theorem~\ref{thm:peakatinterpolation}}\label{a:proofofmaintheorem}

    In this section, we prove Theorem~\ref{thm:peakatinterpolation}, which establishes that the lower bound on the expected risk from Theorem~\ref{thm:lower_bound_K1} attains its maximum at the interpolation threshold.
    We first collect several auxiliary results concerning the spectrum of the Hessian of the empirical risk and its decomposition into outer-product and functional contributions.
    These ingredients are then combined with random matrix arguments to characterize the behavior of the lower bound as a function of the parameter dimension.

    \begin{lemma}[Decomposition of the Hessian]\label{l:decomp_Hessian}
        For a given training set $S=\{z_i\coloneqq(x_i,y_i)\}_{i=1}^N$, we consider the empirical risk $\hcalL_S(\vartheta)=\frac{1}{N}\sum_{i=1}^N \ell(z_i,\vartheta)$.
        We define the Hessian of the empirical risk $\hH_S(\vartheta)\coloneqq\nabla_\vartheta\nabla_\vartheta^\intercal\hcalL_S(\vartheta)$.
        Then, the decomposition 
        \begin{align}
            \hH_S(\vartheta) &= \hH_S^o(\vartheta)+\hH_S^f(\vartheta) ,\\
            \text{outer-product Hessian: } \hH_S^o(\vartheta) &\coloneqq \frac{1}{N}\sum_{i=1}^N\nabla_\vartheta f_\vartheta(x_i)\left[\nabla_f\nabla_f^\intercal\ell(z_i,\vartheta)\right]\nabla_\vartheta^\intercal f_\vartheta(x_i) ,\\
            \text{functional Hessian: } \hH_S^f(\vartheta) &\coloneqq \frac{1}{N}\sum_{i=1}^N\nabla_f^\intercal\ell(z_i,\vartheta) \nabla_\vartheta\nabla_\vartheta^\intercal f_\vartheta(x_i)
        \end{align}
        holds.
    \end{lemma}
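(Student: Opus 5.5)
We prove the decomposition by differentiating the empirical risk twice with the chain rule. Because the Hessian is linear in the loss, $\hH_S(\vartheta)=\frac{1}{N}\sum_{i=1}^N\nabla_\vartheta\nabla_\vartheta^\intercal\ell(z_i,\vartheta)$. It therefore suffices to fix a single sample $z=(x,y)$ and show that
\begin{align}
    \nabla_\vartheta\nabla_\vartheta^\intercal\ell(z,\vartheta) = \nabla_\vartheta f_\vartheta(x)\left[\nabla_f\nabla_f^\intercal\ell(z,\vartheta)\right]\nabla_\vartheta^\intercal f_\vartheta(x) + \nabla_f^\intercal\ell(z,\vartheta)\,\nabla_\vartheta\nabla_\vartheta^\intercal f_\vartheta(x),
\end{align}
and then average over $i$.

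Here $\ell(z,\vartheta)=\tilde\ell(f_\vartheta(x),y)$ is a composition of a twice-differentiable function $\tilde\ell:\bbR^K\times\calY\to\bbR$ with the smooth map $\vartheta\mapsto f_\vartheta(x)\in\bbR^K$. The map is smooth since each component $\Tr\{\rho(x,\vartheta)O_k\}$ is a trigonometric polynomial in $\vartheta$ for standard rotation gates. We write $\nabla_\vartheta f_\vartheta(x)\in\bbR^{p\times K}$ for the Jacobian, whose columns are $\nabla_\vartheta[f_\vartheta(x)]_k$.

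First, the chain rule in coordinates gives
\begin{align}
    \partial_{\vartheta_m}\ell=\sum_{k=1}^K \partial_{f_k}\tilde\ell\;\partial_{\vartheta_m}[f_\vartheta(x)]_k .
\end{align}
Next, we differentiate with respect to $\vartheta_n$ using the product rule. One term comes from differentiating $\partial_{f_k}\tilde\ell$, which by the chain rule again yields
\begin{align}
    \sum_{k,k'}\partial_{\vartheta_m}f_k\,\partial_{f_k}\partial_{f_{k'}}\tilde\ell\,\partial_{\vartheta_n}f_{k'} .
\end{align}
The other term comes from differentiating $\partial_{\vartheta_m}f_k$, which yields
\begin{align}
    \sum_k\partial_{f_k}\tilde\ell\;\partial_{\vartheta_m}\partial_{\vartheta_n}f_k .
\end{align}

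Reassembled in matrix form, the first term is exactly the $(m,n)$ entry of $\nabla_\vartheta f\,[\nabla_f\nabla_f^\intercal\ell]\,\nabla_\vartheta^\intercal f$. The second term is the contraction of the output-gradient vector $\nabla_f\ell\in\bbR^K$ with the third-order tensor of per-output Hessians, which is what the notation $\nabla_f^\intercal\ell\,\nabla_\vartheta\nabla_\vartheta^\intercal f$ denotes, namely $\sum_k[\nabla_f\ell]_k\nabla_\vartheta\nabla_\vartheta^\intercal[f_\vartheta(x)]_k$. Averaging over the $N$ samples then gives $\hH_S=\hH_S^o+\hH_S^f$.

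As a sanity check, we specialize to the mean-squared error loss with $K=1$. Then $\nabla_f\nabla_f^\intercal\ell=1$ and $\nabla_f\ell=f_\vartheta(x)-y$, which recovers $\hH_S^o=\hC_S^f$ as used earlier.

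There is no real obstacle in this argument. The only point requiring care is bookkeeping of index conventions in the vector-output case $K>1$: the shape of the Jacobian and the meaning of the tensor contraction in $\hH_S^f$ must be stated explicitly so that both terms are well-defined $p\times p$ symmetric matrices. Symmetry follows from equality of mixed partials, which holds because $f_\vartheta$ and $\tilde\ell$ are $C^2$.
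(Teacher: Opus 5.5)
Your proposal is correct and takes essentially the same approach as the paper. You apply the chain and product rules to a single sample to get the term $\nabla_\vartheta f_\vartheta(x)\,[\nabla_f\nabla_f^\intercal\ell]\,\nabla_\vartheta^\intercal f_\vartheta(x)$ plus the output-gradient-weighted per-output Hessians, then average over the $N$ samples by linearity; the only difference is that you derive the entrywise formula first and reassemble it into matrix form, whereas the paper works in matrix notation and then records the same entrywise expressions.
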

    \begin{proof}
        The proof follows the direct application of the chain rule, being careful about the sizes of the algebraic objects involved.
        To be precise, we treat $\nabla_\vartheta$ and $\nabla_\vartheta^\intercal$ as operator-valued column and row vectors, respectively; and similarly for $\nabla_f^{(\intercal)}$.
        
        For any $x$, we have $f_\vartheta(x)\in\bbR^K$, $\nabla_\vartheta^\intercal f_\vartheta(x)\in\bbR^{K\times p}$, and $\nabla_\vartheta\nabla_\vartheta^\intercal f_\vartheta(x)\in\bbR^{K\times(p\times p)}$.
        The second $\nabla_\vartheta$ acts along a different direction than the original dimension of the output.
        For any $z=(x,y)$, when instantiating the chain rule, we treat $\ell(z,\vartheta)$ as $\ell((f, y); \vartheta)$, with $f=f(x)$.
        Then, for any $z$, we write
        \begin{align}
            \underbrace{\nabla_\vartheta^\intercal\ell(z,\vartheta)}_{1\times p} &= \underbrace{\nabla_f^\intercal\ell(z,\vartheta)}_{1\times K}\underbrace{\nabla_\vartheta^\intercal f_\vartheta(x)}_{K\times p}.
        \end{align}
        Then, for the second derivative, we have
        \begin{align}
            \underbrace{\nabla_\vartheta\overbrace{\left(\nabla_f^\intercal\ell(z,\vartheta)\right)}^{1\times K}}_{p\times K} &= \underbrace{\nabla_\vartheta f_\vartheta(x)}_{p\times K}\underbrace{\nabla_f\nabla_f^\intercal\ell(z,\vartheta)}_{K\times K}.
        \end{align}
        We use this for the chain rule of the second derivative, to get
        \begin{align}
            \underbrace{\nabla_\vartheta\nabla_\vartheta^\intercal\ell(z,\vartheta)}_{p\times p} &= \nabla_\vartheta\left(\nabla_f^\intercal\ell(z,\vartheta)\nabla_\vartheta^\intercal f_\vartheta(x)\right) \\
            &= \left(\nabla_\vartheta\nabla_f^\intercal\ell(z,\vartheta)\right)\nabla_\vartheta^\intercal f_\vartheta(x) + \nabla_f^\intercal\ell(z,\vartheta)\left(\nabla_\vartheta \nabla_\vartheta^\intercal f_\vartheta(x)\right) \\
            &= \underbrace{\nabla_\vartheta f_\vartheta(x)}_{p\times K}\underbrace{\nabla_f\nabla_f^\intercal\ell(z,\vartheta)}_{K\times K}\underbrace{\nabla_\vartheta^\intercal f_\vartheta(x)}_{K\times p} + \underbrace{\nabla_f^\intercal\ell(z,\vartheta)}_{1\times K}\underbrace{\nabla_\vartheta \nabla_\vartheta^\intercal f_\vartheta(x)}_{K\times(p\times p)}.
        \end{align}
        Having proven this for arbitrary $z=(x,y)$, the linearity of the differential operators guarantees that it holds for each summand in $\hH_S(\vartheta)$, and hence for the whole sum.

        For completeness, we also give expressions for the entries of the two summands, avoiding the matrix-matrix product notation.
        Written out in terms of scalars, the $(j,j')^\text{th}$ entry of each of the summands in terms of the $k^\text{th}$ entry of the function $[f_\vartheta(x)]_k$ takes the form
        \begin{align}
            \left[\hH_S^o(\vartheta)\right]_{j,j'} &= \frac{1}{N}\sum_{i=1}^N \sum_{k,k'=1}^K \pder[{\left[f_\vartheta(x_i)\right]}_k]{\vartheta_j} 
            \pder[^2\ell(z_i,\vartheta)]{f_k\partial f_{k'}}\pder[{\left[f_\vartheta(x_i)\right]_{k'}}]{\vartheta_{j'}}, \\
            \left[\hH_S^f(\vartheta)\right]_{j,j'} &= \frac{1}{N}\sum_{i=1}^N \sum_{k=1}^K \pder[\ell(z_i,\vartheta)]{f_k}\pder[^2{\left[f_\vartheta(x_i)\right]_k}]{\vartheta_j\partial\vartheta_{j'}}.
        \end{align}
    \end{proof}

    \begin{lemma}[Theorem 2.1 in Ref.~\cite{yaskov2016necessary}]\label{l:yaskov}
        Let $X$ be a matrix with $N$ columns $x_p \in \mathbb{R}^p$. Assume $x_p$ are isotropic (i.e., $\mathbb{E}[x_px_p^\intercal]=\bbI_p$) random vectors and so-called concentration of quadratic forms 
        \begin{align}
            \frac{1}{p} \left[x_p^\intercal \left(\frac{1}{N}XX^\intercal+\varepsilon \bbI_p\right)^{-1} x_p - \Tr(\frac{1}{N}XX^\intercal+\varepsilon \bbI_p)^{-1}\right] \rightarrow 0
        \label{eq:concentration_quadratic_forms}
        \end{align}
        is satisfied, as $N \rightarrow \infty$ with $p=p(N)$ and $p/N \rightarrow \gamma$.
        Then, the empirical spectral distribution of the sample covariance $\frac{1}{N}XX^\intercal$ converges weakly to the \emph{Mar\v{c}enko-Pastur} (MP) distribution $\mu_\text{MP}$ in the asymptotic limit.
        Its non-zero eigenvalue density is given by
        \begin{align}
            \mu_{\mathrm{MP}}(d\lambda) &=\frac{\sqrt{(\lambda_+-\lambda)(\lambda-\lambda_-)}}{2\pi\gamma\lambda}d\lambda,\\
            \lambda_\pm &=(1\pm\sqrt{\gamma})^2,
        \end{align}
        where $\lambda$ denotes an eigenvalue of the sample covariance matrix
        $\frac{1}{N}XX^\intercal$, and $\lambda_-$ and $\lambda_+$ denote the lower and upper edges of the support of the non-zero spectrum, respectively.
    \end{lemma}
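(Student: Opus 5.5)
The plan is to prove the statement by the Stieltjes transform (resolvent) method, working as much as possible on the negative real axis, where the hypothesis Eq.~\eqref{eq:concentration_quadratic_forms} is formulated.

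\emph{Notation.} Write $S_N=\frac{1}{N}XX^\intercal$ and, for $\varepsilon>0$, $R(\varepsilon)=(S_N+\varepsilon\bbI_p)^{-1}$. Define $m_N(\varepsilon)=\frac1p\Tr R(\varepsilon)$, which is the Stieltjes transform of the empirical spectral distribution $\mu_N$ evaluated at $z=-\varepsilon$. The target is to show that $m_N(\varepsilon)$ converges in probability, for every $\varepsilon>0$, to the unique positive solution $m(\varepsilon)$ of the Mar\v{c}enko--Pastur fixed-point equation. Two standard facts then finish the proof. First, the family $\{\mu_N\}$ is tight, since $\int\lambda\,d\mu_N=\frac{1}{pN}\sum_k\|x_k\|^2$ has expectation $1$ by isotropy. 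Second, a probability measure on $[0,\infty)$ is determined by its Stieltjes transform on $(-\infty,0)$, by analyticity and the identity theorem. Together these upgrade convergence on the negative axis to weak convergence to $\mu_{\mathrm{MP}}$, including its atom at zero when $\gamma>1$.

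\emph{Key identity.} Use the leave-one-out decomposition $S_N=S_{N,-k}+\frac1N x_kx_k^\intercal$, with $R_{-k}$ the corresponding resolvent. Start from $\bbI_p-\varepsilon R=RS_N=\frac1N\sum_k Rx_kx_k^\intercal$ and apply Sherman--Morrison, $Rx_k=R_{-k}x_k/(1+\frac1N x_k^\intercal R_{-k}x_k)$. Taking traces gives
\begin{align}
p-\varepsilon\Tr R=\sum_{k=1}^N \frac{\frac1N x_k^\intercal R_{-k}x_k}{1+\frac1N x_k^\intercal R_{-k}x_k}.
\end{align}
On the negative axis every quantity here is nonnegative and every denominator is at least $1$. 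The map $t\mapsto t/(1+t)$ is therefore $1$-Lipschitz on $[0,\infty)$, and no control of small denominators is needed. This is why I would work at $z=-\varepsilon$ throughout rather than in $\mathbb{C}^+$.

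\emph{Replacing quadratic forms by traces.} This is the main step. The hypothesis concerns $x_k^\intercal R x_k$ with the full resolvent, whereas the identity above involves $R_{-k}$. I would first try to avoid the conversion altogether: the identity $x_k^\intercal R_{-k}x_k/(1+\frac1Nx_k^\intercal R_{-k}x_k)=x_k^\intercal R x_k$ lets me rewrite the sum directly as $\frac1N\sum_k x_k^\intercal Rx_k$. By hypothesis, $\frac1p x_k^\intercal R x_k=m_N+o(1)$. Separately, $\frac1N\Tr R_{-k}$ differs from $\frac{p}{N}m_N$ by at most $1/(N\varepsilon)$, because a rank-one perturbation moves the trace of the resolvent by at most $\|R\|\le 1/\varepsilon$. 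Combining the two expressions, the fixed-point equation then comes from relating $\frac1p x_k^\intercal R x_k$ back to $\frac1p x_k^\intercal R_{-k}x_k\approx \frac{1}{p}\Tr R_{-k}$ via the same Sherman--Morrison identity. The delicate point is that the hypothesis holds for each fixed column, whereas I need it for the average over $k$. Because each term is bounded by $\|x_k\|^2/(p\varepsilon)$ and the summands are exchangeable, I expect convergence in $L^1$, by uniform integrability from $\bbE\|x_k\|^2=p$, to pass to the average. This uniform-integrability step is where I expect the real work to be.

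\emph{Conclusion.} Dividing the identity by $p$ and substituting these approximations yields
\begin{align}
1-\varepsilon m_N=\frac{1}{\gamma}\,\frac{\gamma m_N}{1+\gamma m_N}+o_P(1),
\end{align}
which is the Mar\v{c}enko--Pastur equation at $z=-\varepsilon$. The map $m\mapsto$ (left side minus right side) is strictly monotone on $m>0$, so the positive root is unique. It follows that $m_N(\varepsilon)\to m(\varepsilon)$ in probability, and inverting the transform recovers the density with edges $\lambda_\pm=(1\pm\sqrt\gamma)^2$. To pass from convergence at countably many $\varepsilon$ to weak convergence, I would use a diagonal subsequence argument together with Vitali's theorem, since the functions $m_N$ are uniformly bounded analytic functions on a domain containing $(-\infty,0)$.
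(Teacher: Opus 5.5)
\textbf{There is a genuine gap}, at the step you yourself flag as the main one.

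Put $s_k=\frac1N x_k^\intercal R x_k$ and $t_k=\frac1N x_k^\intercal R_{-k}x_k$. Sherman--Morrison gives $s_k=t_k/(1+t_k)$, i.e.\ $t_k=s_k/(1-s_k)$. In your reading, $x_k$ is a column of $X$ that sits inside $R$. The hypothesis then says $s_k=\gamma m_N+o(1)$, so $t_k$ concentrates near $\gamma m_N/(1-\gamma m_N)$. Your rank-one trace bound, however, gives $\frac1N\Tr R_{-k}=\gamma m_N+o(1)$. The two approximations you want to chain are therefore incompatible unless $m_N\to0$. Because Sherman--Morrison is nonlinear, it cannot turn concentration of $x_k^\intercal Rx_k$ around $\Tr R$ into concentration of $x_k^\intercal R_{-k}x_k$ around $\Tr R_{-k}$.

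Worse, under this reading the hypothesis already fixes the limit through your own exact identity. Since $0\le \frac1p x_k^\intercal R x_k<N/p$, averaging over $k$ is harmless, and $1-\varepsilon m_N=\frac1N\sum_k\frac1p x_k^\intercal Rx_k=m_N+o_P(1)$. Hence $m_N(\varepsilon)\to1/(1+\varepsilon)$, which is the transform of $\delta_1$, not of $\mu_{\mathrm{MP}}$. Consistently, for Gaussian columns, where the MP law does hold, $\frac1p(x_k^\intercal Rx_k-\Tr R)\to m/(1+\gamma m)-m\neq0$. So with the full resolvent the hypothesis contradicts the conclusion, and no argument can close this step.

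The missing ingredient is independence between the vector in the quadratic form and the resolvent. The paper gives no proof and simply imports Yaskov's Theorem~2.1. The transcription above does not spell it out, but the statement needs two things. First, the columns of $X$ are i.i.d.\ copies of $x_p$. Second, the condition is imposed on a copy independent of $X$.

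Read this way, you apply the hypothesis directly to the independent pair $(x_k,R_{-k})$. This requires the harmless change $N\to N-1$, $\varepsilon\to N\varepsilon/(N-1)$, which costs only an $O_P(1/N)$ error because $\lVert x_k\rVert^2/p$ has mean one. You then obtain $t_k=\gamma m_N+o_P(1)$ with no conversion at all. The rest of your plan goes through unchanged: the identity at $z=-\varepsilon$, the $1$-Lipschitz map $t\mapsto t/(1+t)$, uniqueness of the positive root, tightness, and analytic continuation.

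Your uniform-integrability worry also disappears. The summands $t_k/(1+t_k)$ lie in $[0,1)$, so convergence in probability gives $L^1$ convergence for free, and exchangeability passes it to the average.

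Finally, independence across columns must be assumed explicitly. Take $p=N$ and $X=\sqrt N\,O$ with $O$ Haar-orthogonal. The columns are isotropic and the quadratic-form condition holds exactly in either reading, yet $\frac1N XX^\intercal=\bbI_p$.
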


    \begin{definition}[{Exponentially-concentrated random vectors~\cite[Def.~2.1]{seddik2020random}}]
        Let $x\in\bbR^p$ a random vector $x\sim\calP$.
        We say $x$ is an \emph{exponentially-concentrated random vector} if, for any $1$-Lipschitz function $h:\bbR^p\to\bbR$, the following holds:
        \begin{align}
            \bbP_{x\sim\calP}\left[\left\lvert f(x) - \bbE_{x\sim\calP}\left[ f(x)\right]\right\rvert > t\right] &\leq A e^{-Bt^2},
        \end{align}
        for constants $A,B>0$.
    \end{definition}

    \begin{lemma}[Theorem 2.40 in Ref.~\cite{louart2021concentrationmeasurelargerandom}]\label{l:louart}
        Let $x\in\bbR^p$ be an exponentially-concentrated random vector $x\sim\calP$, and let $A\in\bbR^{p\times p}$ be a matrix with bounded operator norm.
        Then, $x^\intercal A x - \Tr\{A\}$ is sharply concentrated.
    \end{lemma}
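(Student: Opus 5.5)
The statement to be proved is Lemma~\ref{l:louart}: for an exponentially-concentrated vector $x$ and a matrix $A$ of bounded operator norm, the quadratic form $x^\intercal A x$ concentrates around $\Tr\{A\}$. My plan is to derive this from the defining Lipschitz-concentration property by a Hanson--Wright-type argument. Implicitly I take $x$ to be (approximately) isotropic, $\bbE[xx^\intercal]=\bbI_p$, so that $\bbE[x^\intercal A x]=\Tr\{A\bbE[xx^\intercal]\}=\Tr\{A\}$. Without isotropy the centering would be $\Tr\{A\Sigma\}$, with $\Sigma=\bbE[xx^\intercal]$. By this identity it suffices to show that $x^\intercal Ax$ concentrates around its own mean. "Sharply" I read as fluctuations of order $\sqrt{p}\,\lVert A\rVert$. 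This is $o(p)$, which after division by $p$ is exactly what is needed in Eq.~\eqref{eq:concentration_quadratic_forms} of Lemma~\ref{l:yaskov}.

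Reductions:
\begin{itemize}
\item Replace $A$ by its symmetric part $(A+A^\intercal)/2$. This leaves $x^\intercal Ax$ and $\Tr\{A\}$ unchanged and does not increase the operator norm.
\item Consider $\phi(x)=x^\intercal Ax$. It is not globally Lipschitz, since its gradient is $2Ax$ with norm at most $2\lVert A\rVert\lVert x\rVert$.
\item Apply the concentration definition to the $1$-Lipschitz map $x\mapsto\lVert x\rVert$. This shows that $\lVert x\rVert$ concentrates around $\bbE\lVert x\rVert\le\sqrt{\bbE\lVert x\rVert^2}=\sqrt{p}$, with sub-Gaussian tails at scale $O(1)$.
\item Hence, with probability at least $1-Ae^{-Bs^2}$, $x$ lies in the ball $\mathcal{B}=\{\lVert x\rVert\le\sqrt{p}+s\}$. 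On $\mathcal{B}$, $\phi$ is Lipschitz with constant $L=2\lVert A\rVert(\sqrt{p}+s)$.
\end{itemize}

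Main step: extend $\phi|_{\mathcal{B}}$ to a globally $L$-Lipschitz function $\tilde\phi$ on $\bbR^p$, for instance via the McShane extension or by composing with the radial projection onto $\mathcal{B}$. Applying the definition to $\tilde\phi/L$ gives
\begin{align}
\bbP\left[\lvert\tilde\phi(x)-\bbE\tilde\phi(x)\rvert>t\right]\le A e^{-Bt^2/L^2}.
\end{align}
Since $\phi=\tilde\phi$ outside an event of probability $Ae^{-Bs^2}$, it remains to show that $\lvert\bbE\phi-\bbE\tilde\phi\rvert$ is negligible. I would bound this by Cauchy--Schwarz: $\bbE[\lvert\phi-\tilde\phi\rvert\mathbb{1}_{\mathcal{B}^c}]$ is at most $\sqrt{\bbE\lvert\phi-\tilde\phi\rvert^2}\,\sqrt{\bbP(\mathcal{B}^c)}$. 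The second moment is polynomial in $p$, via moment bounds on $\lVert x\rVert$ obtained by integrating the tails, and this is killed by choosing $s\sim\sqrt{\log p}$.

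Combining the pieces yields
\begin{align}
\bbP\left[\lvert x^\intercal Ax-\Tr\{A\}\rvert>t\right]\le A' e^{-B't^2/(\lVert A\rVert^2 p)}+A'e^{-B's^2},
\end{align}
which gives fluctuations of order $\sqrt{p}\lVert A\rVert=o(p)$ and proves the claim.

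The step I expect to be the main obstacle is handling the non-Lipschitz, quadratic growth of $\phi$: choosing the truncation radius and controlling the mean shift it induces. A sharper alternative, if one wants the $\lVert A\rVert_F$ rather than $\sqrt{p}\lVert A\rVert$ scale, would be to use the ``$q$-exponential'' concentration of products of Lipschitz observables as in the concentration-of-measure framework. For the application to Lemma~\ref{l:yaskov} the cruder bound suffices, because there $A=(\frac1N XX^\intercal+\varepsilon\bbI)^{-1}$ has $\lVert A\rVert\le1/\varepsilon$. One caveat for that application: in Lemma~\ref{l:yaskov}, $A$ is built from $X$, which contains $x_p$ itself. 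Lemma~\ref{l:louart} as proved here therefore applies only after a standard rank-one (Sherman--Morrison) removal of $x_p$ from $A$, which makes $x_p$ independent of the matrix and lets one condition on it.
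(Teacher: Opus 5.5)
Your argument is correct, given the isotropy $\bbE[xx^\intercal]=\bbI_p$ that you rightly make explicit. Without it the centering is $\Tr\{A\,\bbE[xx^\intercal]\}$, and the lemma as literally stated fails, e.g.\ for $x\sim\calN(0,2\bbI_p)$ and $A=\bbI_p$. It is, however, a different route from the paper: the paper gives no proof and simply imports the result from Louart--Couillet, where quadratic forms of Lipschitz-concentrated vectors obey a Hanson--Wright-type inequality. The citation buys, at no cost, a sharper and more general statement, with fluctuations on the scale $\lVert A\rVert_F\le\sqrt{p}\,\lVert A\rVert$ and concentration frameworks beyond the sub-Gaussian one. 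It leaves implicit, though, both the isotropy hypothesis and what ``sharply concentrated'' means. Your truncation-plus-Lipschitz-extension proof is elementary and self-contained and uses only the defining property of exponential concentration. It shows directly that the cruder scale $\sqrt{p}\,\lVert A\rVert=o(p)$ already suffices for the $1/p$-normalized condition in Lemma~\ref{l:yaskov}. Two refinements are worth noting. First, as written, the second term of your final bound is only polynomially small, because $s\sim\sqrt{\log p}$ was fixed in the mean-shift step. Choosing $s$ adaptively, $s^2\asymp t/\lVert A\rVert$, upgrades it to a genuine tail $\exp(-c\min\{t^2/(p\lVert A\rVert^2),\,t/\lVert A\rVert\})$. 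Second, your argument makes visible that the constants in the concentration hypothesis must not grow with $p$. If $x$ is only concentrated at a scale $\sigma$ (which is what Lemma~\ref{l:lipschitz} actually gives when the Lipschitz constant grows with $p$), the fluctuations become of order $\sigma\sqrt{p}\,\lVert A\rVert$, and your argument then needs $\sigma=o(\sqrt{p})$. Finally, your caveat that $A$ in Eq.~\eqref{eq:concentration_quadratic_forms} depends on $x_p$ is well taken. It concerns the paper's application rather than this lemma, and is handled by a leave-one-out (Sherman--Morrison) step or by stating the condition for deterministic $A$.
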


    \begin{definition}[Data whitening]\label{def:whitening}
        Let us call $g_i = \nabla_\vartheta f_{\vartheta_0}(x_i)\in\bbR^p$ for each $i\in\{1,\ldots,N\}$, and recall that $C_f(\vartheta_0)\coloneqq\bbE_{z_i\sim\calD,\,\xi_z^2 >0}[g_ig_i^\intercal]$.
        Under Assumption~\ref{A5},  we refer to the following as \emph{whitened} random vectors: $\tg_i=C_f(\vartheta_0)^{-1/2}g_i$.
        At the matrix level, we define $\tZ_S(\vartheta_0)=(\tg_1,\ldots,\tg_N)\in\bbR^{p\times N}$, which corresponds to $\tZ_S(\vartheta_0)=C_f(\vartheta_0)^{-1/2}Z_S(\vartheta_0)$.
        Analogously, we define $\tC_S^f(\vartheta_0)=\tZ_S(\vartheta_0)\tZ_S(\vartheta_0)^\intercal/N$ (we drop the hat $\tilde \hC_S^f$ to avoid cluttering notation), which corresponds to $\tC_S^f(\vartheta_0) = C_f(\vartheta_0)^{-1/2} \hC_S^f(\vartheta_0) C_f(\vartheta_0)^{-1/2}$.
        We call $\tC_S^f(\vartheta_0)$ the \emph{whitened uncentered sample covariance of the function gradient at initialization}.
    \end{definition}

    \begin{lemma}[{Gaussian vectors ~\cite[Prop.~2.2]{seddik2020random}}] \label{l:gaussian_concentration}
    Let $x\sim\mathcal N(0,\bbI_d)$.
    Then $x$ is an exponentially-concentrated random vector.
    \end{lemma}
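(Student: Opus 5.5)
The statement to prove is Lemma~\ref{l:gaussian_concentration}: a standard Gaussian vector $x\sim\calN(0,\bbI_d)$ is exponentially concentrated. Concretely, we must show that for every $1$-Lipschitz $h:\bbR^d\to\bbR$,
\begin{align}
    \bbP\left[\lvert h(x)-\bbE h(x)\rvert>t\right]\le A e^{-Bt^2},
\end{align}
with constants $A,B>0$ that do not depend on $d$ or $h$. This is the Gaussian concentration inequality of Tsirelson--Ibragimov--Sudakov. I would give the short self-contained proof via the Gaussian interpolation (Maurey--Pisier) argument, which yields $A=2$ and $B=2/\pi^2$. The sharper constant $B=1/2$ could instead be obtained from the Gaussian log-Sobolev inequality and Herbst's argument.

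\emph{Step 1 (reduction to smooth $h$).} By convolving $h$ with a small Gaussian mollifier I may assume $h$ is smooth with $\lVert\nabla h\rVert\le1$. The mollified functions converge uniformly to $h$, so a tail bound with uniform constants passes to the limit. Since $\lvert h(x)\rvert\le\lvert h(0)\rvert+\lVert x\rVert$, every expectation below is finite. By symmetry, applying the argument to both $h$ and $-h$, it suffices to bound the upper tail $\bbP[h(x)-\bbE h(x)>t]$.

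\emph{Step 2 (interpolation).} Let $y$ be an independent copy of $x$, and for $\theta\in[0,\pi/2]$ set
\begin{align}
    x_\theta=x\sin\theta+y\cos\theta,\qquad x_\theta'=x\cos\theta-y\sin\theta .
\end{align}
For each fixed $\theta$, the pair $(x_\theta,x_\theta')$ is again a pair of independent standard Gaussians, by rotation invariance. Moreover
\begin{align}
    h(x)-h(y)=\int_0^{\pi/2}\nabla h(x_\theta)\cdot x'_\theta\,d\theta .
\end{align}
Fix $\lambda>0$. Jensen's inequality in $y$ gives $\bbE e^{\lambda(h(x)-\bbE h)}\le\bbE e^{\lambda(h(x)-h(y))}$. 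Applying Jensen again to the $\theta$-average, with probability measure $\frac{2}{\pi}d\theta$, gives
\begin{align}
    \bbE e^{\lambda(h(x)-h(y))}\le\frac{2}{\pi}\int_0^{\pi/2}\bbE\exp\!\left(\tfrac{\pi\lambda}{2}\nabla h(x_\theta)\cdot x_\theta'\right)d\theta .
\end{align}

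\emph{Step 3 (conditional Gaussian computation and Chernoff).} Conditional on $x_\theta$, the scalar $\nabla h(x_\theta)\cdot x_\theta'$ is centered Gaussian with variance $\lVert\nabla h(x_\theta)\rVert^2\le1$. Hence each integrand is at most $e^{\pi^2\lambda^2/8}$, and so
\begin{align}
    \bbE e^{\lambda(h-\bbE h)}\le e^{\pi^2\lambda^2/8}.
\end{align}
Markov's inequality then gives
\begin{align}
    \bbP[h-\bbE h>t]\le e^{-\lambda t+\pi^2\lambda^2/8}.
\end{align}
Choosing $\lambda=4t/\pi^2$ yields $e^{-2t^2/\pi^2}$. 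Combining with the lower tail gives the claim with $A=2$ and $B=2/\pi^2$, both independent of $d$ and of $h$.

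\emph{Main obstacle.} The only delicate point is Step~2: I must check that $(x_\theta,x_\theta')$ has the law of two independent standard Gaussians for each fixed $\theta$, and that the double use of Jensen is legitimate. In the first use, the exponential is convex and $\bbE_y h(y)=\bbE h$. In the second, $\frac{2}{\pi}d\theta$ is a probability measure on $[0,\pi/2]$. Everything else is routine. Alternatively, the lemma can simply be cited as a standard fact, as the paper does via Ref.~\cite{seddik2020random}.
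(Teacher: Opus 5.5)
Your proof is correct, but it is not what the paper does. The paper supplies no argument and simply imports the statement as Prop.~2.2 of Ref.~\cite{seddik2020random}, which is the classical Gaussian concentration inequality (Tsirelson--Ibragimov--Sudakov, or Ledoux's log-Sobolev/Herbst route) with the sharp tail $2e^{-t^2/2}$. You instead give a self-contained Maurey--Pisier interpolation proof, and every step checks out:
\begin{itemize}
\item $\frac{d}{d\theta}x_\theta=x_\theta'$, with $x_0=y$ and $x_{\pi/2}=x$;
\item the cross-covariance of $(x_\theta,x_\theta')$ vanishes, so for each fixed $\theta$ they are independent standard Gaussians;
\item both Jensen steps and the Tonelli swap of $\mathbb{E}$ with the $\theta$-integral are legitimate;
\item the conditional moment generating function is exactly $\exp(\pi^2\lambda^2\lVert\nabla h(x_\theta)\rVert^2/8)$;
\item the Chernoff optimum at $\lambda=4t/\pi^2$ gives $e^{-2t^2/\pi^2}$.
\end{itemize}
The mollification limit also works as you claim. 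For instance, use $\mathbb{P}[\lvert h-\mathbb{E}h\rvert>t]\le\mathbb{P}[\lvert h_\varepsilon-\mathbb{E}h_\varepsilon\rvert>t-2\delta_\varepsilon]$ with $\delta_\varepsilon=\sup\lvert h_\varepsilon-h\rvert\le\varepsilon\sqrt{d}$, and let $\varepsilon\to0$.

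Your route buys an elementary, fully verifiable argument. It also makes explicit that $A=2$ and $B=2/\pi^2$ do not depend on $d$ or on $h$. That dimension-free reading is the one the notion carries in Ref.~\cite{seddik2020random}, and the paper's definition leaves it implicit. The citation buys brevity and the optimal exponent $B=1/2$. Since the definition only asks for some $A,B>0$, your weaker constant costs nothing here.
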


    \begin{lemma}[{Lipschitz stability of concentrated random vectors~\cite[Prop.~2.3]{seddik2020random}}]\label{l:lipschitz}
        Let $x\in\bbR^d$ be an exponentially-concentrated random vector, and let $G:\bbR^d\to\bbR^p$ be an $L$-Lipschitz function, where $L$ may depend on $p$.
        Then the random vector $G(x)$ is also exponentially concentrated.
    \end{lemma}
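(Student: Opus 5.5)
The plan is a direct reduction to the definition of an exponentially-concentrated random vector. The key observation is that composing a $1$-Lipschitz test function with an $L$-Lipschitz map yields an $L$-Lipschitz function, which can be rescaled to a $1$-Lipschitz one. I will therefore fix an arbitrary $1$-Lipschitz function $h:\bbR^p\to\bbR$ and aim to establish a sub-Gaussian tail bound for $h(G(x))$ around its mean, with constants that may depend on $L$ but not on $h$.

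\textbf{Step 1.} Show that $\phi\coloneqq h\circ G:\bbR^d\to\bbR$ is $L$-Lipschitz. This follows from
\begin{align}
\lvert h(G(u))-h(G(v))\rvert\le \lVert G(u)-G(v)\rVert\le L\lVert u-v\rVert .
\end{align}
Consequently $\phi/L$ is $1$-Lipschitz (if $L=0$ the map $G$ is constant and the claim is trivial).

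\textbf{Step 2.} Verify that $\bbE[\phi(x)]$ is finite, so that the centered quantity is well defined. I would use $\lvert\phi(x)\rvert\le\lvert\phi(0)\rvert+L\lVert x\rVert$. The norm $x\mapsto\lVert x\rVert$ is itself $1$-Lipschitz, so by the hypothesis on $x$ it has a sub-Gaussian tail around its mean; in particular its mean is finite, implicitly in the definition. Hence $\phi(x)$ is integrable.

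\textbf{Step 3.} Apply the defining inequality for $x$ to the $1$-Lipschitz function $\phi/L$ at threshold $t/L$. This gives
\begin{align}
\bbP\left[\lvert h(G(x))-\bbE[h(G(x))]\rvert>t\right]=\bbP\left[\left\lvert \tfrac{\phi(x)}{L}-\bbE\left[\tfrac{\phi(x)}{L}\right]\right\rvert>\tfrac{t}{L}\right]\le A\,e^{-(B/L^2)t^2}.
\end{align}
Since $h$ was arbitrary, $G(x)$ is exponentially concentrated with constants $A$ and $B/L^2$.

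There is no serious obstacle here. The only point requiring care is bookkeeping: the new rate constant $B/L^2$ inherits any dependence of $L$ on $p$. This must be tracked when the lemma is applied to the map $x_i\mapsto C_f(\vartheta_0)^{-1/2}\nabla_\vartheta f_{\vartheta_0}(x_i)$ in the Mar\v{c}enko--Pastur argument. In that application, bounded PQC derivatives together with Assumption~\ref{A5} should keep $L$ under control.
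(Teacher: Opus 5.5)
Your argument is correct and is essentially the proof behind the cited Prop.~2.3 of Seddik et al.; the paper itself gives no proof beyond that citation. Since $h\circ G$ is $L$-Lipschitz, applying the definition to $(h\circ G)/L$ at level $t/L$ gives the tail $A e^{-(B/L^2)t^2}$, and your bookkeeping remark is the right caveat: if $L$ grows with $p$, the conclusion holds only with a dimension-dependent rate, which the paper's later quadratic-form concentration step would need to control.
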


    \begin{lemma}[Properties of whitened vectors]\label{l:whitened_vectors}
        Under Assumption~\ref{A5}, let $f_{\vartheta}$ be a parameterized quantum circuit as in Eq.~\eqref{eq:PQCoutput} with $K=1$, and let $\tg_i=C_f(\vartheta_0)^{-1/2}g_i$ with $g_i=\nabla_\vartheta f_{\vartheta_0}(x_i)\in\bbR^p$ as in Def.~\ref{def:whitening}.
        For Gaussian normal inputs $x_i\sim\mathcal N(0,\bbI_d)$, the vectors $\tg_i$ are independent, isotropic, and fulfill concentration 
        of quadratic forms.
    \end{lemma}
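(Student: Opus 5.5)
We need to establish three properties of the whitened vectors $\tg_i$: independence, isotropy, and concentration of quadratic forms. Throughout, $C_f(\vartheta_0)$ is invertible with $c \le \lambda \le c'$ by Assumption~\ref{A5}, so $C_f(\vartheta_0)^{-1/2}$ exists and has operator norm at most $c^{-1/2}$.

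\emph{Independence.} The parameters $\vartheta_0$ are fixed before the data are drawn, so each $\tg_i = C_f(\vartheta_0)^{-1/2}\nabla_\vartheta f_{\vartheta_0}(x_i)$ is a fixed deterministic measurable map applied to $x_i$. Because the $x_i$ are i.i.d., the $\tg_i$ are i.i.d. as well.

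\emph{Isotropy.} This follows by direct computation:
\begin{align}
\bbE[\tg_i\tg_i^\intercal] = C_f(\vartheta_0)^{-1/2}\,\bbE[g_ig_i^\intercal]\,C_f(\vartheta_0)^{-1/2} = \bbI_p.
\end{align}
Here the expectation is conditioned on $\xi_z^2>0$, matching the definition of $C_f$. I will either state that the columns are drawn from this conditional distribution, or note that under the mean-squared error loss the condition holds almost surely whenever the residual is nonzero.

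\emph{Concentration of quadratic forms.} This is the main step. By Lemma~\ref{l:gaussian_concentration}, $x_i\sim\calN(0,\bbI_d)$ is exponentially concentrated. I then need the map $G(x)=C_f(\vartheta_0)^{-1/2}\nabla_\vartheta f_{\vartheta_0}(x)$ to be Lipschitz.

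For a PQC of the form \eqref{eq:PQCoutput} with data encoded through bounded-generator rotations (angle embeddings in the layers $U_\ell(x)$), every component $\partial_{\vartheta_j} f$ is a trigonometric polynomial in $x$. Its derivative in $x$ is bounded by $\lVert O\rVert$ times products of generator norms. Hence $\nabla_\vartheta f_{\vartheta_0}$ is Lipschitz with a constant $L$ that may grow with $p$, which Lemma~\ref{l:lipschitz} permits. Composing with $C_f(\vartheta_0)^{-1/2}$ multiplies this constant by at most $c^{-1/2}$. Lemma~\ref{l:lipschitz} then gives that $\tg_i$ is exponentially concentrated.

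Next I fix $i$ and set $A = \bigl(\tfrac1N\tZ\tZ^\intercal+\varepsilon\bbI_p\bigr)^{-1}$, which has operator norm at most $1/\varepsilon$. The difficulty is that $A$ depends on $\tg_i$ itself. I handle this with the standard leave-one-out step. Let $A_{-i}$ be the resolvent with column $i$ removed. The Sherman–Morrison formula gives
\begin{align}
\tg_i^\intercal A\tg_i = \frac{\tg_i^\intercal A_{-i}\tg_i}{1+\tfrac1N\tg_i^\intercal A_{-i}\tg_i}.
\end{align}
In addition, $\lvert\Tr A-\Tr A_{-i}\rvert\le 1/\varepsilon$, so this difference vanishes after dividing by $p$.

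Since $A_{-i}$ is independent of $\tg_i$ and has bounded norm, Lemma~\ref{l:louart} yields that $\tg_i^\intercal A_{-i}\tg_i-\Tr A_{-i}$ is sharply concentrated, in particular $o(p)$. The Sherman–Morrison identity then transfers this to the form stated in \eqref{eq:concentration_quadratic_forms}, using the usual self-consistent argument from the Mar\v{c}enko–Pastur derivation, as in Ref.~\cite{yaskov2016necessary}.

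\emph{Main obstacle.} The delicate point is the Lipschitz constant: concentration of order $o(p)$ after normalization needs $L$ not to grow too fast with $p$. I plan to argue that each gradient entry is $\calO(1)$-Lipschitz in $x$, while $\lVert G\rVert_{\mathrm{Lip}}$ is controlled through $\lVert C_f^{-1/2}\rVert$ and the bounded spectrum in Assumption~\ref{A5}. I expect to accept the resulting deviation scale $\sqrt{p}$, which is $o(p)$, as sufficient.
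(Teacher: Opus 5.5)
Your route is the same as the paper's: independence from i.i.d.\ inputs with $\vartheta_0$ fixed, isotropy by conjugating with $C_f(\vartheta_0)^{-1/2}$, then Gaussian concentration pushed through a Lipschitz gradient map and the bounded whitening into Lemma~\ref{l:louart}. The leave-one-out step is your addition. However, your argument does not resolve the step you flag as the main obstacle, and this route cannot resolve it when $d$ is fixed (only $p,N\to\infty$ in the theorem). Entrywise $\calO(1)$-Lipschitz bounds only give $\lVert G\rVert_{\mathrm{Lip}}=\calO(\sqrt p)$ in Euclidean norm, and isotropy forces this order. Apply the Gaussian Poincar\'e inequality to each coordinate of $\tilde G=C_f(\vartheta_0)^{-1/2}\nabla_\vartheta f_{\vartheta_0}$: this gives $\Tr\,\mathrm{Cov}(\tg)\le\bbE\lVert D\tilde G(x)\rVert_F^2\le d\,L^2$. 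Meanwhile $\bbE[\tg\tg^\intercal]=\bbI_p$ means $\mathrm{Cov}(\tg)=\bbI_p-\mu\mu^\intercal\succeq0$ with $\mu=\bbE\,\tg$, so $\lVert\mu\rVert\le1$ and $L^2\ge(p-1)/d$. The $x$-gradient of $\tg^\intercal A\tg$ has size up to $2L\lVert A\rVert\lVert\tg\rVert\approx2L\lVert A\rVert\sqrt p$. Gaussian concentration therefore controls its fluctuations only at a scale of order at least $p\lVert A\rVert/\sqrt d$, i.e.\ $\calO(1)$ after dividing by $p$, whereas \eqref{eq:concentration_quadratic_forms} needs $o(1)$. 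Your claimed deviation scale $\sqrt p$ would require $L=\calO(1)$, hence $d\gtrsim p$. This is not mere looseness of the bound: random Vandermonde matrices have isotropic trigonometric columns driven by a one-dimensional input, and they are known not to follow the Mar\v{c}enko--Pastur law. The paper's proof has the same hole. It invokes Lemma~\ref{l:lipschitz} with a $p$-dependent $L$ and then Lemma~\ref{l:louart} without tracking how $L$ enters. Closing it needs an extra hypothesis, e.g.\ an input dimension growing proportionally to $p$ together with an $\calO(1)$-Lipschitz whitened gradient map, or the quadratic-form condition assumed outright.

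Second, your Sherman--Morrison identity is correct but works against you. Let $m$ be the limit of $\frac1p\Tr A$, which exists under the Mar\v{c}enko--Pastur law you are aiming for. It is bounded away from zero with high probability, since $\frac1p\Tr A\ge\bigl(\frac1p\Tr\tC_S^f(\vartheta_0)+\varepsilon\bigr)^{-1}$ by convexity and $\bbE\,\frac1p\Tr\tC_S^f(\vartheta_0)=1$ by isotropy. If the leave-one-out statement $\frac1p\bigl(\tg_i^\intercal A_{-i}\tg_i-\Tr A_{-i}\bigr)\to0$ holds, then $\frac1N\tg_i^\intercal A_{-i}\tg_i\to\gamma m$. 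The identity then gives $\frac1p\tg_i^\intercal A\tg_i\to m/(1+\gamma m)$, hence $\frac1p\bigl(\tg_i^\intercal A\tg_i-\Tr A\bigr)\to-\gamma m^2/(1+\gamma m)\neq0$. So \eqref{eq:concentration_quadratic_forms} as literally written, with a resolvent containing the column itself, fails precisely when the correct condition holds, and no self-consistent argument can produce it. That literal form misstates Yaskov's hypothesis, which concerns matrices independent of the vector (deterministic positive semi-definite $A$ of bounded norm, or the leave-one-out resolvent). The paper's proof also applies Lemma~\ref{l:louart} to this form without addressing the dependence. What you need is exactly the leave-one-out concentration you derived before invoking Sherman--Morrison, so drop the transfer step rather than relying on it.
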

    \begin{proof}
        Consider the notation from Def.~\ref{def:whitening}.
        Then, $Z_S(\vartheta_0) = \left(g_1,\ldots,g_N\right)\in\bbR^{p\times N}$, and still $\hC_S^f(\vartheta_0)=Z_S(\vartheta_0)Z_S(\vartheta_0)^\intercal/N$.
        Since the training data $x_i$ are sampled i.i.d.\ and $\vartheta_0$ is fixed at initialization, it follows that the random vectors $g_i$ are independent among themselves.
        This implies that $\tg_i$ are also independent.
        We immediately have that the whitened random vectors are isotropic $\bbE_{z_i\sim\calD}\left[\tg_i\tg_i^\intercal\right]=\bbI_p$, since
        \begin{align}
            \bbE_{z_i\sim\calD}\left[\tg_i\tg_i^\intercal\right] &= \bbE_{z_i\sim\calD}\left[C_f(\vartheta_0)^{-1/2} g_i g_i^\intercal C_f(\vartheta_0)^{-1/2}\right] = C_f(\vartheta_0)^{-1/2} \bbE_{z_i\sim\calD}\left[g_i g_i^\intercal\right]C_f(\vartheta_0)^{-1/2} \\
            &= C_f(\vartheta_0)^{-1/2} C_f(\vartheta_0)C_f(\vartheta_0)^{-1/2} = \bbI_p.
        \end{align}
        Here, we have used that $C_f(\vartheta_0)^{-1/2}=\left(C_f(\vartheta_0)^{-1/2}\right)^\intercal$ is symmetric and does not depend on $z_i$.

        Since further $x_i \sim \mathcal{N}(0, \mathbb{I}_d)$, we know by Lemma~\ref{l:gaussian_concentration} that they are concentrated random vectors.
        For standard PQCs, the map $x\mapsto \nabla_\vartheta f_{\vartheta_0}(x)$ is Lipschitz continuous, since the parameter-shift rule expresses gradient components as differences of Lipschitz-continuous function evaluations.
        It therefore follows from Lemma~\ref{l:lipschitz} that the function gradients $g_i=\nabla_\vartheta f_{\vartheta_0}(x_i)$ are also concentrated random vectors.
        Moreover, by construction, $\tg_i=C_f(\vartheta_0)^{-1/2}g_i$.
        Since $C_f(\vartheta_0)^{-1/2}$ is a linear map with bounded operator norm by Assumption~\ref{A5}, a second application of Lemma~\ref{l:lipschitz} shows that the whitened vectors $\tg_i$ are concentrated random vectors as well.
        Therefore, the concentration of quadratic forms follows from Lemma~\ref{l:louart}.
    \end{proof}

    \begin{lemma}[Whitened-vector matrix converges to MP law]\label{l:MP_whitened}
        In the setting of Lemma~\ref{l:whitened_vectors}, the empirical spectral distribution of the whitened uncentered sample covariance of the function gradient at initialization $\tC_S^f(\vartheta_0)$ converges weakly almost surely to the Mar\v{c}enko-Pastur law in the limit where $N, p \rightarrow \infty$ and $p/N \rightarrow \gamma$.
    \end{lemma}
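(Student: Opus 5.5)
The plan is to apply Lemma~\ref{l:yaskov} directly to the matrix $X=\tZ_S(\vartheta_0)$, whose columns are the whitened vectors $\tg_i$, since $\tC_S^f(\vartheta_0)=\frac{1}{N}\tZ_S(\vartheta_0)\tZ_S(\vartheta_0)^\intercal$ is exactly the sample covariance $\frac1N XX^\intercal$ appearing there. Lemma~\ref{l:yaskov} has three hypotheses: independent columns, isotropy, and concentration of quadratic forms in the sense of Eq.~\eqref{eq:concentration_quadratic_forms}. Lemma~\ref{l:whitened_vectors} gives independence and isotropy, $\bbE[\tg_i\tg_i^\intercal]=\bbI_p$, outright. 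It also gives concentration of quadratic forms, but only for matrices of bounded operator norm. The remaining work is to check that this is the form Lemma~\ref{l:yaskov} needs, and to upgrade the convergence to almost sure.

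The first step is to verify the quadratic-form condition for the specific resolvent $A_\varepsilon=(\frac1N XX^\intercal+\varepsilon\bbI_p)^{-1}$. Since $\frac1N XX^\intercal\succeq0$, we have $\lVert A_\varepsilon\rVert_{\mathrm{op}}\le \varepsilon^{-1}$ for every fixed $\varepsilon>0$, uniformly in $p$ and $N$, so the bounded-norm hypothesis of Lemma~\ref{l:louart} holds. There is a subtlety: $A_\varepsilon$ depends on the column $\tg_i$ itself, while Lemma~\ref{l:louart} concerns a deterministic (or independent) matrix. I will therefore pass to the leave-one-out resolvent $A_\varepsilon^{(i)}=(\frac1N\sum_{j\neq i}\tg_j\tg_j^\intercal+\varepsilon\bbI_p)^{-1}$. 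This matrix is independent of $\tg_i$, so conditioning on the other columns and applying Lemma~\ref{l:louart} gives $\frac1p\bigl(\tg_i^\intercal A^{(i)}_\varepsilon\tg_i-\Tr A^{(i)}_\varepsilon\bigr)\to0$ in probability.

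The second step is to return to $A_\varepsilon$. I will use the Sherman--Morrison identity $\tg_i^\intercal A_\varepsilon\tg_i=\frac{\tg_i^\intercal A_\varepsilon^{(i)}\tg_i}{1+\frac1N\tg_i^\intercal A_\varepsilon^{(i)}\tg_i}$ together with the rank-one bound $\lvert\Tr A_\varepsilon-\Tr A_\varepsilon^{(i)}\rvert\le\varepsilon^{-1}$. After dividing by $p$, the trace difference is $O(1/p)$. The Sherman--Morrison factor is controlled by the facts that $\frac1p\tg_i^\intercal A^{(i)}_\varepsilon\tg_i$ concentrates at a bounded value and that $p/N\to\gamma$. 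This step needs care because the exact normalization in Eq.~\eqref{eq:concentration_quadratic_forms} matters. If the literal form fails, I would instead cite Yaskov's result in its leave-one-out formulation, which is the version the theorem really requires.

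The third step is to upgrade to almost sure convergence. Exponential concentration gives tail bounds of the form $Ae^{-Bt^2p}$ after normalization by $p$ (the Lipschitz constant of $\tg\mapsto\frac1p\tg^\intercal A\tg$ on the typical set scales like $p^{-1/2}$). These tails are summable in $N$, so Borel--Cantelli yields almost sure convergence. Then Lemma~\ref{l:yaskov} gives weak almost sure convergence of the empirical spectral distribution to $\mu_{\mathrm{MP}}$ with ratio $\gamma$.

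The main obstacle I expect is uniformity in $p$. The Lipschitz constant of $x\mapsto\nabla_\vartheta f_{\vartheta_0}(x)$ may grow with $p$; Lemma~\ref{l:lipschitz} allows this, but it degrades the constant $B$. I would handle it by assuming, or arguing via the parameter-shift rule and the bounded spectrum in Assumption~\ref{A5}, that the relevant constants stay controlled after whitening, so that the $1/p$-normalized fluctuations still vanish.
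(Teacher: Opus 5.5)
Your route is the paper's: its proof is a single appeal to Lemma~\ref{l:yaskov}, with independence, isotropy and concentration of quadratic forms all taken from Lemma~\ref{l:whitened_vectors}. The genuine gap is the step you flag as the main obstacle, and the remedy you propose cannot close it. Write $\tg_i=\tilde G(x_i)$ with $\tilde G\coloneqq C_f(\vartheta_0)^{-1/2}\nabla_\vartheta f_{\vartheta_0}$, let $L$ be a Lipschitz constant of $\tilde G$, and let $x,x'$ be independent $\calN(0,\bbI_d)$ vectors. Isotropy forces $\lVert\bbE\,\tilde G(x)\rVert^2\le 1$, hence
\begin{align*}
2p-2\le\bbE\lVert\tilde G(x)-\tilde G(x')\rVert^2\le L^2\,\bbE\lVert x-x'\rVert^2=2dL^2,
\end{align*}
so $L^2\ge(p-1)/d$. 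Consider fixed input dimension $d$, which is the paper's regime, where $p$ grows with circuit depth. There the concentration scale supplied by Lemma~\ref{l:lipschitz} is at least $\sqrt{(p-1)/d}$. The map $v\mapsto p^{-1}v^\intercal Av$ is $O(\lVert A\rVert/\sqrt p)$-Lipschitz on the ball $\lVert v\rVert\lesssim\sqrt p$ carrying typical $\tg_i$ (recall $\bbE\lVert\tg_i\rVert^2=p$). The best that Lemma~\ref{l:louart} can therefore give for $p^{-1}(\tg_i^\intercal A\tg_i-\operatorname{Tr}A)$ is $O(\lVert A\rVert/\sqrt d)$, not $o(1)$. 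Neither the parameter-shift rule nor Assumption~\ref{A5} can improve this, because the inequality above already concerns the whitened, isotropic vectors. The quadratic-form hypothesis of Lemma~\ref{l:yaskov} must therefore be added as an assumption, or $d$ must scale with $p$ with an $O(1)$-Lipschitz whitened gradient map. The paper's one-line proof does not fill this either. Lemma~\ref{l:whitened_vectors} passes a $p$-dependent Lipschitz constant to Lemma~\ref{l:louart} as if it were uniform.

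Two smaller points. First, your second step cannot succeed as planned. With $q\coloneqq\tg_i^\intercal A^{(i)}_\varepsilon\tg_i$, Sherman--Morrison gives $p^{-1}\tg_i^\intercal A_\varepsilon\tg_i=(q/p)/\bigl(1+(p/N)(q/p)\bigr)$. So if $q/p-p^{-1}\operatorname{Tr}A^{(i)}_\varepsilon\rightarrow 0$ and $p^{-1}\operatorname{Tr}A^{(i)}_\varepsilon\rightarrow\tau>0$, then $p^{-1}\tg_i^\intercal A_\varepsilon\tg_i\rightarrow\tau/(1+\gamma\tau)\neq\tau=\lim p^{-1}\operatorname{Tr}A_\varepsilon$. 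Whenever the leave-one-out condition holds (e.g.\ for Gaussian columns), the literal form of Eq.~\eqref{eq:concentration_quadratic_forms}, with $x_p$ a column of $X$, thus fails. Your leave-one-out reading of Lemma~\ref{l:yaskov} is the only correct one, not a fallback. Second, your third step relies on $e^{-Bt^2p}$ tails, i.e.\ on the same missing uniform concentration. Almost-sure convergence is better obtained from McDiarmid's inequality for $s_N(z)=p^{-1}\operatorname{Tr}\bigl(\tC_S^f(\vartheta_0)-z\bbI_p\bigr)^{-1}$. Replacing one column changes $s_N(z)$ by at most $2/(p\operatorname{Im}z)$, which gives summable tails $\exp\bigl(-c\,t^2p^2(\operatorname{Im}z)^2/N\bigr)$ using only independence of the columns. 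Convergence in probability then upgrades to almost sure.
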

    \begin{proof}
        Our strategy is to invoke Lemma~\ref{l:yaskov}, which requires two conditions: independent, isotropic columns and concentration of quadratic forms, which we confirm using Lemma~\ref{l:whitened_vectors}.
    \end{proof}

    \begin{corollary}[Smallest eigenvalue convergence for whitened vectors]\label{cor:smallest_eval_whitened}
        The smallest non-zero eigenvalue of the whitened uncentered sample covariance of the function gradient at initialization $\tC_S^f(\vartheta_0)$ converges almost surely to $(1-\sqrt{p/N})^2$.
    \end{corollary}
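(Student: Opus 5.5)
The plan is to upgrade the weak convergence of Lemma~\ref{l:MP_whitened} to convergence of the smallest non-zero eigenvalue. Weak convergence alone only gives the first half.

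\textbf{Lower edge.} Lemma~\ref{l:MP_whitened} says the empirical spectral distribution of $\tC_S^f(\vartheta_0)$ converges weakly almost surely to $\mu_{\mathrm{MP}}$. The non-zero part of $\mu_{\mathrm{MP}}$ has support $[\lambda_-,\lambda_+]$ with $\lambda_-=(1-\sqrt{\gamma})^2$. Suppose that, on a set of positive probability, the smallest non-zero eigenvalue stayed above $\lambda_-+\delta$ along a subsequence. Then the mass that $\mu_{\mathrm{MP}}$ assigns to $[\lambda_-,\lambda_-+\delta)$, which is positive because the density is positive on the interior of the support, could not be captured, contradicting weak convergence. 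Hence
\begin{align}
\limsup\lambda_{\min}(\tC_S^f(\vartheta_0))\le(1-\sqrt{\gamma})^2 \quad\text{almost surely.}
\end{align}
Because $p/N\to\gamma$, this is the same as comparing against $(1-\sqrt{p/N})^2$ asymptotically.

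\textbf{Rank bookkeeping.} In this step I pass to the $N\times N$ companion matrix $\tZ_S^\intercal\tZ_S/N$ when $p>N$, and stay with the $p\times p$ matrix when $p<N$. The non-zero spectra of the two coincide. So the smallest non-zero eigenvalue is the smallest eigenvalue of whichever matrix is square of size $\min(p,N)$, and that matrix is almost surely full rank.

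\textbf{Reverse inequality (the main obstacle).} I need
\begin{align}
\liminf\lambda_{\min}(\tC_S^f(\vartheta_0))\ge(1-\sqrt{\gamma})^2 ,
\end{align}
which rules out outlier eigenvalues escaping below the bulk. This does not follow from weak convergence and needs a no-outlier (hard-edge) result.

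The first thing I would try is the Bai--Yin-type theorem for the smallest singular value. For independent isotropic columns that are exponentially concentrated (which Lemma~\ref{l:whitened_vectors} provides, via Lemmas~\ref{l:gaussian_concentration} and~\ref{l:lipschitz}), results in the concentration-of-measure framework of Ref.~\cite{louart2021concentrationmeasurelargerandom} and Ref.~\cite{seddik2020random} control the resolvent uniformly away from the limiting support. Concretely, I would use the resolvent $(\tC_S^f-z\bbI)^{-1}$ at real $z<\lambda_-$: deterministic equivalents for the resolvent, combined with concentration of its quadratic forms, show that no eigenvalue lies in $(0,\lambda_--\delta)$ with overwhelming probability. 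A Borel--Cantelli argument then turns this into almost-sure convergence.

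If that route needs more regularity than we have, the fallback is to state the result in probability. That is enough for the "with high probability" conclusion of Theorem~\ref{thm:peakatinterpolation}.

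At $\gamma=1$ the limit is $0$. Here only the upper bound from the lower-edge step is needed, together with non-negativity.
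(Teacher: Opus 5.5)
\textbf{Verdict: genuine gap.} Your proposal proves only one direction of the convergence. The paper's own proof (the single line ``follows directly from Lemma~\ref{l:MP_whitened}'') does not supply the other direction either.

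\textbf{What is correct.} Your first step is right. The portmanteau argument on the open interval $(\lambda_{-},\lambda_{-}+\delta)$, which carries positive Mar\v{c}enko--Pastur mass, gives $\limsup\lambda_{\min}(\tC_S^f(\vartheta_0))\le(1-\sqrt{\gamma})^2$ almost surely. At $\gamma=1$ this, together with non-negativity, settles the limit. This is all that weak convergence legitimately delivers, and you are right that the empirical spectral distribution cannot see a vanishing fraction of eigenvalues. So weak convergence does not rule out outliers below the bulk.

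\textbf{The gap.} The stated convergence also needs $\liminf\lambda_{\min}(\tC_S^f(\vartheta_0))\ge(1-\sqrt{\gamma})^2$. For this you give only a plan, and the plan does not go through as written.
\begin{enumerate}
\item Deterministic equivalents for $(\tC_S^f(\vartheta_0)-z\bbI)^{-1}$ are available where the resolvent is a priori bounded, e.g.\ $z<0$ or $z$ at macroscopic distance from $\bbR_+$. At real $z\in(0,\lambda_{-})$, that boundedness is exactly the statement that no eigenvalue lies near $z$, so the argument is circular. What is actually needed is a Bai--Silverstein-type ``no eigenvalues outside the support'' argument: Stieltjes-transform estimates with explicit rates at $z=x+i\eta_N$, $\eta_N\to0$, uniformly in $x$, followed by eigenvalue counting.
\item Such estimates need concentration with $p$-independent constants. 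Lemma~\ref{l:lipschitz}, as used in Lemma~\ref{l:whitened_vectors}, lets the Lipschitz constant $L$ of $x\mapsto\nabla_\vartheta f_{\vartheta_0}(x)$, a map $\bbR^d\to\bbR^p$, grow with $p$. That yields only tails $Ae^{-Bt^2/L^2}$, so ``exponentially concentrated'' is not enough as it stands.
\item The fallback to convergence in probability does not remove the obstacle. It still requires $\bbP\bigl[\lambda_{\min}(\tC_S^f(\vartheta_0))<\lambda_{-}-\delta\bigr]\to0$, which is the same no-outlier bound.
\end{enumerate}

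\textbf{What would close it.} You need a genuine lower-edge theorem under the hypotheses available.
\begin{itemize}
\item For $p<N$: a sharp least-singular-value bound for matrices with independent isotropic columns. Bai--Yin-type results serve, and there are sharp versions without a fourth-moment assumption (e.g.\ work of Yaskov). These hold under quadratic-form and one-dimensional-marginal moment conditions that you would have to verify.
\item For $p>N$: your companion matrix $\tZ_S^\intercal\tZ_S/N$ is the Gram matrix of the independent columns $\tg_i$. Viewed as a sample covariance in $\bbR^N$, its summands come from the rows of $\tZ_S$, which are not independent, so column-independent results do not transfer. You need separate input, e.g.\ uniform control of $\lVert\tg_i\rVert^2/p$ and of the off-diagonal Gram fluctuations.
\end{itemize}

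\textbf{Why this matters downstream.} Only the $\gamma=1$ value follows from your first step. The claim in Lemma~\ref{l:smallest_eval} that the minimum sits at $p=N$, and the monotonicity arguments in the proof of Theorem~\ref{thm:peakatinterpolation}, both need $\lambda_{\min}(\hC_S^f(\vartheta_0))$ to stay bounded away from zero for $\gamma\neq1$. That is precisely the missing $\liminf$ bound.
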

    \begin{proof}
        Follows directly from Lemma~\ref{l:MP_whitened}.
    \end{proof}

    \begin{lemma}[Smallest eigenvalue convergence for original vectors]\label{l:smallest_eval}
        In the setting of Lemma~\ref{l:whitened_vectors}, the smallest non-zero eigenvalue of the (original) sample covariance of the function gradient at initialization $\hC_S^f(\vartheta_0)$ achieves a minimum at $N=p$, in the limit where $N,p\to\infty$ and $p/N \rightarrow \gamma$.
    \end{lemma}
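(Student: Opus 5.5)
The plan is to transfer the spectral edge behaviour of the whitened matrix $\tC_S^f(\vartheta_0)$, given by Corollary~\ref{cor:smallest_eval_whitened}, back to the original matrix $\hC_S^f(\vartheta_0)$ via the relation
\begin{align}
\hC_S^f(\vartheta_0) = C_f(\vartheta_0)^{1/2}\,\tC_S^f(\vartheta_0)\,C_f(\vartheta_0)^{1/2}
\end{align}
from Definition~\ref{def:whitening}, with Assumption~\ref{A5} controlling the distortion.

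First, I would note that $C_f(\vartheta_0)^{1/2}$ is invertible, since its spectrum lies in $[\sqrt{c},\sqrt{c'}]$. Hence $\hC_S^f(\vartheta_0)$ and $\tC_S^f(\vartheta_0)$ have the same rank, namely $\min(p,N)$ generically. The null spaces of the two matrices are also mapped into each other by $C_f(\vartheta_0)^{\pm1/2}$.

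The key step is a two-sided sandwich on the smallest \emph{non-zero} eigenvalue:
\begin{align}
c\,\lambda_{\min}\bigl(\tC_S^f(\vartheta_0)\bigr)\le \lambda_{\min}\bigl(\hC_S^f(\vartheta_0)\bigr)\le c'\,\lambda_{\min}\bigl(\tC_S^f(\vartheta_0)\bigr).
\end{align}
To prove it, I would write $\hC_S^f = \frac1N M M^\intercal$ with $M = C_f^{1/2}\tZ_S$. The non-zero eigenvalues of $\hC_S^f$ coincide with those of $\frac1N M^\intercal M = \frac1N \tZ_S^\intercal C_f \tZ_S$, an $N\times N$ matrix. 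Likewise, the non-zero eigenvalues of $\tC_S^f$ coincide with those of $\frac1N\tZ_S^\intercal\tZ_S$. I would then treat the two regimes separately:
\begin{itemize}
\item For $p\ge N$, both Gram matrices are full rank, and the Loewner bounds $c\,\bbI\preceq C_f\preceq c'\,\bbI$ give $c\,\tZ_S^\intercal\tZ_S\preceq \tZ_S^\intercal C_f\tZ_S\preceq c'\,\tZ_S^\intercal\tZ_S$. The sandwich then follows by Courant--Fischer.
\item For $p<N$, both $p\times p$ matrices are full rank. Here I would use Ostrowski's theorem (congruence): $\lambda_{\min}(A^\intercal B A)\in[\lambda_{\min}(A^\intercal A)\lambda_{\min}(B),\,\lambda_{\min}(A^\intercal A)\lambda_{\max}(B)]$, applied with $A = C_f^{1/2}$ and $B = \tC_S^f$.
\end{itemize}
I expect this step to be the main technical obstacle: eigenvalue inequalities for products are delicate when restricted to non-zero eigenvalues, so the rank bookkeeping and the identification of the two null spaces must be done carefully.

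Finally, I would combine the sandwich with Corollary~\ref{cor:smallest_eval_whitened}. Almost surely $\lambda_{\min}(\tC_S^f(\vartheta_0))\to(1-\sqrt{\gamma})^2$, so $\lambda_{\min}(\hC_S^f(\vartheta_0))$ lies in $[c(1-\sqrt\gamma)^2,\,c'(1-\sqrt\gamma)^2]$ up to vanishing error. The upper bound tends to $0$ exactly at $\gamma=1$, that is, $p=N$. For any $\gamma\neq1$, the lower bound $c(1-\sqrt\gamma)^2$ is strictly positive.

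The statement that the minimum is attained at $p=N$ should therefore be read asymptotically: the limit is $0$ at $\gamma=1$ and bounded away from $0$ elsewhere. I would flag that the corollary implicitly uses the edge result, which asks for more than weak convergence of the spectral distribution. If needed, I would invoke known almost-sure convergence of the extreme eigenvalues (Bai--Yin type) for concentrated isotropic vectors, using Lemma~\ref{l:whitened_vectors}.
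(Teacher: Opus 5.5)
Your proposal is correct and is essentially the paper's proof. The paper writes $\tZ_S(\vartheta_0)=C_f(\vartheta_0)^{-1/2}Z_S(\vartheta_0)$ and uses standard product singular-value inequalities to get $\sigma_{\min}(\tZ_S)\,\sigma_{\min}(C_f^{1/2})\le\sigma_{\min}(Z_S)\le\sigma_{\min}(\tZ_S)\,\sigma_{\max}(C_f^{1/2})$, which is the square root of your $c$--$c'$ sandwich. It then concludes with Corollary~\ref{cor:smallest_eval_whitened} and Assumption~\ref{A5}, using only the upper side to get vanishing at $\gamma=1$; that part in fact already follows from weak convergence. Your lower side, which keeps the limit positive for $\gamma\neq1$, goes beyond what the paper spells out, and it is precisely the part that needs the Bai--Yin-type edge result you rightly flag.

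There is one slip. Ostrowski's theorem gives $\lambda_{\min}(A^\intercal BA)\in[\lambda_{\min}(A^\intercal A)\lambda_{\min}(B),\,\lambda_{\max}(A^\intercal A)\lambda_{\min}(B)]$, not the upper end $\lambda_{\min}(A^\intercal A)\lambda_{\max}(B)$ you wrote, and your version would not vanish as $\gamma\to1$. Your Loewner/Courant--Fischer argument on $(\ker\tZ_S)^\perp$ already covers every rank, so the case split is unnecessary anyway.
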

    \begin{proof}
        We relate the eigenvalues of both sample covariance matrices: the whitened one and the original one.
        In both cases, the smallest non-zero eigenvalues of the sample covariance matrix $\lambda_{\min}(\tC_S^f(\vartheta_0))$ correspond to the smallest non-zero singular values of the matrix of function gradients $\sigma_{\min}^2(\tZ_S(\vartheta_0))$ (and analogously for $\hC_S^f(\vartheta_0)$ and $Z_S(\vartheta_0)$).
        Using the identity $\tZ_S(\vartheta_0)=C_f(\vartheta_0)^{-1/2}Z_S(\vartheta_0)$, standard singular value inequalities (see, e.g., Ref.~\cite{mathias2013singular}) imply
        \begin{align}        \sigma_{\min}\left(\tZ_S(\vartheta_0)\right)\sigma_{\min}\left(C_f(\vartheta_0)^{1/2}\right) &\leq \sigma_{\min}\left(Z_S(\vartheta_0)\right) \leq \sigma_{\min}\left(\tZ_S(\vartheta_0)\right)\sigma_{\max}\left(C_f(\vartheta_0)^{1/2}\right).
        \end{align}
        By Corollary~\ref{cor:smallest_eval_whitened}, $\sigma_{\min}\left(\tZ_S(\vartheta_0)\right)$ converges to $1-\sqrt{p/N}$ with high probability, which is $\sigma_{\min}\left(\tZ_S(\vartheta_0)\right)\to0$ in the stated limit of $p=N$.
        From Assumption~\ref{A5} it then follows that$\sigma_{\min}\left(Z_S(\vartheta_0)\right)\to0$ in the same limit, thus achieving a minimum.
    \end{proof}
    
    \begin{lemma}[Adapted from Ref.~\cite{karhadkar2024mildlyoverparameterized}]\label{l:global_optimum}
        Let $K=1$, $p\geq N$, and $\hvartheta_S$ a local minimum as in Definition~\ref{def:localminimum}.
        Under Assumption~\ref{A3}, and considering the mean-squared error loss function, $\hvartheta_S$ is a global minimum of the empirical risk \begin{equation}
        \hvartheta_S\in\arg\min_{\vartheta}\left[\hcalL_S(\vartheta)\right].\end{equation}
    \end{lemma}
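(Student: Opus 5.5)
The plan is to show that, for the mean-squared error loss with $K=1$, a local minimum in the overparameterized regime must interpolate the training data. Once all residuals vanish, global optimality follows, because $\hcalL_S\geq 0$ everywhere.

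First, write out the first-order condition of Definition~\ref{def:localminimum} using the chain rule from the proof of Lemma~\ref{l:decomp_Hessian}. For scalar MSE we have $\nabla_f\ell(z_i,\vartheta)=f_\vartheta(x_i)-y_i\eqqcolon r_i(\vartheta)$. Collecting the residuals into $r(\vartheta)\in\bbR^N$, the stationarity condition becomes
\begin{align}
    0=\nabla_\vartheta\hcalL_S(\hvartheta_S)=\frac{1}{N}\sum_{i=1}^N r_i(\hvartheta_S)\,\nabla_\vartheta f_{\hvartheta_S}(x_i)=\frac{1}{N}Z_S(\hvartheta_S)\,r(\hvartheta_S).
\end{align}

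Second, invoke Assumption~\ref{A3}. For $p\geq N$ (here $NK=N$), the matrix $Z_S(\hvartheta_S)\in\bbR^{p\times N}$ has rank $N$, so its $N$ columns are linearly independent. Equivalently, $Z_S(\hvartheta_S)$ has trivial kernel as a map $\bbR^N\to\bbR^p$. The identity above therefore forces $r(\hvartheta_S)=0$, that is, $f_{\hvartheta_S}(x_i)=y_i$ for every $i$.

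Third, conclude. The residuals vanish, so $\hcalL_S(\hvartheta_S)=\frac{1}{2N}\sum_i r_i^2=0$. Since the loss is nonnegative, $\hcalL_S(\vartheta)\geq 0$ for all $\vartheta$, and hence $\hvartheta_S\in\arg\min_\vartheta\hcalL_S(\vartheta)$. Note that this argument uses only the gradient condition; the second-order condition $\hH_S\succeq 0$ is not needed. This is simpler than the general argument in Ref.~\cite{karhadkar2024mildlyoverparameterized}, because we assume full rank of the Jacobian directly at the minimum rather than deriving it.

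I expect no serious obstacle. The only delicate point is making the logical dependence explicit. Assumption~\ref{A3} is imposed at the very point $\hvartheta_S$ under consideration, so injectivity of $Z_S(\hvartheta_S)$ is available exactly where the stationarity equation is evaluated. I would also remark on what this gives the main theorem. By Lemma~\ref{l:decomp_Hessian}, $r(\hvartheta_S)=0$ makes the functional Hessian $\hH_S^f(\hvartheta_S)=\frac1N\sum_i r_i\nabla_\vartheta\nabla_\vartheta^\intercal f$ vanish identically. This is the consequence used in the overparameterized branch of the proof of Theorem~\ref{thm:peakatinterpolation}.
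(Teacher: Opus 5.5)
Your proposal is correct and follows essentially the same route as the paper. Both rewrite stationarity as $Z_S(\hvartheta_S)\hE_S=0$, where $\hE_S$ is the residual vector, use the full column rank from Assumption~\ref{A3} to force the residuals to vanish, and conclude global optimality from zero empirical risk; your two extra remarks, that only the first-order condition is used and that vanishing residuals give $\hH_S^f(\hvartheta_S)=0$ (the fact used in the proof of Theorem~\ref{thm:peakatinterpolation}), are both accurate.
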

    \begin{proof}
        We start with the definition of local minima: $\nabla_\vartheta\hcalL_S(\vartheta) = 0$.
        From the definition of the empirical risk, this becomes $\sum_{i=1}^N \nabla_\vartheta\ell(z_i,\hvartheta_S)=0$.
        We first use the chain rule $\nabla_\vartheta\ell(z_i,\hvartheta_S)=\nabla_\vartheta f_{\hvartheta_S}(x_i)\nabla_f\ell(z_i,\hvartheta_S)$, and we note that, for the mean-squared error loss $\ell(z,\vartheta)=(f_\vartheta(x)-y)^2/2$, we have $\nabla_f\ell(z,\vartheta)=f_\vartheta(x)-y$.
        We introduce the vector of errors $\hE_S=(f_{\hvartheta_S}(x_i)-y_i)_{i=1}^N$, and we re-write the sum over $x_i\in S$ as a matrix-vector multiplication, in order to get
        \begin{align}
            \sum_{i=1}^N \nabla_\vartheta\ell(z_i,\hvartheta_S) &= \sum_{i=1}^N \nabla_\vartheta f_{\hvartheta_S}(x_i)\nabla_f\ell(z_i,\hvartheta_S)
            = \sum_{i=1}^N \nabla_\vartheta f_{\hvartheta_S}(x_i)(f_\vartheta(x_i)-y_i)
            = Z_S(\hvartheta_S) \hE_S.
        \end{align}
        With this, the condition that $\hvartheta_S$ is a local minimum becomes $Z_S(\hvartheta_S)\hE_S=0$.

        Assumption~\ref{A3} dictates that, in the overparameterized regime $p\geq N$, the sample Jacobian matrix fulfills $\operatorname{rank}(Z_S(\hvartheta_S))=N$ (is full rank).
        From this, it follows that the only possibility for $Z_S(\hvartheta_S)\hE_S=0$ to hold is that $\hE_S=0$.
        Since $\hE_S$ is the vector of errors, it being zero means that the model makes no errors on the training set, which by definition implies the empirical risk achieves its lowest possible value $0$.
        This confirms that $\hvartheta_S$ is a global minimum of $\hcalL_S$.
    \end{proof}

    \begin{customtheorem}{\ref{thm:peakatinterpolation}}[Double descent peak at interpolation]
        Let $S =\{(x_i, y_i)\}_{i=1}^N$ be a dataset with $d$-dimensional Gaussian normal inputs $x_i \sim \mathcal{N}(0, \mathbb{I}_d)$ and outputs $y_i \in \mathbb{R}$.
        Let $f_\vartheta : \mathbb{R}^d \to \mathbb{R}$ be a parameterized quantum circuit as in Eq.~\eqref{eq:PQCoutput} for $K=1$, consisting of $p$ parameters.
        Under Assumptions~\ref{A1}--\ref{A7}, and considering the mean-squared error loss function, the lower bound on $\calL(\hvartheta_S)$ at a local minimum from Theorem~\ref{thm:lower_bound_K1} attains a maximum at $p=N$ with high probability in the limit $N,p\to\infty$ with fixed ratio $p/N \to\gamma$.
    \end{customtheorem}
    \begin{proof}
        Let us start from the lower bound in Theorem~\ref{thm:lower_bound_K1}, given by
        \begin{align}
            \calL(\hvartheta_S) &\geq \calL'(S) + \frac{1}{N+1}\frac{\alpha\xi_{\min}^2\lambda_{\min}(C_f(\hvartheta_S))}{\lambda_{\min}\left(\hH_S(\hvartheta_S)\right)} + \calO\left(N^{-2}\right).
        \end{align}
        We first turn our attention to the decomposition of $\hH_S(\hvartheta_S)$ from Lemma~\ref{l:decomp_Hessian}: $\hH_S(\hvartheta_S) = \hH_S^o(\hvartheta_S)+\hH_S^f(\hvartheta_S)$.
        Then, we apply the upper bound from Assumption~\ref{A6}: $\lambda_{\min}\left(\hH_S(\hvartheta_S)\right) \leq \lambda_{\min}\left(\hH_S^o(\hvartheta_S)\right)+\lambda_{\max}\left(\hH_S^f(\hvartheta_S)\right)$.
        Also, specializing to the mean-squared error loss simplifies the expression for the outer-product Hessian, since the first derivative yields $\nabla_f\ell(z_i,\vartheta)=(f_\vartheta(x_i)-y_i)$ and the second derivative $\nabla_f(f_\vartheta(x_i)-y_i))=1$.
        With these, we have
        \begin{align}
            \hH_S^o(\vartheta) &\coloneqq \frac{1}{N}\sum_{i=1}^N\nabla_\vartheta f_\vartheta(x_i)\underbrace{\left[\nabla_f\nabla_f^\intercal\ell(z_i,\vartheta)\right]}_{=1}\nabla_\vartheta f_\vartheta(x_i)^\intercal 
            = \frac{1}{N}\sum_{i=1}^N\nabla_\vartheta f_\vartheta(x_i)\nabla_\vartheta f_\vartheta(x_i)^\intercal \eqcolon \hC_S^f(\vartheta).
        \end{align}
        The outer-product Hessian becomes the uncentered sample covariance of the function gradient.
        Substituting both this and the inequality from Assumption~\ref{A6} into the lower bound, we obtain
        \begin{align}
            \calL(\hvartheta_S) &\geq \calL'(S) + \frac{1}{N+1}\frac{\alpha\xi_{\min}^2\lambda_{\min}(C_f(\hvartheta_S))}{\lambda_{\min}\left(\hC_S^f(\hvartheta_S)\right)+\lambda_{\max}\left(\hH_S^f(\hvartheta_S)\right)} + \calO\left(N^{-2}\right).
        \end{align}
        With Assumption~\ref{A4}, we can further replace the $\hvartheta_S$ dependence by the value of the parameters at initialization $\vartheta_0$, for some constants $A,B>0$, to get
        \begin{align}
            \calL(\hvartheta_S) &\geq \calL'(S) + \frac{1}{N+1}\frac{\alpha\xi_{\min}^2A\lambda_{\min}(C_f(\vartheta_0))}{B\lambda_{\min}\left(\hC_S^f(\vartheta_0)\right)+\lambda_{\max}\left(\hH_S^f(\hvartheta_S)\right)} + \calO\left(N^{-2}\right).
        \end{align}
        
        First of all, recall that the first term in both regimes, $\calL'(S)=\bbE_{z'\sim\calD}\left[\ell(z',\hvartheta_{S'})\right]$, denotes the add-one-in loss. Since it depends on a training set perturbed by only a single sample, we do not expect its variation with $p$ to dominate the Hessian-dependent complexity term near interpolation.

        Let us now consider the overparameterized regime $p\geq N$.
        From Assumption~\ref{A3} and Lemma~\ref{l:global_optimum}, for our choice of mean-squared error loss, we have that $\hH_S^f(\hvartheta_S)=0$.
        The denominator in the lower bound then becomes simply $B\lambda_{\min}\left(\hC_S^f(\hvartheta_S)\right)$.
        Wielding Lemma~\ref{l:smallest_eval} we note that $\lambda_{\min}\left(\hC_S^f(\hvartheta_S)\right)$ achieves a minimum at $p=N$, and hence it is an increasing function of $p$ for $p\geq N$.
        With this, we confirm that the lower bound of $\calL(\hvartheta_S)$ is decreasing in the overparameterized regime.

        In the underparameterized regime $p<N$, the local minimum $\hvartheta_S$ is not necessarily a global minimum of the empirical risk, and hence both summands in the denominator of the lower bound remain.
        In this case, we combine Assumption~\ref{A2} and Lemma~\ref{l:smallest_eval} to argue that, with $p<N$, both summands are decreasing functions of $p$.
        For $\lambda_{\max}\left(\hH_S^f(\hvartheta_S)\right)$, Assumption~\ref{A2} explicitly states this fact; and for $\lambda_{\min}\left(\hC_S^f(\vartheta_0)\right)$, it follows $\lambda_{\min}\left(\hC_S^f(\vartheta_0)\right)$ having a minimum at $p=N$.
        We thus confirm that the lower bound of $\calL(\hvartheta_S)$ is increasing in the underparameterized regime near interpolation.
        The last two paragraphs combined complete the proof: the lower bound of $\calL(\hvartheta_S)$ has a maximum at $p=N$.
    \end{proof}

\section{Upper bound}\label{app:upper}

    In this section we complement Theorems~\ref{thm:lower_bound_K1} and~\ref{thm:peakatinterpolation} with analogous results for an \emph{upper bound} on the expected risk.
    We use the same notation and language as in Appendices~\ref{a:expected_risk_local_minima} and~\ref{a:dd_behavior_in_PQCs}.

    \begin{theorem}[Upper bound on the expected risk]\label{thm:upper_bound_K1}
        Under the setting of Lemma~\ref{l:error_decomp}, under Assumption~\ref{A1} and for $K=1$, the expected risk $\calL(\hvartheta_S)$ at the local minimum $\hvartheta_S$ fulfills the following upper bound:
        \begin{align}
            \calL(\hvartheta_S) &\leq\calL'(S) + \frac{1}{N+1}\frac{\alpha\,\xi^2_{\max}\,R\,\lambda_{\max}(C_f(\hvartheta_S))}{\lambda_{\min}(\hH_S(\hvartheta_S))} + \calO\left(\frac{1}{N^2}\right).
        \end{align}
        Here, we define $\xi^2_{\max} = \max_{\substack{z\sim\calD \\ \xi_z^2>0}} \{\xi_z^2\}$, $R=\operatorname{rank}(\hH_S(\hvartheta_S))$, $\lambda_{\min}(\cdot)$ the smallest non-zero eigenvalue of its matrix argument, and analogously for the largest eigenvalue $\lambda_{\max}(\cdot)$.
    \end{theorem}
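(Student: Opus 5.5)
The plan mirrors the proof of Theorem~\ref{thm:lower_bound_K1}, with each inequality reversed. I would start from the error decomposition of Lemma~\ref{l:error_decomp}, which reduces the task to upper bounding the complexity term $\Tr\{[\hH_S(\hvartheta_S)]^{-1}C(\hvartheta_S)\}$. By Lemma~\ref{l:accompany_proof_bounds}, which holds under Assumption~\ref{A1} with $K=1$, this term equals
\begin{align}
\alpha\,\bbE_{\substack{z\sim\calD\\\xi_z^2>0}}\left[\xi_z^2\,\Tr\left\{\left[\hH_S(\hvartheta_S)\right]^{-1}\nabla_\vartheta f_{\hvartheta_S}(x)\nabla_\vartheta f_{\hvartheta_S}(x)^\intercal\right\}\right].
\end{align}

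The first step bounds the weight $\xi_z^2$ from above. Each trace $\Tr\{[\hH_S]^{-1}gg^\intercal\}=g^\intercal[\hH_S]^{-1}g$ is non-negative, because the pseudoinverse of the positive semi-definite Hessian at a local minimum is itself positive semi-definite. So I can replace $\xi_z^2$ by $\xi_{\max}^2$ and pull it out of the expectation. Linearity of trace and expectation then gives the upper bound $\alpha\,\xi_{\max}^2\Tr\{[\hH_S(\hvartheta_S)]^{-1}C_f(\hvartheta_S)\}$.

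The second step applies the right-hand inequality of Lemma~\ref{l:eval_trace_ineq}. The pseudoinverse and $C_f(\hvartheta_S)$ are both symmetric positive semi-definite. I would take $A=C_f(\hvartheta_S)$ and $B=[\hH_S(\hvartheta_S)]^{-1}$, which gives
\begin{align}
\Tr\{[\hH_S]^{-1}C_f\}\le\lambda_{\max}(C_f(\hvartheta_S))\,\Tr\{[\hH_S(\hvartheta_S)]^{-1}\}.
\end{align}
One subtlety: the lemma is stated with $\lambda_{\max}$ over all eigenvalues. Zero eigenvalues never exceed the largest one, so using the largest (non-zero) eigenvalue is consistent with the theorem's convention.

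The third step handles the trace of the pseudoinverse. It equals $\sum_{i=1}^R 1/\lambda_i(\hH_S(\hvartheta_S))$, a sum over the $R$ non-zero eigenvalues only. Each summand is at most $1/\lambda_{\min}(\hH_S(\hvartheta_S))$, so the sum is at most $R/\lambda_{\min}(\hH_S(\hvartheta_S))$. Substituting the three bounds back into Eq.~\eqref{eq:error_decomp}, and carrying the $\calO(N^{-2})$ remainder unchanged, gives the stated inequality.

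I expect the only genuine care point to be positivity. The Hessian at a local minimum is positive semi-definite by Definition~\ref{def:localminimum}, so its pseudoinverse has non-negative eigenvalues. This is needed both to justify the $\xi^2_{\max}$ replacement in the first step and to satisfy the hypotheses of Lemma~\ref{l:eval_trace_ineq} in the second. A smaller point is that $\xi_{\max}^2$ must be finite, i.e.\ the supremum over $z$ with $\xi_z^2>0$ exists. For mean-squared error with bounded PQC outputs, this holds whenever the labels are bounded, and otherwise I would simply assume it.
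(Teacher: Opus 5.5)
Your proposal is correct and follows the paper's proof essentially step for step: the error decomposition of Lemma~\ref{l:error_decomp} together with Lemma~\ref{l:accompany_proof_bounds}, then replacing $\xi_z^2$ by $\xi^2_{\max}$, then the upper half of Lemma~\ref{l:eval_trace_ineq}, and finally bounding the trace of the pseudoinverse by $R/\lambda_{\min}(\hH_S(\hvartheta_S))$. Your two care points make explicit what the paper leaves implicit: the positive semi-definiteness of the pseudoinverse, which justifies both the $\xi^2_{\max}$ replacement and the use of Lemma~\ref{l:eval_trace_ineq}, and the finiteness of $\xi^2_{\max}$.
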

    \begin{proof}
        The proof follows the steps of the proof of Theorem~\ref{thm:lower_bound_K1}, using the opposite direction of Lemma~\ref{l:eval_trace_ineq},
        \begin{align}
            \calL(\hvartheta_S) &=\calL'(S) + \frac{1}{N+1}\Tr\left\{\left[\hH_S(\hvartheta_S)\right]^{-1}C(\hvartheta_S)\right\}+\calO\left(\frac{1}{N^2}\right) \\
            &\leq \calL'(S) + \frac{1}{N+1}\left(\alpha\,\xi_{\max}^2 \Tr\left\{\left[\hH_S(\hvartheta_S)\right]^{-1} C_f(\hvartheta_S)\right\}\right)+\calO\left(\frac{1}{N^2}\right) \\
            &\leq \calL'(S) + \frac{\alpha\,\xi_{\max}^2}{N+1}\left( \lambda_{\max}(C_f(\hvartheta_S))\Tr\left\{\left[\hH_S(\hvartheta_S)\right]^{-1}\right\}\right)+\calO\left(\frac{1}{N^2}\right) \\
            &\leq \calL'(S) + \frac{\alpha\,\xi_{\max}^2\lambda_{\max}(C_f(\hvartheta_S))}{N+1}\left( \frac{R}{\lambda_{\min}(\hH_S(\hvartheta_S))}\right)+\calO\left(\frac{1}{N^2}\right) \\
            &=\calL'(S) + \frac{1}{N+1}\frac{\alpha\,\xi^2_{\max}\,R\,\lambda_{\max}(C_f(\hvartheta_S))}{\lambda_{\min}(\hH_S(\hvartheta_S))} + \calO\left(\frac{1}{N^2}\right).
        \end{align}
        The only qualitative difference in this expression is in the trace of the inverse of the Hessian.
        This time, we upper bound a sum of positive terms by the largest of the terms $(\lambda_{\min}(\hH_S(\hvartheta_S)))^{-1}$ times the number of terms $R$.
        This concludes the proof.
    \end{proof}

    \begin{customassumption}{2'}\label{A2'}
        In the underparametrized regime $p<NK$, the smallest eigenvalue $\lambda_{\min}(\hH_S^f(\hvartheta_S))$ of the functional Hessian at local minima decreases with the number of trainable parameters $p$ in a neighborhood of the interpolation threshold $p=NK$.
    \end{customassumption}

    \begin{customassumption}{4'}\label{A4'}        
        The largest eigenvalue of the uncentered covariance of the function gradients remains comparable to its value at initialization, uniformly in the number of parameters $p$, $\lambda_{\max}(C_f(\hvartheta_S))\asymp\lambda_{\max}(C_f(\vartheta_0))$.
        This relation holds with high probability with respect to the random draw of the training set $S$, the initialization $\hvartheta_0$, and the local minimum $\hvartheta_S$ returned by the training procedure.
    \end{customassumption}

    \begin{theorem}[Upper bound peak at interpolation]\label{thm:upper_bound_peak}
        Let $S =\{(x_i, y_i)\}_{i=1}^N$ be a dataset with $d$-dimensional Gaussian normal inputs $x_i \sim \mathcal{N}(0, \mathbb{I}_d)$ and outputs $y_i \in \mathbb{R}$.
        Let $f_\theta : \mathbb{R}^d \to \mathbb{R}$ be a parameterized quantum circuit as in Eq.~\eqref{eq:PQCoutput} for $K=1$, consisting of $p$ parameters.
        Under Assumptions~\ref{A1}, \ref{A2'}, \ref{A3}, \ref{A4}, \ref{A4'}, and \ref{A5}--\ref{A7}, and considering the mean-squared error loss function, the upper bound on $\calL(\hvartheta_S)$ at a local minimum from Theorem~\ref{thm:upper_bound_K1} attains a maximum at $p=N$ with high probability in the limit $N,p\to\infty$ with fixed ratio $p/N\to\gamma$.
    \end{theorem}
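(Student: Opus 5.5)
The plan mirrors the proof of Theorem~\ref{thm:peakatinterpolation}. I start from the upper bound of Theorem~\ref{thm:upper_bound_K1} and isolate its $p$-dependent complexity term
\begin{align}
    \frac{1}{N+1}\frac{\alpha\,\xi^2_{\max}\,R\,\lambda_{\max}(C_f(\hvartheta_S))}{\lambda_{\min}(\hH_S(\hvartheta_S))}.
\end{align}
Assumption~\ref{A7} lets me treat $\calL'(S)$ as non-dominant near interpolation.

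\emph{Numerator.} By Assumption~\ref{A4'}, $\lambda_{\max}(C_f(\hvartheta_S))\asymp\lambda_{\max}(C_f(\vartheta_0))$, and by Assumption~\ref{A5} the latter is bounded between constants independent of $p$. The numerator therefore varies with $p$ only through the rank $R$. Near interpolation I would argue that $R$ is at most of order $\min(p,N)$, or at least does not vary fast enough to shift the peak.

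\emph{Denominator.} Here I use the lower side of Assumption~\ref{A6}, $\lambda_{\min}(\hH_S)\geq\lambda_{\min}(\hH_S^o)+\lambda_{\min}(\hH_S^f)$. For mean-squared error with $K=1$ we have $\hH_S^o=\hC_S^f$. By Assumption~\ref{A4} this gives $\lambda_{\min}(\hH_S)\geq A\lambda_{\min}(\hC_S^f(\vartheta_0))+\lambda_{\min}(\hH_S^f(\hvartheta_S))$, so the upper bound is at most
\begin{align}
    \calL'(S)+\frac{1}{N+1}\frac{\alpha\,\xi^2_{\max}\,R\,B''\lambda_{\max}(C_f(\vartheta_0))}{A\lambda_{\min}(\hC_S^f(\vartheta_0))+\lambda_{\min}(\hH_S^f(\hvartheta_S))}.
\end{align}
I then split into the two regimes.

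\emph{Overparameterized regime, $p\geq N$.} Assumption~\ref{A3} together with Lemma~\ref{l:global_optimum} gives zero residuals, so $\hH_S^f(\hvartheta_S)=0$. The denominator reduces to $A\lambda_{\min}(\hC_S^f(\vartheta_0))$. By Lemma~\ref{l:smallest_eval} and Corollary~\ref{cor:smallest_eval_whitened} this behaves like $(1-\sqrt{p/N})^2$ up to constants, which vanishes at $p=N$ and increases for $p>N$. In this regime $R\le N$ is fixed, so the bound decreases in $p$.

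\emph{Underparameterized regime, $p<N$.} Assumption~\ref{A2'} says $\lambda_{\min}(\hH_S^f)$ decreases as $p\uparrow N$, and Lemma~\ref{l:smallest_eval} says $\lambda_{\min}(\hC_S^f(\vartheta_0))$ decreases toward $0$. The denominator therefore decreases and the bound increases. Combining the two regimes gives a maximum at $p=N$ with high probability. The high-probability statement is inherited from Assumption~\ref{A4} and the almost-sure convergence in Lemma~\ref{l:MP_whitened}.

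\emph{Expected obstacle.} The main difficulty is the denominator lower bound in the underparameterized regime. First, $\lambda_{\min}(\hH_S^f)$ can be negative, since the functional Hessian is indefinite, so the denominator could vanish before $p=N$. I would handle this by using positive semidefiniteness of $\hH_S$ at a local minimum: restricted to its support, the sum of the two terms is positive. Second, the factor $R$ grows with $p$ for $p<N$, which only reinforces the increase. What needs care is checking that the factor $R/(N+1)$ does not cancel the divergence of $1/\lambda_{\min}$ at $p=N$. Since $R/(N+1)\to\gamma$ stays bounded in the proportional limit while $1/\lambda_{\min}\to\infty$, it does not.
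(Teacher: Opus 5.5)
Your proposal is correct and follows essentially the same route as the paper's proof: the lower side of Assumption~\ref{A6} with $\hH_S^o=\hC_S^f$, Assumptions~\ref{A4} and~\ref{A4'} to pass to initialization, $\hH_S^f(\hvartheta_S)=0$ for $p\geq N$ via Lemma~\ref{l:global_optimum}, and Assumption~\ref{A2'} with Lemma~\ref{l:smallest_eval} for $p<N$; your explicit tracking of the rank factor $R$ (equal to $N$ for $p\geq N$, at most $p$ below) is a sound addition the paper leaves implicit. One caveat on your ``obstacle'' aside: positive semidefiniteness of $\hH_S(\hvartheta_S)$ does not make $\lambda_{\min}(\hC_S^f)+\lambda_{\min}(\hH_S^f)$ positive, since the two extreme eigenvalues can sit in different directions (e.g.\ $\mathrm{diag}(1,10)+\mathrm{diag}(5,-5)\succeq 0$ although $1-5<0$), so positivity of that denominator has to be taken as implicit in Assumptions~\ref{A6} and~\ref{A2'}, exactly as the paper tacitly does.
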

    \begin{proof}
        The proof follows the steps of the proof of Theorem~\ref{thm:peakatinterpolation}.
        We start from the upper bound in Theorem~\ref{thm:upper_bound_K1}, which is
        \begin{align}
            \calL(\hvartheta_S) &\leq\calL'(S) + \frac{1}{N+1}\frac{\alpha\,\xi^2_{\max}\,R\,\lambda_{\max}(C_f(\hvartheta_S))}{\lambda_{\min}(\hH_S(\hvartheta_S))} + \calO\left(\frac{1}{N^2}\right).
        \end{align}
        We first turn our attention to the decomposition of $\hH_S(\hvartheta_S)$ from Lemma~\ref{l:decomp_Hessian}: $\hH_S(\hvartheta_S) = \hH_S^o(\hvartheta_S)+\hH_S^f(\hvartheta_S)$.
        Then, we apply the other direction of the inequality in Assumption~\ref{A6}, which is
        \begin{equation}\lambda_{\min}\left(\hH_S(\hvartheta_S)\right) \geq \lambda_{\min}\left(\hH_S^o(\hvartheta_S)\right)+\lambda_{\min}\left(\hH_S^f(\hvartheta_S)\right).
        \end{equation}
        Again, specializing to the mean-squared error loss simplifies the expression for the outer-product Hessian, we have $\hH_S^o(\vartheta) = \hC_S^f(\vartheta)$.
        Substituting both this and the inequality from Assumption~\ref{A6} into the upper bound, we obtain
        \begin{align}
            \calL(\hvartheta_S) &\leq \calL'(S) + \frac{1}{N+1}\frac{\alpha\xi_{\max}^2 R\lambda_{\max}(C_f(\hvartheta_S))}{\lambda_{\min}\left(\hC_S^f(\hvartheta_S)\right)+\lambda_{\min}\left(\hH_S^f(\hvartheta_S)\right)} + \calO\left(N^{-2}\right).
        \end{align}
        With Assumptions~\ref{A4} and~\ref{A4'}, we can further replace the $\hvartheta_S$ dependence by the value of the parameters at initialization $\vartheta_0$, for some constants $A,B>0$,
        \begin{align}
            \calL(\hvartheta_S) &\leq \calL'(S) + \frac{1}{N+1}\frac{\alpha\xi_{\max}^2 RB\lambda_{\max}(C_f(\vartheta_0))}{A\lambda_{\min}\left(\hC_S^f(\vartheta_0)\right)+\lambda_{\min}\left(\hH_S^f(\hvartheta_S)\right)} + \calO\left(N^{-2}\right).
        \end{align}
        
        Let us now consider the overparameterized regime $p\geq N$.
        From Assumption~\ref{A3} and Lemma~\ref{l:global_optimum}, for our choice of mean-squared error loss, we have that $\hH_S^f(\hvartheta_S)=0$.
        The denominator in the lower bound then becomes simply $A\lambda_{\min}\left(\hC_S^f(\hvartheta_S)\right)$.
        Wielding Lemma~\ref{l:smallest_eval} we note that $\lambda_{\min}\left(\hC_S^f(\hvartheta_S)\right)$ achieves a minimum at $p=N$, and hence it is an increasing function of $p$ for $p\geq N$.
        With this, we confirm that the upper bound of $\calL(\hvartheta_S)$ is decreasing in the overparameterized regime.

        In the underparameterized regime $p<N$, the local minimum $\hvartheta_S$ is not necessarily a global minimum of the empirical risk, and hence both summands in the denominator of the upper bound remain.
        In this case, we combine Assumption~\ref{A2'} and Lemma~\ref{l:smallest_eval} to argue that, with $p<N$, both summands are decreasing functions of $p$.
        For $\lambda_{\min}\left(\hH_S^f(\hvartheta_S)\right)$, Assumption~\ref{A2'} explicitly states this fact; and for $\lambda_{\min}\left(\hC_S^f(\vartheta_0)\right)$, it follows from $\lambda_{\min}\left(\hC_S^f(\vartheta_0)\right)$ having a minimum at $p=N$.
        We thus confirm that the upper bound of $\calL(\hvartheta_S)$ is increasing in the underparameterized regime near interpolation. 
        The last two paragraphs combined complete the proof: the upper bound of $\calL(\hvartheta_S)$ has a maximum at $p=N$.
    \end{proof}

\section{Further empirical evidence of double descent in PQCs}\label{a:evidence}

    \begin{figure*}[t]
            \centering
            \includegraphics[width=0.97\textwidth]{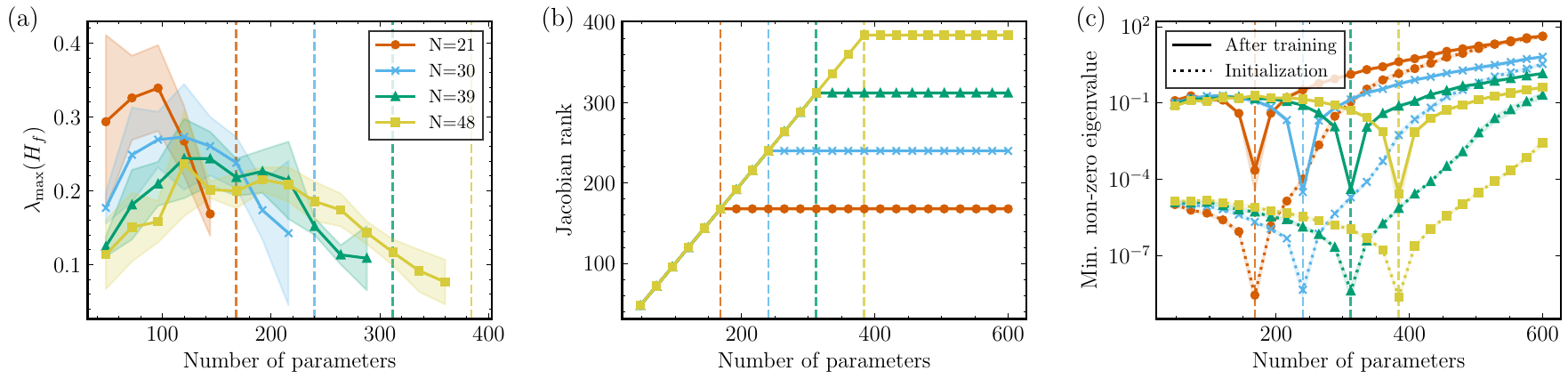} 
            \caption{
                Numerical verification of the assumptions underlying the double descent analysis for the MNIST-1D dataset, for different training set sizes $N$. Vertical dashed 
                lines indicate the predicted interpolation thresholds $p=NK$.
                (a) Largest eigenvalue of the functional Hessian $\hH_S^f(\hvartheta_S)$ after training.
                In the underparameterized regime $p<NK$, $\lambda_{\max}(\hH_S^f(\hvartheta_S))$ decreases as $p$ approaches interpolation, supporting Assumption~\ref{A2}.
                (b) Rank of the sample Jacobian of the function $Z_S(\hvartheta_S)$ after training.
                For $p\geq NK$, the rank saturates at $NK$, supporting Assumption~\ref{A3}.
                (c) Smallest non-zero eigenvalue of the uncentered sample covariance matrix of function gradients $\hC_S^f$ at initialization and after training, supporting Assumption~\ref{A4}.
                The shaded areas correspond to the standard deviation for ten independent experiment repetitions, each using independently sampled
                training data.
            }
            \label{fig:dd_assumptions}
    \end{figure*}
    
    Complimenting our main numerical findings reported in Fig.~\ref{fig:dd_main}, we now turn our attention to a selection of the assumptions we introduced for Theorem~\ref{thm:peakatinterpolation} in Appendix~\ref{a:dd_behavior_in_PQCs}.
    We present our results in Fig.~\ref{fig:dd_assumptions}, where we probe further the mechanism behind the double descent peak.
    Here, we focus on the MNIST-1D dataset.
    By systematically testing for different assumptions, we strengthen the evidence that the peak at interpolation in Fig.~\ref{fig:dd_main} is accurately captured by our main analytical contribution, Theorem~\ref{thm:peakatinterpolation}.
    \begin{itemize}
        \item In Fig.~\ref{fig:dd_assumptions}(a) we test Assumption~\ref{A2}: in the underparameterized regime $p<NK$, the largest eigenvalue of the functional Hessian, $\lambda_{\max}(\hH_S^f(\hvartheta_S))$, decreases as the number of parameters approaches the interpolation threshold.
        This indicates that the contribution of the functional Hessian becomes less dominant near interpolation, as required in the proof of the double descent peak.
        \item In Fig.~\ref{fig:dd_assumptions}(b) we test Assumption~\ref{A3}: we show that, once $p\geq NK$, the rank of the sample Jacobian of the function $Z_S(\hvartheta_S)$ saturates at $NK$ for all training-set sizes considered.
        Thus, in the overparameterized regime, the Jacobian has full rank with respect to the $NK$ scalar training constraints.
        \item In Fig.~\ref{fig:dd_assumptions}(c) we test Assumption~\ref{A4}: we compare the smallest non-zero eigenvalue of the uncentered sample covariance matrix of function gradients $\hC_S^f$ at initialization and after training, displaying a dip near the interpolation threshold.
    \end{itemize}
    
\section{Further details on the numerical experiments}\label{app:numerics}

    In this section, we provide further implementation details for the numerical experiments presented in Figs.~\ref{fig:dd_main} and~\ref{fig:dd_assumptions}. All quantum circuit simulations were performed with the \texttt{PennyLane} software library~\cite{bergholm2018pennylane}, and all experiments use $n=8$ qubits and $K=8$ output dimensions. The circuit consists of an initial angle embedding of the classical input, followed by $L$ data re-uploading layers.
    Each layer contains trainable single-qubit rotations $R_X$, $R_Z$, and $R_Y$ on every qubit, followed by nearest-neighbor $CZ$ gates arranged in a ring. The input is then re-uploaded before the next trainable layer.
    The model output is obtained by measuring Pauli-$X$ expectation values on the first $K$ qubits and multiplying the resulting vector by the fixed scale factor $c=150$.
    Since each layer contains three trainable parameters per qubit, the total number of trainable parameters is
        $p = 3nL$.
    In the experiments, the depth is varied as $L=2,\ldots,25$, corresponding to $p=48,\ldots,600$ trainable parameters. All models are trained with the mean squared error loss using the Adam optimizer, and the training is performed for $2500$ epochs.
    
    For the MNIST-1D and Fashion MNIST classification tasks, we use $8$ classes.
    The labels are encoded as one-hot vectors and subsequently mapped from $\{0,1\}$ to $\{-1,1\}$, so that the correct class has value $+1$ and all other classes have value $-1$.
    This allows the classification tasks to be trained using the same mean squared error objective as in the theoretical analysis.
    Classical inputs are reduced to dimension $d=n=8$ using principal component analysis, implemented with \texttt{scikit-learn}~\cite{pedregosa2011scikit}, and then rescaled component-wise to the interval $[-\pi/2,\pi/2]$ before being passed to the angle embedding.
    For the synthetic regression task, inputs are sampled uniformly from $[-\pi/2,\pi/2]^8$, and targets are generated from a multidimensional linear model $y = X W + \zeta$, where $W$ is the all-ones matrix in $\mathbb{R}^{8\times 8}$ and $\zeta$ is Gaussian noise with standard deviation $0.5$.
    
    To scan across the interpolation threshold, we vary the number of training samples as
        $N \in \{21,30,39,48\}$.
    Since $K=8$, the corresponding interpolation thresholds are $p=NK\in\{168,240,312,384\}$.
    For each value of $N$ and each circuit depth, the test loss is evaluated on $1000$ test samples.
    These values are chosen such that the interpolation thresholds coincide exactly with parameter counts attainable by varying the number of data re-uploading layers.
    The results are averaged over $10$ independent repetitions, and shaded regions in the figures indicate the standard deviation over these repetitions. The code is publicly available in Ref.~\cite{github_repository}.

\section{Universality classes of double descent}\label{app:universality}

The proof of Theorem \ref{thm:peakatinterpolation}  establishes the double descent peak by relating the smallest non-zero eigenvalue of the empirical covariance matrix of function gradients to the soft edge of the Mar\v{c}enko–Pastur law. This naturally raises the question of how much of the phenomenon depends on the specific Wishart ensemble arising from independent Gaussian data, and how much is instead a manifestation of a broader random-matrix universality principle.

Viewed abstractly, the proof relies on only two ingredients. First, the smallest non-zero singular value of the empirical Jacobian develops a soft edge at the interpolation threshold. Second, the associated singular vectors are sufficiently delocalized so that generic data directions overlap with them in a statistically uniform way. The specific Gaussian assumptions merely provide one convenient realization of these properties.

A natural question is to what extent the present analysis depends
on the specific Wishart ensemble underlying the
Mar\v{c}enko--Pastur law.
From the perspective of the proof, the essential ingredient is not
Gaussianity itself but rather the emergence of a soft spectral edge
governing the smallest non-zero singular value of the empirical
covariance matrix.
This suggests that analogous double descent behavior may persist
for considerably broader classes of covariance ensembles.
Prominent examples include correlated Wishart ensembles,
random kernel matrices, and deformed covariance ensembles
\cite{DoubleDescent3,DoubleDescent4,DoubleDescent5}.
More generally, nonlinear kernel models whose limiting spectral
measures differ from the classical Mar\v{c}enko--Pastur law may
provide a mathematically natural framework for extending the
present analysis beyond linearized parameterizations
\cite{DoubleDescent2,DoubleDescent4}.
These observations suggest that the mechanism underlying
Theorem~\ref{thm:peakatinterpolation} should be understood as a manifestation of a broader
universality class rather than of a single random matrix model.

\begin{conjecture}[Universality of double descent]
Consider the setting of Theorem~\ref{thm:peakatinterpolation}, replacing the Gaussian Wishart covariance model by a random covariance ensemble whose smallest non-zero singular value exhibits universal soft-edge behavior and whose singular vectors remain asymptotically delocalized. Then the lower bound on the expected risk exhibits an interpolation peak at the corresponding critical aspect ratio, giving rise to double descent behavior.
\end{conjecture}

Establishing the precise universality class governing double descent in parameterized quantum circuits, and identifying which structural properties of the underlying random matrix ensemble are genuinely required, constitute interesting open problems.

\end{document}